\tikzset{snake it/.style={decorate, decoration=snake}}
\edef\restoreparindent{\parindent=\the\parindent\relax}
\newtheoremstyle{break}
  {\topsep}{\topsep}%
  {\upshape}{}%
  {\bfseries}{}%
  {\newline}{}%
\theoremstyle{break}
\newtheorem{theorem}{Theorem}[section]
\newtheorem{proposition}{Proposition}[section]
\newtheorem{corollary}{Corollary}[section]
\def\Tr{{\rm Tr}}
\def\d{{\rm d}}
\def\i{{\rm i}}
\def\CB{{\cal B}}
\def\CC{{\cal C}}
\def\CF{{\cal F}}
\def\CH{{\cal H}}
\def\CM{{\cal M}}
\def\CN{{\cal N}}
\def\CO{{\cal O}}
\def\BC{\mathbb{C}}
\def\BF{\mathbb{F}}
\def\BR{\mathbb{R}}
\def\BZ{\mathbb{Z}}
\def\Ba{\mathbf{a}}
\def\Bb{\mathbf{b}}
\def\Bn{\mathbf{n}}
\def\Bz{\mathbf{z}}
\def\Balpha{\boldsymbol{\alpha}}
\def\Bbeta{\boldsymbol{\beta}}
\def\BOmega{\boldsymbol{{\Omega}}}
\def\Bzero{\boldsymbol{{0}}}
\def\Bone{\mathbf{1}}
\def\Bdelta{\boldsymbol{{\delta}}}
\def\BDelta{\boldsymbol{{\Delta}}}
\def\SH{\mathsf{H}}
\def\SW{\mathsf{W}}
\def\d{\mathrm{d}}
\def\U{\mathrm{U}}
\def\U{\mathrm{U}}
\title{
Narain CFTs from quantum codes and their $\BZ_2$ gauging}
\author[a,b]{Kohki Kawabata,}
\author[b]{Tatsuma Nishioka}
\author[c]{and Takuya Okuda}
\affiliation[a]{Department of Physics, Faculty of Science,
The University of Tokyo,\\
Bunkyo-Ku, Tokyo 113-0033, Japan}
\affiliation[b]{Department of Physics, Osaka University,\\
Machikaneyama-Cho 1-1, Toyonaka 560-0043, Japan}
\affiliation[c]{Graduate School of Arts and Sciences, The University of Tokyo, Komaba,\\
Meguro-ku, Tokyo 153-8902, Japan}
\preprint{OU-HET-1196, UT-Komaba/23-10}
\abstract{
We investigate the gauging of a $\BZ_2$ symmetry in Narain conformal field theories (CFTs) constructed from qudit stabilizer codes. 
Considering both orbifold and fermionization, we establish a connection between $\BZ_2$ gauging procedures and modifications of the momentum lattice by vectors characterizing the $\BZ_2$ symmetry. 
We also provide three-dimensional interpretations of $\BZ_2$ gaugings through abelian Chern-Simons theories, which act as symmetry topological field theories. 
}
\begin{document} 
\maketitle
\flushbottom

\newpage

\section{Introduction}

The theory of error-correcting codes has played unexpectedly fruitful roles in mathematics and theoretical physics despite its initial development as a method of information communication.
One of the most significant outcomes is the construction of two-dimensional chiral conformal field theories (CFTs) from classical error-correcting codes through Euclidean lattices~\cite{frenkel1984natural,frenkel1989vertex,Dolan:1994st,Gaiotto:2018ypj,Kawabata:2023nlt}.
Such a class of chiral CFTs has played an important role in understanding the monstrous moonshine~\cite{conway1979monstrous} and the classification and construction of chiral CFTs with fixed central charges~\cite{Lam201971HV}.
Besides error-correcting codes, a key ingredient in such developments is the gauging procedure~\cite{Dolan:1989vr,Dolan:1994st,Dixon:1988qd}.
The orbifold and fermionization implement the gauging of a discrete symmetry in an original bosonic theory and constitute new consistent bosonic and fermionic theories.
By using the orbifolding technique, the Monster CFT~\cite{frenkel1984natural} has been obtained from the chiral CFT based on the Leech lattice closely related to the extended Golay code~\cite{conway2013sphere}.
Also, fermionization has demonstrated its utility in the construction of fermionic analogs of the Monster CFT~\cite{Dixon:1988qd} and found applications in searching for supersymmetric CFTs with large discrete symmetries \cite{Benjamin:2015ria,Harrison:2016hbq}.
Additionally, many entries in the Schellekens list~\cite{Schellekens:1992db},\footnote{See also \cite{Hohn:2017dsm,Hohn:2020xfe,Moller:2019tlx,vanEkeren:2020rnz} for recent developments.} which dictates the potential chiral CFTs with central charge $24$, have been constructed from classical error-correcting codes with and without orbifolding \cite{Dolan:1989vr,Dolan:1994st}.

Inspired by the construction of chiral CFTs from classical codes, it has recently turned out that binary quantum error-correcting codes naturally lead to a certain type of non-chiral bosonic CFTs~\cite{Dymarsky:2020qom,Dymarsky:2020bps}, called Narain CFTs~\cite{Narain:1985jj,Narain:1986am}.
Narain CFTs are characterized by Lorentzian lattices while the chiral CFTs are by Euclidean lattices.
The construction of Narain CFTs exploits the relationship between quantum codes, classical codes, Lorentzian lattices, and CFTs.
This procedure yields a discrete subset of Narain CFTs named Narain code CFTs, and they found various applications such as the realization of CFTs with large spectral gaps~\cite{Furuta:2022ykh,Angelinos:2022umf}, the validation of holography~\cite{Dymarsky:2020pzc}, and the search for solutions of the modular bootstrap~\cite{Dymarsky:2020bps,Henriksson:2022dnu,Dymarsky:2022kwb}. (See also~\cite{Dymarsky:2021xfc,Buican:2021uyp,Henriksson:2021qkt,Henriksson:2022dml,Yahagi:2022idq,Kawabata:2023nlt,Furuta:2023xwl} for other related progress.)
A similar construction also works in non-binary quantum codes based on finite fields $\BF_p$ of prime order $p$~\cite{Kawabata:2022jxt}, and on more general finite fields and finite rings~\cite{Alam:2023qac}.

In this paper, we investigate the gauging of a $\mathbb{Z}_2$ symmetry in Narain code CFTs.
In a modern perspective, orbifold and fermionization are different gauging procedures of the same $\BZ_2$ symmetry in a bosonic CFT, which yields another bosonic CFT and fermionic CFT, respectively~\cite{Tachikawalec,Karch:2019lnn,Hsieh:2020uwb,Kulp:2020iet}.
The basic strategy of the present paper is to use the modern description of the orbifold and fermionization to describe $\BZ_2$ gaugings of Narain CFTs as modifications of the momentum lattices.
Fermionization of Narain code CFTs has been explored in the recent paper~\cite{Kawabata:2023usr} with an aim to search for fermionic CFTs with supersymmetry systematically by using quantum stabilizer codes.
On the other hand, this paper aims to provide a unified perspective on the orbifold and fermionization in Narain code CFTs.

From the modern viewpoint, we establish the relationship between the $\BZ_2$ gaugings and modifications of the momentum lattice in Narain CFTs.
For a bosonic CFT with a $\BZ_2$ symmetry, one can extend the Hilbert space (untwisted sector) by adding another Hilbert space (twisted sector).
This leads to the decomposition of the Hilbert spaces into four subsectors graded by their charges under the $\BZ_2$ symmetry and whether they are twisted or not.
Then, the orbifold and fermionization can be regarded as the interchange of the subsectors in the Hilbert spaces as shown in Table~\ref{table:gauging}.
For Narain CFTs, one can take a certain type of $\BZ_2$ symmetry associated with a vector of the momentum lattice.
This choice boils down the construction of the twisted sector to a half shift of the momentum lattice by the corresponding vector. 
With this shifted lattice, the $\BZ_2$ gaugings of Narain CFTs can be realized as the modifications of the momentum lattice as summarized in Table~\ref{table:gauging_deforamtion}. 
With this general relationship between the $\BZ_2$ gaugings and the lattice modifications, we explore the orbifold and fermionization in Narain code CFTs based on finite fields of prime order.
By choosing a $\BZ_2$ symmetry that exists for any Narain code CFT, we compute the partition functions of the orbifolded and fermionized theories.
These partition functions are expressed in terms of the measure of spectrum in classical codes called the complete weight enumerator.
We illustrate the procedure with several examples including a quantum code that realizes an $\CN=4$ superconformal model~\cite{Gaberdiel:2013psa} by fermionization~\cite{Kawabata:2023usr}.

We give three-dimensional interpretations of Narain code CFTs and their $\BZ_2$ gauging in terms of abelian Chern-Simons theories.
We use the characterization of abelian Chern-Simons theories by lattices~\cite{Belov:2005ze} to relate them to Narain code CFTs.
Some of the three-dimensional bulk theories are invariant under the $\mathbb{Z}_2$ gauging on 2d CFTs, and the gauging only affects the topological boundary conditions, which suggests that the Chern-Simons theories play the role of symmetry topological field theories~\cite{Gaiotto:2020iye,Apruzzi:2021nmk,Freed:2022qnc} of Narain code CFTs.
In addition, with a special focus on an ensemble of quantum codes of the Calderbank-Shor-Steane (CSS) type~\cite{calderbank1996good,Steane:1996va}, we compute the averaged partition functions in the orbifold and fermionization of Narain code CFTs and discuss their spectrum in the large central charge limit.

This paper is organized as follows.
In section~\ref{sec:gauging-Z2-bosonic}, we start with a review of the $\mathbb{Z}_2$ gauging of bosonic CFTs with emphasis on the unification of the orbifold and fermionization.
We introduce the relationship between the $\BZ_2$ gaugings and the lattice modifications using a compact free boson theory as the simplest example.
Then, we proceed to the general formulation of the lattice modifications valid for any Narain CFT.
In section~\ref{sec:gauging_code_CFT}, we are focused on the $\mathbb{Z}_2$-gauging of Narain code CFTs and give a systematic way of computing the partition functions of their orbifold and fermionization.
Section~\ref{sec:examples} illustrates our general results with several examples containing a quantum code known to realize an $\CN=4$ supersymmetry.
In section~\ref{sec:chern-simons}, we give three-dimensional interpretations to the original, orbifolded, and fermionized Narain code CFTs.
We see that the stabilizer codes specify topological boundary conditions of the corresponding Chern-Simons theories.
In section~\ref{sec:ensemble}, we compute the averaged partition functions of the orbifolded and fermionized Narain code CFTs.
We conclude the paper with discussions in section~\ref{sec:discussion}.

\section{Gauging $\BZ_2$ symmetry of bosonic CFTs}
\label{sec:gauging-Z2-bosonic}

This section starts with a review of the gauging procedure with a global $\BZ_2$ symmetry in two-dimensional bosonic CFTs from a modern viewpoint. 
In section~\ref{ss:gauging_modification}, we consider $\BZ_2$ gauging of a compact free boson. 
This is the simplest example that allows us to describe the gauging of a $\BZ_2$ symmetry as a modification of the momentum lattice.
Section~\ref{ss:general_modification} is devoted to the general formulation of lattice modification, which leads to orbifold and fermionization of Narain CFTs by a certain $\BZ_2$ symmetry.

\subsection{Orbifold and fermionization}

For a 2d bosonic CFT with a global symmetry, we can construct a new bosonic CFT by gauging the symmetry of the original theory.
This procedure is called orbifold \cite{Dixon:1985jw,Dixon:1986jc,Dixon:1986qv} and has been studied with particular importance as it provides a consistent theory from the original one.
On the other hand, fermionization has taken much attention as a specific technique in 2d theories \cite{Coleman:1974bu,Mandelstam:1975hb} and was recently given by a modern description in analogy with the orbifold.
In this section, focusing on a global $\BZ_2$ symmetry, we review orbifold and fermionization in a parallel manner by following \cite{Ji:2019ugf,Hsieh:2020uwb,Kulp:2020iet}.

Consider a bosonic CFT $\CB$ with central charge $c$ and a non-anomalous global $\BZ_2$ symmetry $\sigma$.
The Hilbert space $\CH$ can be decomposed to the even and odd subsectors under the $\BZ_2$ symmetry: 
\begin{align}
    \CH = \CH^+\oplus \CH^- \ ,
\end{align}
where $\CH^\pm = \{\ket{\phi}\in\CH\mid \sigma\ket{\phi}=\pm\ket{\phi}\}$. 
This Hilbert space $\CH$ is called the untwisted sector.
To define the twisted sector, let us place the theory on a cylinder $S^1\times\BR$ where $S^1$ is the spacial circle and $\BR$ is the time direction, respectively. 
In the $\sigma$-twisted Hilbert space $\CH_\sigma$, any operator $\phi$ obeys the twisted boundary condition,
\begin{align}
   \CH_\sigma: \quad \phi(x + 2\pi) = \sigma\cdot \phi(x) \ ,
\end{align}
along the circle direction parametrized by $x\in S^1$ with $x\sim x+2\pi$.
The twisted sector also decomposes to the $\BZ_2$ even and odd subsectors:
\begin{align}
    \CH_\sigma = \CH_\sigma^+\oplus \CH_\sigma^- \ .
\end{align}

We have assumed that a global $\BZ_2$ symmetry $\sigma$ is non-anomalous.
Otherwise, the resulting gauged theories would be inconsistent and, for example, such a theory is not invariant under the modular transformation.
Let us consider the spin $s$ of an operator in the twisted Hilbert space $\CH_\sigma$.
Then, the spin selection rule in the twisted sector is given by (see \cite{Lin:2019kpn})
\begin{align}
    s\in
    \begin{dcases}
    \,\frac{\BZ}{2}, & (\text{non-anomalous})\ ,\\
    \,\frac{1}{4}+\frac{\BZ}{2}, & (\text{anomalous})\ .
    \end{dcases}
    \label{eq:spin_selection}
\end{align}
In the non-anomalous (anomalous) case, the spin of an operator in
$\CH_\sigma$ is $s\in\frac{\BZ}{2}$ ($s\in\frac{1}{4}+\frac{\BZ}{2}$).
Conversely, we can diagnose the anomaly of a global $\BZ_2$ symmetry by the spin selection rule: if the twisted sector $\CH_\sigma$ consists of operators with the spin $s\in\frac{\alpha}{4}+\frac{\BZ}{2}$ ($\alpha\in\BZ_2$), the $\BZ_2$ symmetry $\sigma$ is non-anomalous $\alpha=0$ (anomalous $\alpha=1$).

Let us place the bosonic theory $\CB$ on a torus with the modulus $\tau=\tau_1+\i\,\tau_2$.
The torus has two independent cycles and we specify them by
\begin{align}
    \mathrm{timelike:}\; w\sim w+2\pi\tau\,,\qquad \mathrm{spacial:}\; w\sim w+2\pi\,,
\end{align}
where $w$ is the cylindrical coordinate.
For each cycle of the torus, we can choose the periodicity condition $(a,b)\in\BZ_2\times\BZ_2$:
\begin{align}
\begin{aligned}
    \phi(w+2\pi\tau) = \sigma^a\cdot \phi(w)\,,\qquad \phi(w+2\pi) = \sigma^b\cdot \phi(w)\,.
\end{aligned}
\end{align}
While the spacial periodicity $b=0$ ($b=1$) specifies the untwisted (twisted) Hilbert space, the timelike periodicity $a\in\BZ_2$ imposes $a$ times insertion of the $\BZ_2$ symmetry operator $\sigma$ along the spacial direction. 
Then, we have four partition functions depending on the periodicity
\begin{align}
    \begin{aligned}
        Z &= \Tr_{\CH}\left[ q^{L_0 - \frac{c}{24}}\,\bar q^{\bar L_0 - \frac{c}{24}}\right] \ ,&\qquad\;
        Z^\sigma &= \Tr_{\CH}\left[\sigma\, q^{L_0 - \frac{c}{24}}\,\bar q^{\bar L_0 - \frac{c}{24}}\right] \ ,\\
        Z_\sigma &= \Tr_{\CH_\sigma}\left[q^{L_0 - \frac{c}{24}}\,\bar q^{\bar L_0 - \frac{c}{24}}\right] \ ,&\qquad
        Z_\sigma^\sigma &= \Tr_{\CH_\sigma}\left[\sigma\, q^{L_0 - \frac{c}{24}}\,\bar q^{\bar L_0 - \frac{c}{24}}\right] \ .
    \end{aligned}
\end{align}
In relation to the $\BZ_2$ gauging, it is convenient to introduce another set of partition functions associated to the four sectors shown in the top left panel of Table \ref{table:gauging}:
\begin{align}
    \begin{aligned}
        S &=\!\!\!\! & \Tr_{\CH^+}\left[ q^{L_0 - \frac{c}{24}}\,\bar q^{\bar L_0 - \frac{c}{24}}\right]  &= \Tr_{\CH}\left[ \frac{1+ \sigma}{2}\,q^{L_0 - \frac{c}{24}}\,\bar q^{\bar L_0 - \frac{c}{24}}\right] \ ,\\
        T &=\!\!\!\! & \Tr_{\CH^-}\left[ q^{L_0 - \frac{c}{24}}\,\bar q^{\bar L_0 - \frac{c}{24}}\right]  &= \Tr_{\CH}\left[\frac{1-\sigma}{2}\, q^{L_0 - \frac{c}{24}}\,\bar q^{\bar L_0 - \frac{c}{24}}\right] \ ,\\
        U &=\!\!\!\! & \Tr_{\CH_\sigma^+}\left[ q^{L_0 - \frac{c}{24}}\,\bar q^{\bar L_0 - \frac{c}{24}}\right]  &=  \Tr_{\CH_\sigma}\left[\frac{1+\sigma}{2}\,q^{L_0 - \frac{c}{24}}\,\bar q^{\bar L_0 - \frac{c}{24}}\right] \ ,\\
        V &=\!\!\!\! & \Tr_{\CH_\sigma^-}\left[ q^{L_0 - \frac{c}{24}}\,\bar q^{\bar L_0 - \frac{c}{24}}\right]  &= \Tr_{\CH_\sigma}\left[\frac{1-\sigma}{2}\, q^{L_0 - \frac{c}{24}}\,\bar q^{\bar L_0 - \frac{c}{24}}\right] \ .
    \end{aligned}
\end{align}
Here, we denote the four partition functions by $S,T,U,V$.
In the top left panel of Table \ref{table:gauging}, however, we slightly abuse the notation and use these symbols to refer to the corresponding sector in the Hilbert space. In what follows, we use this notation rather than $\CH^\pm, \CH_\sigma^\pm$ for simplicity.
Note that the two sets of the partition functions are related by
\begin{align}
    \begin{aligned}
    \label{eq:rel_four_partition}
        S = \frac{Z + Z^\sigma}{2} \ ,\qquad
        T = \frac{Z - Z^\sigma}{2}\ ,\qquad
        U = \frac{Z_\sigma + Z^\sigma_\sigma}{2}\ , \qquad
        V = \frac{Z_\sigma - Z^\sigma_\sigma}{2}\ .
    \end{aligned}
\end{align}

\begin{table}
  \centering
      \begin{subtable}[t]{0.45\textwidth}
        \centering
              \begin{tabular}{ccc}
              \toprule
                $\CB$  & untwisted  &  twisted  \\
                \hline 
                even  & $S$  & $U$ \\
                odd  & $T$   & $V$ \\\bottomrule
              \end{tabular}
        \caption{Bosonic CFT}
      \end{subtable}
      \begin{subtable}[t]{0.45\textwidth}
            \centering
                  \begin{tabular}{ccc}
                  \toprule
                    $\CO$  & untwisted  &  twisted  \\
                    \hline 
                    even  & $S$  & $T$ \\
                    odd  & $U$   & $V$ \\\bottomrule
                  \end{tabular}
            \caption{Orbifold CFT}
      \end{subtable}
      \vspace*{0.5cm}
  \\
  \vskip\baselineskip
        \begin{subtable}[t]{0.45\textwidth}
            \centering
              \begin{tabular}{ccc}
              \toprule
                $\CF$  & NS sector  &  R sector  \\
                \hline 
                even  & $S$  & $T$ \\
                odd  & $V$   & $U$ \\\bottomrule
              \end{tabular}
            \caption{Fermionic CFT}
        \end{subtable}
        \begin{subtable}[t]{0.45\textwidth}
            \centering
              \begin{tabular}{ccc}
              \toprule
                $\widetilde{\CF}$  & NS sector  &  R sector  \\
                \hline 
                even  & $S$  & $U$ \\
                odd  & $V$   & $T$ \\\bottomrule
              \end{tabular}
              \caption{The other fermionic CFT}
        \end{subtable}
  \vspace{0.5cm}
  \caption{The $\BZ_2$ gradings of the untwisted and twisted Hilbert space for the original bosonic theory $\CB$, the orbifold theory $\CO$, and the fermionized theory $\CF$, $\widetilde{\CF}$. The gradings are given by the $\BZ_2$ symmetry $\sigma$ for $\CB$, its dual symmetry $\hat{\sigma}$ for $\CO$, and the fermion parity $(-1)^F$ for $\CF$, $\widetilde{\CF}$.
  The four sectors are swapped with each other under orbifolding and fermionization.}
  \label{table:gauging}
\end{table}

\paragraph{$\BZ_2$ orbifold}

The $\BZ_2$ orbifold can be applied to the bosonic theory $\CB$ by choosing a non-anomalous global $\BZ_2$ symmetry $\sigma$. The standard procedure of the orbifold is the extension of the untwisted Hilbert space $\CH$ to the twisted Hilbert space $\CH_\sigma$ followed by the projection onto the $\BZ_2$ even sectors.
Table \ref{table:gauging} shows the $\BZ_2$ even sectors in the original theory $\CB$ are $S\oplus U$.
This gives the orbifold partition function
\begin{align}
    Z_\CO = S+U = \frac{1}{2}\left[ Z + Z^\sigma + Z_\sigma + Z^\sigma_\sigma \right]\,,
\end{align}
where we use the relationship \eqref{eq:rel_four_partition}.
The rightmost expression implies that the $\BZ_2$ orbifold sums over all possible $\BZ_2$ gauge field configurations on the torus where the original theory $\CB$ is defined.
Therefore, the $\BZ_2$ orbifold can be seen as the gauging of the $\BZ_2$ symmetry $\sigma$:
\begin{align}
    \text{Orbifold theory}~\CO ~=~ \frac{\text{Bosonic theory}~\CB}{\BZ_2^\sigma} \ .
\end{align}
After gauging the $\BZ_2$ symmetry $\sigma$, the orbifold theory $\CO$ no longer has the global symmetry $\sigma$.
However, the dual $\BZ_2$ symmetry $\hat{\sigma}$ (more precisely, the Pontrjagin dual of $\sigma$) emerges in the orbifold theory $\CO$ \cite{Vafa:1989ih}.\footnote{Taking the sum over all possible $\BZ_2$ configuration of the dual symmetry $\hat{\sigma}$, we can apply the orbifold by $\hat{\sigma}$ to the orbifold theory $\CO$ and it returns the original theory $\CB$. For more general discussion, readers can refer to \cite{Bhardwaj:2017xup}.}
Then, as in the original theory $\CB$, we can give the $\BZ_2$ grading in the orbifold Hilbert space by $\hat{\sigma}$ and also construct the twisted sector by $\hat{\sigma}$.

To introduce a more general partition function of the orbifold theory $\CO$, suppose that $t$ is a $\BZ_2$ gauge field configuration of $\hat{\sigma}$ on a spacetime where the theory lives.
The orbifold partition functions with the $\BZ_2$ configuration $t$ on the genus-$g$ Riemann surface $\Sigma_g$ can be written by
\begin{align}
    Z_\CO[t] = \frac{1}{2^g}\,\sum_{s\,\in\, H^1(\Sigma_g, \BZ_2)} Z_\CB[s]\,\exp\left[ \i\,\pi \int s\cup t\right] \ ,
\end{align}
where $s$ and $t$ take values in $H^1(\Sigma_g, \BZ_2)$, and $\cup$ is the cup product.
On a torus, the $\BZ_2$ configurations $s$, $t\in\BZ_2\times \BZ_2$ are specified by the periodicity $(a,b)$ along the temporal $a\in\BZ_2$ and spacial direction $b\in\BZ_2$.
Then, there are four types of partition functions $Z_\CO[ab]$ for the orbifold theory:
\begin{align}
    \begin{aligned}
    \label{eq:orbifold_part_gen}
        Z_\CO[00] &= \frac{1}{2}\left[ Z + Z^\sigma + Z_\sigma + Z^\sigma_\sigma \right] = S + U \ , \\
        Z_\CO[10] &= \frac{1}{2}\left[ Z + Z^\sigma - Z_\sigma - Z^\sigma_\sigma \right] = S - U \ , \\
        Z_\CO[01] &= \frac{1}{2}\left[ Z - Z^\sigma + Z_\sigma - Z^\sigma_\sigma \right] = T + V \ , \\
        Z_\CO[11] &= \frac{1}{2}\left[ Z - Z^\sigma - Z_\sigma + Z^\sigma_\sigma \right] = T - V \ ,
    \end{aligned}
\end{align}
where we used the identification:
\begin{align}
    \begin{aligned}
        Z_\CB[00] = Z \ , \quad
        Z_\CB[10] = Z^\sigma \ , \quad
        Z_\CB[01] = Z_\sigma \ , \quad
        Z_\CB[11] = Z^\sigma_\sigma \ .
    \end{aligned}
\end{align}
The relations \eqref{eq:orbifold_part_gen} show that the sector $S$ ($U$) is the even (odd) untwisted Hilbert space of the orbifold theory $\CO$. On the other hand, the sector $T$ ($V$) is the even (odd) sector of the Hilbert space twisted by the dual symmetry $\hat{\sigma}$.
Therefore, we can recapitulate the Hilbert space of the orbifold theory $\CO$ with respect to the dual $\BZ_2$ symmetry $\hat{\sigma}$ in the top right panel of Table \ref{table:gauging}.
Compared with the original theory $\CB$, taking the orbifold can be understood as the swapping of the two sectors $U\leftrightarrow V$.

\paragraph{Fermionization}

Starting with a bosonic theory $\CB$ with a non-anomalous $\BZ_2$ symmetry, we can turn it into a fermionic theory using the procedure called fermionization \cite{Tachikawalec,Karch:2019lnn}.
Fermionization exploits a non-trivial two-dimensional invertible spin topological theory known as the Kitaev Majorana chain \cite{Kitaev:2000nmw}, which has a global $\BZ_2$ symmetry.
By coupling $\CB$ to the Kitaev Majorana chain and taking the diagonal $\BZ_2$ orbifold, the bosonic theory $\CB$ turn into a fermionic theory $\CF$:
\begin{align}
    \text{Fermionized theory}~\CF ~=~ \frac{\text{Bosonic theory}~\CB\times \text{Kitaev}}{\BZ_2} \ .
\end{align}
After the fermionization, the fermionic theory does not have the diagonal $\BZ_2$ symmetry. Instead, the fermion parity $(-1)^F$ emerges and we can give define the untwisted and twisted sectors by the $\BZ_2$ grading with respect to the fermion parity.
The resulting untwisted sector and the twisted sector are called the Neveu-Schwarz (NS) sector and the Ramond (R) sector, respectively.

On a genus-$g$ Riemann surface $\Sigma_g$, the fermionized partition function with spin structure $\rho$ is given by
\begin{align}
    Z_\CF[t+\rho] = \frac{1}{2^g}\,\sum_{s\,\in\, H^1(\Sigma_g, \BZ_2)} Z_\CB[s]\,\exp\left[ \i\,\pi \left(\text{Arf}\,[s+\rho] + \text{Arf}\,[\rho] + \int s\cup t\right)\right] \ ,
\end{align}
where $t\in H^1(\Sigma_g,\BZ_2)$ is the $\BZ_2$ configuration of the fermion parity.
Here, $\text{Arf}$ is called the Arf invariant and depends on the choice of spin structures on $\Sigma_g$.
The resulting fermionic theory has partition functions that depend on the choice of spin structures.

On a torus, there are four spin structures by depending on whether fermions are anti-periodic (A) or periodic (P) along each cycle.
The Arf invariant takes the values:
\begin{align}
    \text{Arf}\,[\rho] 
        =
        \begin{cases}
            0 & \quad \rho = 00~ (\text{AA}),~ 01~ (\text{AP}),~ 10~ (\text{PA}) \\
            1 & \quad  \rho = 11~ (\text{PP})
        \end{cases}
\end{align}
where the first and second letters stand for the periodicity along the temporal and spacial direction, respectively.
There are four types of fermionic partition functions:
\begin{align}
    \begin{aligned}
    \label{eq:fermionized_part_gen}
        Z_\CF[00] 
            &=
            \frac{1}{2}\left[ Z + Z^\sigma + Z_\sigma - Z^\sigma_\sigma \right] = S + V\ , \\
        Z_\CF[10] 
            &=
            \frac{1}{2}\left[ Z  + Z^\sigma - Z_\sigma + Z^\sigma_\sigma \right] = S - V \ , \\
        Z_\CF[01] 
            &=
            \frac{1}{2}\left[ Z  - Z^\sigma + Z_\sigma + Z^\sigma_\sigma \right] = T + U\ , \\
        Z_\CF[11] 
            &=
            \frac{1}{2}\left[ Z - Z^\sigma - Z_\sigma - Z^\sigma_\sigma \right] = T - U\ .
    \end{aligned}
\end{align}
Note that the second letter specifies the NS sector for the anti-periodic boundary and the R sector for the periodic boundary.
From \eqref{eq:fermionized_part_gen}, the sector $U$ ($V$) has even (odd) fermion parity in the NS sector. Similarly, the sector $T$ ($U$) is even (odd) with respect to the fermion parity in the R sector.
Hence, the Hilbert space of the fermionized theory $\CF$ can be summarized as the bottom left panel in Table \ref{table:gauging}.
Comparing the bosonic theory $\CB$ and the fermionized counterpart $\CF$ in Table \ref{table:gauging}, fermionization can be regarded as the cyclic permutation of the three sectors $T\to U\to V\to T$.

From a bosonic theory $\CB$, we can construct the other fermionic CFT denoted by $\widetilde{\CF}$. 
These fermionic theories $\CF$ and $\widetilde{\CF}$ are related by the stacking of the Kitaev Majorana chain:
\begin{align}
    \text{Fermionized theory}~\widetilde{\CF} ~=~ \text{Fermionized theory}~\CF \times \text{Kitaev}\,,
\end{align}
which results in the difference of the $\BZ_2$ grading in the R sector. Therefore, the Hilbert space of $\widetilde{\CF}$ can be given by the bottom right panel in Table \ref{table:gauging}.

\subsection{$\BZ_2$ gauging from lattice modification}
\label{ss:gauging_modification}

In this section, we consider a 2d bosonic CFT $\CB$ with a particular type of a non-anomalous $\BZ_2$ symmetry and their gauging by the $\BZ_2$ symmetry.
In particular, we treat a bosonic theory characterized by a momentum lattice and its $\BZ_2$ symmetry related to a shift of momentum lattice.
This exemplifies the relationship between the $\BZ_2$ gauging and lattice modification.
For the $\BZ_2$ orbifold, the following relation with lattice modification is demonstrated in~\cite{Dymarsky:2020qom}.

More specifically, we consider a single compact boson whose momenta are characterized by a two-dimensional Lorentzian lattice $\Lambda\subset\BR^{1,1}$.
Let $X(z,\bar{z}) = X_L(z) + X_R(\bar{z})$ be a compact free boson with radius $R$. 
The compact boson is normalized as $X(z,\bar{z})\,X(0,0)\sim-\log|z|^2$, which corresponds to $\alpha'=2$ in the notation of \cite{Polchinski:1998rq}.
In this setup, the vertex operators are
\begin{align}\label{Vertex_Operator}
    V_{p_L,\,p_R}(z,\bar{z}) = :e^{\i\, p_L X_L(z) +\i\, p_R X_R(\bar{z})}:\,,
\end{align}
where the left- and right-moving momenta $(p_L,p_R)$ are given by
\begin{align}
\label{eq:deformation_momentum}
    p_L =\frac{m}{R}+\frac{w\,R}{2}\,,\qquad
    p_R =\frac{m}{R}-\frac{w\,R}{2}\,,\qquad m,w\in\BZ\,.
\end{align}
Here, $m$ and $w$ are called the momentum and winding numbers, respectively.
For any vertex operator $V_{p_L,\,p_R}(z,\bar{z})$, the spin is an integer:
\begin{align}
    s = h-\bar{h} = \frac{p_L^2}{2}-\frac{p_R^2}{2} = m\, w\in\BZ\,.
\end{align}
The spin-statistics theorem shows that the theory $\CB$ has only bosonic excitations.

Generally, a bosonic theory should be invariant under the modular transformation.
In an $n$-dimensional compact boson, the modular invariance requires a set of left- and right-moving momenta $(p_L,p_R)$ to form an even self-dual lattice $\Lambda$ with respect to the diagonal Lorentzian metric $\tilde{\eta} = \mathrm{diag}(\Bone_n,-\Bone_n)$ (the corresponding inner product is denoted by $\circ$ in the following):
\begin{itemize}
    \item (Even) $\qquad\quad(p_L,p_R)\circ(p_L,p_R)\in2\BZ\qquad \text{for }~ (p_L,p_R)\in\Lambda$.
    \item (Self-dual) $\quad \Lambda^* := \{\,(p_L',p_R')\in\BR^{n,n}\mid (p_L,p_R)\circ (p_L',p_R')\in\BZ\,,\,(p_L,p_R)\in\Lambda\} = \Lambda$.
\end{itemize}
In a general dimension $n$, an even self-dual lattice specifies a compactification of bosons and defines a bosonic CFT of Narain type~\cite{Narain:1985jj,Narain:1986am}.

For a single compact boson ($n=1$) of radius $R$, the corresponding momentum lattice is
\begin{align}
    \Lambda(R) = \left\{\left(\frac{m}{R}+\frac{w\,R}{2},\,\frac{m}{R}-\frac{w\,R}{2}\right)\in\BR^2\;\middle|\; m,w\in\BZ\right\}\,.
\end{align}
One can easily check that $\Lambda(R)$ is even self-dual with respect to $\tilde{\eta} = \mathrm{diag}(1,-1)$.
The compact boson theory consists of the operators
\begin{align}
    \partial X(z)\,,\qquad T(z) = \partial X(z)\, \partial X(z) \,,\qquad V_{p_L,\,p_R}(z,\bar{z})\,.
\end{align}
Using the state-operator isomorphism, the Hilbert space of the compact boson is
\begin{align}
    \CH = \left\{\alpha_{-k_1}\cdots\alpha_{-k_r}\,\tilde{\alpha}_{-l_1}\cdots\tilde{\alpha}_{-l_s}\ket{p_L,p_R} \;\middle|\; (p_L,p_R)\in\Lambda(R)\right\}\,,
\end{align}
where $\alpha_k$ and $\tilde{\alpha}_k$ are the bosonic oscillators and $k_1,\cdots,k_r\in \BZ_{>0}$ and $l_1,\cdots,l_s\in \BZ_{>0}$.

The compact free boson at generic radius $R$ has a global $\U(1)_m\times \U(1)_w$ symmetry where the momentum $\U(1)_m$ and winding $\U(1)_w$ act on the compact boson as (see e.g., \cite{Ji:2019ugf})
\begin{align}
\begin{aligned}
    \U(1)_m: \quad X_L(z)&\to X_L(z) + \frac{R}{2}\,\theta_m\,, &\quad X_R(z)&\to X_R(z) + \frac{R}{2}\,\theta_m\,,\\
    \U(1)_w: \quad X_L(z)&\to X_L(z) + \frac{1}{R}\,\theta_w\,, &\quad X_R(z)&\to X_R(z) - \frac{1}{R}\,\theta_w\,,
\end{aligned}
\end{align}
where $\theta_{m,w}\sim \theta_{m,w}+2\pi$.
Then, the symmetries act on the vertex operators as
\begin{align}
    \begin{aligned}
    \U(1)_m:\quad V_{p_L,\,p_R}(z,\bar{z})&\to \!\!\!&e^{\i\, m\, \theta_m}\,&V_{p_L,\,p_R}(z,\bar{z})\,,\\
    \U(1)_w:\quad V_{p_L,\,p_R}(z,\bar{z})&\to \!\!\!&e^{\i\, w\, \theta_w}\,&V_{p_L,\,p_R}(z,\bar{z})\,,
    \end{aligned}
\end{align}
By setting $\theta_m=\pi$ and $\theta_w=\pi$, these $\U(1)$ symmetries reduce to the momentum $\BZ_2$ and winding $\BZ_2$ symmetries.

Let us focus on the momentum $\BZ_2$ symmetry and identify the untwisted and twisted Hilbert spaces. 
The momentum $\BZ_2$ symmetry acts on the vertex operators as
\begin{align}
    V_{p_L,\,p_R}(z,\bar{z})\to(-1)^m\,\, V_{p_L,\,p_R}(z,\bar{z})\,.
\end{align}
This action gives rise to the $\BZ_2$ grading of the momentum lattice $\Lambda(R)=\Lambda_0\cup\Lambda_1$: 
\begin{align}
\begin{aligned}
\label{eq:momenta01single}
    \Lambda_0 &= \left\{\left(\frac{m}{R}+\frac{w\,R}{2},\, \frac{m}{R}-\frac{w\,R}{2}\right)\in\BR^2\;\middle|\; m\in 2\BZ\,,\;w\in\BZ\right\}\,,\\
    \Lambda_1 &= \left\{\left(\frac{m}{R}+\frac{w\,R}{2},\, \frac{m}{R}-\frac{w\,R}{2}\right)\in\BR^2\;\middle|\; m\in2\BZ+1\,,\;w\in\BZ\right\}\,.
\end{aligned}
\end{align}
Correspondingly, the untwisted Hilbert space $\CH$ can be decomposed into the $\BZ_2$ even sector $S$ and odd sector $T$:
\begin{align}
\begin{aligned}
\label{eq:untwisted_hilbert_single}
    S &= \left\{\alpha_{-k_1}^{i_1}\cdots\alpha_{-k_r}^{i_r}\,\tilde{\alpha}_{-l_1}^{j_1}\cdots\tilde{\alpha}_{-l_s}^{j_s}\ket{p_L,p_R} \;\middle|\; (p_L,p_R)\in\Lambda_0\right\}\,,\\
    T &= \left\{\alpha_{-k_1}^{i_1}\cdots\alpha_{-k_r}^{i_r}\,\tilde{\alpha}_{-l_1}^{j_1}\cdots\tilde{\alpha}_{-l_s}^{j_s}\ket{p_L,p_R} \;\middle|\; (p_L,p_R)\in\Lambda_1\right\}\,.
\end{aligned}
\end{align}

By gauging the momentum $\BZ_2$ symmetry of the compact free boson, we obtain the following operators in the twisted sector (refer to Appendix A in \cite{Lin:2019kpn}):
\begin{align}
    V_{\tilde{p}_L,\,\tilde{p}_R}(z,\bar{z}) = :e^{\i\,\tilde{p}_L X_L(z) +\i\,\tilde{p}_R X_R(\bar{z})}:\,,
\end{align}
which are labeled by
\begin{align}
    \tilde{p}_L = \frac{k}{R} + \frac{l R}{2}\,,\qquad \tilde{p}_R = \frac{k}{R} - \frac{l R}{2}\,,\qquad k\in\BZ\,,~~l\in\BZ+\frac{1}{2}\,.
    \label{eq:twisted_momentum}
\end{align}
Note that the spin of the vertex operator $V_{\tilde{p}_L,\,\tilde{p}_R}(z,\bar{z})$ is $s=k\,l$.
Then, the spin of any operator in the twisted sector is at least a half-integer. Hence, the spin selection rule \eqref{eq:spin_selection} guarantees that the momentum $\BZ_2$ symmetry is non-anomalous, with which we can perform the orbifold and fermionization.

On the twisted Hilbert space, the momentum $\BZ_2$ symmetry acts as
\begin{align}\label{eq:orb-act-single}
    V_{\tilde{p}_L,\,\tilde{p}_R}(z,\bar{z})\to(-1)^k\,\,V_{\tilde{p}_L,\,\tilde{p}_R}(z,\bar{z})\,.
\end{align}
Then, we can decompose the momenta $(\tilde{p}_L,\tilde{p}_R)$ into two sectors by the $\BZ_2$ grading:
\begin{align}
\begin{aligned}
\label{eq:momenta23single}
    \Lambda_2 &= \left\{\left(\frac{k}{R} + \frac{l\, R}{2},\, \frac{k}{R} - \frac{l\, R}{2}\right)\in\BR^2\;\middle|\; k\in 2\BZ+1\,,\;l\in\BZ+\frac{1}{2}\right\}\,,\\
    \Lambda_3 &=\left\{\left(\frac{k}{R} + \frac{l\, R}{2},\, \frac{k}{R} - \frac{l\, R}{2}\right)\in\BR^2\;\middle|\; k\in 2\BZ\,,\;l\in\BZ+\frac{1}{2}\right\}\,.
\end{aligned}
\end{align}
This induces the decomposition of the twisted Hilbert sector into the $\BZ_2$ even sector $U$ and odd sector $V$:
\begin{align}
    \begin{aligned}
    \label{eq:twisted_hilbert_single}
    U &= \left\{\alpha_{-k_1}^{i_1}\cdots\alpha_{-k_r}^{i_r}\,\tilde{\alpha}_{-l_1}^{j_1}\cdots\tilde{\alpha}_{-l_s}^{j_s}\ket{p_L,p_R} \;\middle|\; (p_L,p_R)\in\Lambda_3\right\}\,,\\
    V &= \left\{\alpha_{-k_1}^{i_1}\cdots\alpha_{-k_r}^{i_r}\,\tilde{\alpha}_{-l_1}^{j_1}\cdots\tilde{\alpha}_{-l_s}^{j_s}\ket{p_L,p_R} \;\middle|\; (p_L,p_R)\in\Lambda_2\right\}\,.
    \end{aligned}
\end{align}

\begin{table}
  \centering
          \begin{subtable}[t]{0.45\textwidth}
        \centering
              \begin{tabular}{ccc}
              \toprule
                $\CB$  & untwisted  &  twisted  \\
                \hline 
                even  & $\Lambda_0$  & $\Lambda_3$ \\
                odd  & $\Lambda_1$   & $\Lambda_2$ \\ \bottomrule
              \end{tabular}
        \caption{Bosonic CFT}
      \end{subtable}
      \begin{subtable}[t]{0.45\textwidth}
            \centering
                  \begin{tabular}{ccc}
                  \toprule
                    $\CO$  & untwisted  &  twisted  \\
                    \hline 
                    even  & $\Lambda_0$  & $\Lambda_1$ \\
                    odd  & $\Lambda_3$   & $\Lambda_2$  \\\bottomrule
                  \end{tabular}
            \caption{Orbifold CFT}
      \end{subtable}
      \vspace*{0.5cm}
  \\
  \vskip\baselineskip
        \begin{subtable}[t]{0.45\textwidth}
            \centering
              \begin{tabular}{ccc}
              \toprule
                $\CF$  & NS sector  &  R sector  \\
                \hline 
                even  & $\Lambda_0$  & $\Lambda_1$ \\
                odd  & $\Lambda_2$   & $\Lambda_3$ \\\bottomrule
              \end{tabular}
            \caption{Fermionic CFT}
        \end{subtable}
        \begin{subtable}[t]{0.45\textwidth}
            \centering
              \begin{tabular}{ccc}
              \toprule
                $\widetilde{\CF}$  & NS sector  &  R sector  \\
                \hline 
                even  & $\Lambda_0$  & $\Lambda_3$ \\
                odd  & $\Lambda_2$   & $\Lambda_1$  \\\bottomrule
              \end{tabular}
              \caption{The other fermionic CFT}
        \end{subtable}
  \vspace{0.5cm}
  \caption{The $\BZ_2$ grading of the untwisted and twisted Hilbert space in terms of the underlying sets of momenta. Each Hilbert space can be constructed from the vertex operators associated with $\Lambda_i$ $(i=0,1,2,3)$ and the bosonic excitations. The orbifold and fermionization reduce to the swapping of the four sets $\Lambda_i$ of momenta.}
  \label{table:gauging_deforamtion}
\end{table}

The only difference between the four sectors $S, T, U, V$ in \eqref{eq:untwisted_hilbert_single} and \eqref{eq:twisted_hilbert_single} is the underlying set of left- and right-moving momenta.
Hence, the $\BZ_2$ grading in the untwisted and twisted Hilbert space can be understood in terms of $\Lambda_i$ ($i=0,1,2,3$):
\begin{align}
    S \leftrightarrow \Lambda_0\,,\qquad
    T \leftrightarrow \Lambda_1\,,\qquad
    U \leftrightarrow \Lambda_3\,,\qquad
    V \leftrightarrow \Lambda_2\,.\qquad
\end{align}
To compare with Table \ref{table:gauging}, we summarize the orbifold and fermionization by the momentum $\BZ_2$ symmetry in terms of the underlying sets of momenta in Table \ref{table:gauging_deforamtion}.
While the momentum lattice of the orbifold theory is
\begin{align}
    \Lambda_\CO = \Lambda_0\cup \Lambda_3\,,
\end{align}
that of the NS sector in the fermionized theory is
\begin{align}
\Lambda_\mathrm{NS} = \Lambda_0\cup \Lambda_2\,.  
\end{align}
One can check that $\Lambda_\CO$ is even self-dual and $\Lambda_\mathrm{NS}$ is odd self-dual with respect to the inner product $\circ$.
This meets the general requirement that a bosonic theory has an even self-dual momentum lattice and a fermionic theory has an odd self-dual lattice.

Let us reproduce the $\BZ_2$ grading of the Hilbert spaces more systematically.
Suppose an element $\chi = (R/2,-R/2)\in \Lambda(R)$.
Then, each $\Lambda_i$ ($i=0,1$) in \eqref{eq:momenta01single} can be characterized by 
\begin{align}
    \begin{aligned}
        \Lambda_i = \left\{\, (p_L,p_R)\in\Lambda(R)\mid (p_L,p_R)\circ \chi=i\;\;\mathrm{mod}\,\;2\right\}\,.
    \end{aligned}
\end{align}
Furthermore, $\Lambda_2$ and $\Lambda_3$ in \eqref{eq:momenta23single} can be endowed with
\begin{align}
   \begin{aligned}
    \Lambda_2 = \Lambda_1 + \frac{\chi}{2}\,,\qquad \Lambda_3 = \Lambda_0 +\frac{\chi}{2}\,.
\end{aligned}
\end{align}
Therefore, $\Lambda_2$ and $\Lambda_3$, which specify the $\BZ_2$ grading of the twisted sector, are a shift of $\Lambda_1$ and $\Lambda_2$ by $\chi/2$, respectively. 
Once a vector $\chi$ is given, then it uniquely determines $\Lambda_i$ ($i=0,1,2,3$) as well as the $\BZ_2$ grading of the untwisted and twisted Hilbert space.

In general, we can choose a non-anomalous $\BZ_2$ symmetry in $\U(1)_m\times \U(1)_w$ and take the orbifold and fermionization by the symmetry.
In terms of the momentum lattice, the choice of a non-anomalous $\BZ_2$ symmetry corresponds to the choice of $\chi\in\Lambda(R)$. 
For example, the momentum and winding $\BZ_2$ symmetries are specified by $\chi = (R/2,-R/2)$ and $\chi = (R/2,R/2)$, respectively.
In the next section, we generalize this discussion into higher-dimensional lattices and formulate the lattice modification, which leads to the $\BZ_2$ gauging of Narain CFTs.

\subsection{General formulation of lattice modification}
\label{ss:general_modification}

In the previous section, we introduced the modifications of the momentum lattice as orbifold and fermionization in a compact boson. 
We call this operation lattice modification, which is often used by mathematicians to make new self-dual lattices out of existing ones \cite{conway2013sphere}.
In relation to string theory, lattice modification also has been studied in \cite{Lerche:1988np,gannon1991lattices}.
The main aim of this section is to provide a modern description of lattice modification related to orbifold and fermionization of Narain CFTs.

Before discussing lattice modification, we clarify several notions and properties of lattices.
Let $\Lambda\subset\BR^N$ be a lattice with a non-degenerate symmetric bilinear form $g$. We denote the inner product with respect to the metric $g$ as $\oslash$. 
We do not specify its signature so that Euclidean and Lorentzian metrics can be considered.

Associated with the inner product $\oslash$, the dual lattice is
\begin{align}
    \Lambda^* = \left\{\lambda'\in\BR^N\mid \lambda\oslash \lambda'\in\BZ\, ,\,\;\lambda\in\Lambda\right\}\,.
\end{align}
If $\Lambda\subset\Lambda^*$, $\Lambda$ is called \emph{integral}. If $\Lambda = \Lambda^*$, $\Lambda$ is called \emph{self-dual}.
An integral lattice $\Lambda$ can be further classified into an even and odd lattice.
\begin{itemize}
    \item If any $\lambda\in \Lambda$ satisfies $\lambda\oslash \lambda\in2\BZ$, $\Lambda$ is called \emph{even}.
    \item If there exists $\lambda\in\Lambda$ such that $\lambda\oslash\lambda\in2\BZ+1$, $\Lambda$ is called \emph{odd}.
\end{itemize}
By regarding $\Lambda$ as a momentum lattice, an even self-dual lattice $\Lambda$ defines a bosonic theory and an odd self-dual lattice $\Lambda$
defines a fermionic theory \cite{Lerche:1988np}.

Since we are interested in Narain CFTs, we suppose that $\Lambda$ is even self-dual with respect to the metric $g$.
Let us pick up a lattice vector $\chi\in\Lambda$ whose half is not an element of $\Lambda$: $\delta = \frac{\chi}{2}\notin\Lambda$.
We assume $\delta\oslash\delta \in \BZ$ so that the CFT associated with the lattice $\Lambda$ has a non-anomalous $\BZ_2$ symmetry.
Otherwise, the twisted sector would consist of operators with the spin $s\in\frac{1}{4}+\frac{\BZ}{2}$ and it implies the $\BZ_2$ symmetry is anomalous from the spin selection rule \eqref{eq:spin_selection}.
Note that, in the non-anomalous case,  $\chi\oslash\chi\in 4\BZ$ holds.

Generalizing the case of a single compact boson in the previous subsection, we divide $\Lambda$ into two parts as 
\begin{align}
    \Lambda = \Lambda_0\cup\Lambda_1\,,
\end{align}
where
\begin{align}
\begin{aligned}
\label{eq:lattice_divide}
    \Lambda_0 &= \left\{\lambda\in\Lambda\,|\,\chi\oslash\lambda = 0 \;\;\mathrm{mod}\;\,2\right\}\,,\\
    \Lambda_1 &= \left\{\lambda\in\Lambda\,|\,\chi\oslash\lambda = 1 \;\;\mathrm{mod}\;\,2\right\}\,.
\end{aligned}
\end{align}
From the assumption $\chi\oslash\chi\in4\BZ$, the vector $\chi\in\Lambda$ is contained in $\Lambda_0$: $\chi\in\Lambda_0$.
Note that $\Lambda_0$ is closed under addition, so it forms a lattice by itself.
Using the shift vector $\delta$, we introduce the following two sets: 
\begin{align}
    \begin{aligned}
    \label{eq:lattice_shift_23}
    \Lambda_2 &= 
    \begin{dcases}
    \Lambda_1 + \delta &\quad (\delta\oslash\delta \in 2\BZ)\,,\\
    \Lambda_0 + \delta &\quad (\delta\oslash\delta\in 2\BZ+1)\,,
    \end{dcases}\\[0.1cm]
    \Lambda_3 &= 
    \begin{dcases}
    \Lambda_0 + \delta &\quad (\delta\oslash\delta\in 2\BZ)\,,\\
    \Lambda_1 + \delta &\quad (\delta\oslash\delta\in 2\BZ+1)\,.
    \end{dcases}
    \end{aligned}
\end{align}

As in the previous subsection, we define
\begin{align}
\label{eq:lattice_def}
    \Lambda_\CO = \Lambda_0\cup\Lambda_3\,,\qquad
    \Lambda_\mathrm{NS} = \Lambda_0\cup\Lambda_2\,.
\end{align}
In what follows, we verify that $\Lambda_\CO$ is even self-dual and $\Lambda_\mathrm{NS}$ is odd self-dual for a general metric $g$.
The following proposition is fundamental.

\begin{proposition}
The dual lattice $\Lambda_0^*$ of $\Lambda_0$ with respect to the metric $g$ is given by
\begin{align}
    \Lambda_0^* = \Lambda_0\cup\Lambda_1\cup\Lambda_2\cup\Lambda_3\,,
\end{align}
where $\Lambda_i$ $(i=0,1,2,3)$ are defined by \eqref{eq:lattice_divide} and \eqref{eq:lattice_shift_23}.
\label{prop:dual_lattice_lambda0}
\end{proposition}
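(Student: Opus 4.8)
The plan is to prove the two inclusions $\Lambda_0^* \supseteq \Lambda_0\cup\Lambda_1\cup\Lambda_2\cup\Lambda_3$ and $\Lambda_0^* \subseteq \Lambda_0\cup\Lambda_1\cup\Lambda_2\cup\Lambda_3$ separately, exploiting that $\Lambda$ is even self-dual and that $\delta\oslash\delta\in\BZ$, $\chi\oslash\chi\in 4\BZ$.

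For the inclusion $\supseteq$, I would check that each $\Lambda_i$ pairs integrally with every element of $\Lambda_0$. Since $\Lambda_0\subseteq\Lambda=\Lambda^*$, any $\lambda\in\Lambda_0\cup\Lambda_1$ already satisfies $\lambda\oslash\mu\in\BZ$ for all $\mu\in\Lambda_0$, giving $\Lambda_0,\Lambda_1\subseteq\Lambda_0^*$. For $\Lambda_2$ and $\Lambda_3$, a general element has the form $\nu+\delta$ with $\nu\in\Lambda$, so for $\mu\in\Lambda_0$ one computes $(\nu+\delta)\oslash\mu = \nu\oslash\mu + \tfrac12\,\chi\oslash\mu$; the first term is in $\BZ$ by self-duality of $\Lambda$, and the second is in $\BZ$ precisely because $\mu\in\Lambda_0$ means $\chi\oslash\mu\in 2\BZ$. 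Hence $\Lambda_2,\Lambda_3\subseteq\Lambda_0^*$, establishing $\supseteq$.

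For the reverse inclusion $\subseteq$, the key structural fact is that $\Lambda_0$ has index $2$ in $\Lambda$ (since $\chi\oslash(\cdot)\bmod 2$ is a surjective homomorphism $\Lambda\to\BZ_2$ with kernel $\Lambda_0$, surjectivity because $\Lambda$ is self-dual so some vector pairs oddly with $\chi$). Therefore $[\Lambda_0^* : \Lambda_0] = [\Lambda_0^*:\Lambda^*]\,[\Lambda^*:\Lambda_0] = 2\cdot[\Lambda:\Lambda_0] = 4$ using $\Lambda=\Lambda^*$ and $[\Lambda_0^*:\Lambda^*]=[\Lambda:\Lambda_0]$ for a finite-index sublattice. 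So it suffices to exhibit four distinct cosets of $\Lambda_0$ inside the right-hand side and argue they are all of $\Lambda_0^*/\Lambda_0$. The four sets $\Lambda_0,\Lambda_1,\Lambda_2,\Lambda_3$ each sit in a single $\Lambda_0$-coset: $\Lambda_1 = \Lambda_0 + \lambda_\star$ for any fixed $\lambda_\star\in\Lambda_1$ (standard for index-$2$), and $\Lambda_2,\Lambda_3$ are shifts of $\Lambda_1$ or $\Lambda_0$ by $\delta$ per \eqref{eq:lattice_shift_23}. Distinctness of the four cosets follows from pairing against suitable test vectors: $\chi\in\Lambda_0$ separates $\delta$ from $\Lambda$-cosets because $\chi\oslash\delta = \tfrac12\chi\oslash\chi\in 2\BZ$ — wait, this needs care, so instead I pair with $\delta$ itself, which lies in $\Lambda_0^*$ by the $\supseteq$ part applied appropriately, or with a vector $\lambda_\star\in\Lambda_1$: on $\Lambda_0$ the value $\lambda_\star\oslash(\cdot)\bmod\BZ$ distinguishes the $\delta$-shifted from the unshifted sets, while $\chi\oslash(\cdot)$ (valued in $\tfrac12\BZ/\BZ$ on the shifted sets) separates $\Lambda_2$ from $\Lambda_3$. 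Counting: four cosets found, index is $4$, so the union exhausts $\Lambda_0^*$.

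The main obstacle I anticipate is the index computation $[\Lambda_0^*:\Lambda_0]=4$ and, relatedly, verifying that the four candidate sets genuinely represent four \emph{distinct} nonzero cosets rather than collapsing — in particular ruling out $\Lambda_2=\Lambda_3$ or $\Lambda_2=\Lambda_1$, which is where the hypotheses $\delta\notin\Lambda$ and the parity of $\delta\oslash\delta$ enter decisively. One must also be slightly careful that $\Lambda_0$ is nondegenerate so that $\Lambda_0^*$ is a genuine lattice of the same rank; this is automatic since $g$ is nondegenerate and $\Lambda_0$ has finite index in $\Lambda$. Everything else is routine bilinear-form bookkeeping.
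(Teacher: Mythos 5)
Your proof is correct, but your harder inclusion goes by a genuinely different route from the paper's. For $\Lambda_0^*\supseteq\Lambda_0\cup\Lambda_1\cup\Lambda_2\cup\Lambda_3$ you and the paper do the same computation: integrality of the self-dual $\Lambda$ handles $\Lambda_0\cup\Lambda_1$, and $\delta\oslash\mu=\tfrac12\,\chi\oslash\mu\in\BZ$ for $\mu\in\Lambda_0$ handles the shifted sets. For the reverse inclusion the paper argues pointwise: given $\lambda\in\Lambda_0^*$, either $\lambda$ pairs integrally with all of $\Lambda_1$, whence $\lambda\in(\Lambda_0\cup\Lambda_1)^*=\Lambda$, or it pairs half-integrally with some (hence, after shifting by $\Lambda_0$, every) element of $\Lambda_1$, whence $\lambda-\delta\in\Lambda^*=\Lambda$ and so $\lambda\in\Lambda_2\cup\Lambda_3$. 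You instead count cosets: $[\Lambda:\Lambda_0]=2$, the identity $[\Lambda_0^*:\Lambda^*]=[\Lambda:\Lambda_0]$ gives $[\Lambda_0^*:\Lambda_0]=4$, and the four sets are four distinct cosets of $\Lambda_0$ inside $\Lambda_0^*$, so they exhaust it. This is valid and buys a cleaner global picture --- the discriminant group $\Lambda_0^*/\Lambda_0$ has order four with $\Lambda_1,\Lambda_2,\Lambda_3$ as its nontrivial cosets --- at the cost of importing the determinant--index machinery that the paper's elementary pairing argument avoids. Two small tightenings: surjectivity of $\lambda\mapsto\chi\oslash\lambda \bmod 2$ does not follow from self-duality alone; you need the standing hypothesis $\delta\notin\Lambda=\Lambda^*$, which is precisely the statement that some $\lambda\in\Lambda$ pairs non-integrally with $\delta$, i.e.\ oddly with $\chi$. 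And the distinctness of the four cosets, which you approach somewhat hesitantly with test vectors, is immediate set-theoretically: $\Lambda_0\cap\Lambda_1=\emptyset$ by definition (both nonempty by the surjectivity just noted), $(\Lambda+\delta)\cap\Lambda=\emptyset$ because $\delta\notin\Lambda$, and $(\Lambda_0+\delta)\cap(\Lambda_1+\delta)=\emptyset$ because $\Lambda_0\cap\Lambda_1=\emptyset$.
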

\begin{proof}
First, we prove $\Lambda_0^*\supset\Lambda_0\cup\Lambda_1\cup\Lambda_2\cup\Lambda_3$.
Since $\Lambda = \Lambda_0\cup\Lambda_1$ is self-dual,
\begin{align}
    \Lambda_0^*\supset (\Lambda_0 \cup \Lambda_1)^\ast = \Lambda_0 \cup \Lambda_1\ .
\end{align}
By taking the inner product with $\lambda_0\in \Lambda_0$, we check that any element in $\Lambda_0+\delta$ and $\Lambda_1 + \delta$ is in $\Lambda_0^*$.
Suppose $\lambda'_0 + \delta\in\Lambda_0 + \delta$.
The inner product between them is
\begin{align}
    (\lambda'_0 + \delta)\oslash\lambda_0 = \lambda'_0\oslash\lambda_0 + \delta\oslash\lambda_0\in \BZ\ ,
\end{align}
where the first term is even and the second term is integer by definition of $\Lambda_0$.
Suppose $\lambda_1 + \delta\in \Lambda_1 + \delta$. Then
\begin{align}
    (\lambda_1 + \delta)\oslash\lambda_0 = \lambda_1 \oslash\lambda_0 + \delta\oslash\lambda_0 \in\BZ\ ,
\end{align}
where the first term is integer due to the integral lattice $\Lambda$ and the second term is also integer by definition of $\Lambda_0$.
Then, any lattice vector in $\Lambda_0\cup\Lambda_1\cup\Lambda_2\cup\Lambda_3$ has an integer inner product with all vectors in $\Lambda_0$.
This means $\Lambda_0^*\supset\Lambda_0\cup\Lambda_1\cup\Lambda_2\cup\Lambda_3$.

From now on, we consider $\Lambda_0^*\subset\Lambda_0\cup\Lambda_1\cup\Lambda_2\cup\Lambda_3$.
Suppose $\lambda\in\Lambda_0^*$.
We will consider the following two cases: (i) $\lambda\oslash\lambda_1\in\BZ$ for all $\lambda_1\in\Lambda_1$ and (ii) there exist $\lambda_1\in\Lambda_1$ such that $\lambda\oslash\lambda_1\notin\BZ$.

In the first case (i), the vector $\lambda\in\Lambda_0^*$ has an integer inner product with a lattice vector in $\Lambda_0$ and $\Lambda_1$. Then, we obtain
\begin{align}
    \lambda\in(\Lambda_0\cup\Lambda_1)^* = \Lambda_0\cup\Lambda_1\ .
\end{align}

In the second case (ii), let us consider a lattice vector $\lambda_1\in\Lambda_1$ that satisfies $\lambda\oslash\lambda_1\notin\BZ$.
Since $2\lambda_1\in\Lambda_0$, we have $\lambda\oslash2\lambda_1\in\BZ$.
This implies 
\begin{align}
\label{eq:dual_half_odd}
    \lambda\oslash\lambda_1\in\BZ +\frac{1}{2}\ .
\end{align}
Let $\lambda'$ be a vector $\lambda' = \lambda-\delta$ and $\nu\in\Lambda_0\cup\Lambda_1$.
Then, the inner product between them is
\begin{align}
\label{eq:dual_inner_prod}
    \lambda'\oslash\nu = \lambda\oslash\nu - \delta\oslash\nu\ .
\end{align}
The first term is
\begin{align}
    \lambda\oslash\nu  \in
    \begin{dcases}
    \BZ &\quad (\nu\in\Lambda_0)\,,\\
    \BZ + \frac{1}{2} &\quad (\nu\in\Lambda_1)\,,
    \end{dcases}
\end{align}
where, in the second case $(\nu\in\Lambda_1)$, we use the fact that any lattice vector $\nu\in\Lambda_1$ can be shifted to the vector $\lambda_1\in\Lambda_1$ defined above by an appropriate vector $\rho\in\Lambda_0$: $\nu = \lambda_1 + \rho$. Since $\Lambda_0$ is integral and \eqref{eq:dual_half_odd} holds, we obtain the result $\lambda\oslash\nu\in\BZ + \frac{1}{2}$.
On the other hand, the second term in \eqref{eq:dual_inner_prod} is
\begin{align}
    \delta\oslash\nu \in
    \begin{dcases}
    \BZ & (\nu\in\Lambda_0)\,,\\
    \BZ + \frac{1}{2} & (\nu\in\Lambda_1)\,.
    \end{dcases}
\end{align}
Therefore, we have $\lambda'\oslash\nu\in\BZ$ for all vectors $\nu\in\Lambda_0\cup\Lambda_1$, implying $\lambda'=\lambda-\delta\in(\Lambda_0\cup\Lambda_1)^* = \Lambda_0\cup\Lambda_1$ or equivalently $\lambda\in\Lambda_2\cup\Lambda_3$.
Since the above two cases fill in all cases, if the vector $\lambda\in\Lambda_0^*$, then $\lambda\in\Lambda_0\cup\Lambda_1\cup\Lambda_2\cup\Lambda_3$. This results in $\Lambda_0^*\subset\Lambda_0\cup\Lambda_1\cup\Lambda_2\cup\Lambda_3$. 
\end{proof}

\begin{proposition}
Let $\Lambda_\CO$ and $\Lambda_\mathrm{NS}$ be the shifted lattices defined by \eqref{eq:lattice_def}. Then, $\Lambda_\CO$ and $\Lambda_\mathrm{NS}$ are self-dual with respect to the metric $g$.
\label{prop:self-dual_general}
\end{proposition}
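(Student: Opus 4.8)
The plan is to treat $\Lambda_\CO$ and $\Lambda_\mathrm{NS}$ as index–two overlattices of $\Lambda_0$ sitting inside $\Lambda_0^*$, and to pin down their duals using Proposition~\ref{prop:dual_lattice_lambda0}. First I would check that $\Lambda_\CO=\Lambda_0\cup\Lambda_3$ and $\Lambda_\mathrm{NS}=\Lambda_0\cup\Lambda_2$ are genuinely lattices, i.e.\ closed under addition. Since $2\delta=\chi\in\Lambda_0$ and $\Lambda_1+\Lambda_1\subset\Lambda_0$ (because $\chi\oslash(\lambda_1+\lambda_1')\in 2\BZ$), one gets $\Lambda_3+\Lambda_3\subset\Lambda_0$ and $\Lambda_2+\Lambda_2\subset\Lambda_0$ in both parity cases $\delta\oslash\delta\in 2\BZ$ and $\delta\oslash\delta\in 2\BZ+1$ of \eqref{eq:lattice_shift_23}; combined with $\Lambda_0+\Lambda_2\subset\Lambda_2$ and $\Lambda_0+\Lambda_3\subset\Lambda_3$, this gives closure. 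Since $\Lambda_0\subset\Lambda_\CO\subset\Lambda_0^*$ and $\Lambda_0^*=\Lambda_0\cup\Lambda_1\cup\Lambda_2\cup\Lambda_3$ consists of four cosets of $\Lambda_0$ (so $[\Lambda_0^*:\Lambda_0]=4$), each of $\Lambda_\CO,\Lambda_\mathrm{NS}$ has index $2$ over $\Lambda_0$, and hence so does its dual by the standard duality of indices $[A^*:B^*]=[B:A]$.

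Next I would verify that $\Lambda_\CO$ and $\Lambda_\mathrm{NS}$ are integral. Writing each lattice vector as an element of $\Lambda_0$ or $\Lambda_1$ plus possibly $\delta$ and expanding the inner product, the only contributions that can fail to be integers are $\lambda_1\oslash\delta=\tfrac12\,\chi\oslash\lambda_1\in\BZ+\tfrac12$ for $\lambda_1\in\Lambda_1$, the diagonal term $\delta\oslash\delta$, and the terms involving $\Lambda_0$, which are integral by the definition of $\Lambda_0$ and integrality of $\Lambda$. The fractional contributions $\lambda_1\oslash\delta$ always occur in pairs (or as $2\,\lambda_1\oslash\delta=\chi\oslash\lambda_1\in\BZ$), so they cancel modulo $\BZ$, and $\delta\oslash\delta\in\BZ$ by the non-anomaly hypothesis; hence $\Lambda_\CO\oslash\Lambda_\CO\subset\BZ$ and $\Lambda_\mathrm{NS}\oslash\Lambda_\mathrm{NS}\subset\BZ$, i.e.\ $\Lambda_\CO\subset\Lambda_\CO^*$ and $\Lambda_\mathrm{NS}\subset\Lambda_\mathrm{NS}^*$.

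Finally I would combine the two facts: $\Lambda_\CO^*$ contains $\Lambda_\CO$, and both are overlattices of $\Lambda_0$ of the same index $2$ inside $\Lambda_0^*$, so $\Lambda_\CO^*=\Lambda_\CO$; the same argument gives $\Lambda_\mathrm{NS}^*=\Lambda_\mathrm{NS}$. Alternatively, and more in the spirit of the proof of Proposition~\ref{prop:dual_lattice_lambda0}, one can rule out the two ``wrong'' cosets directly: for any $\lambda_1\in\Lambda_1$ and any vector of $\Lambda_2$ one exhibits an explicit element of $\Lambda_3\subset\Lambda_\CO$ — namely $\delta$ or $\lambda_1^{0}+\delta$ depending on the parity of $\delta\oslash\delta$ — whose inner product with it is a half-integer, using $\chi\oslash\lambda_1\in 2\BZ+1$; hence $\Lambda_\CO^*$ meets neither $\Lambda_1$ nor $\Lambda_2$ and must equal $\Lambda_0\cup\Lambda_3$, and symmetrically for $\Lambda_\mathrm{NS}$.

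I expect the main obstacle to be bookkeeping rather than anything conceptual: one must carry the two parity cases $\delta\oslash\delta\in 2\BZ$ versus $\delta\oslash\delta\in 2\BZ+1$ through the closure and integrality checks, since they interchange the roles of $\Lambda_2$ and $\Lambda_3$, and one must track precisely which cross-terms are half-integers so that the claimed cancellations are genuine. A minor point requiring care is that the inner product of a vector with a full coset $\Lambda_i$ is only well-defined modulo $\BZ$, which relies on $\Lambda_0$ being integral and contained in $\Lambda_0^*$; this is exactly what allows self-duality to be tested on coset representatives.
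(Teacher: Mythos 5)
Your argument is correct, but your primary route differs from the paper's. The paper proves self-duality of $\Lambda_\mathrm{NS}$ by sandwiching: integrality gives $\Lambda_\mathrm{NS}\subset\Lambda_\mathrm{NS}^*$, Proposition~\ref{prop:dual_lattice_lambda0} gives $\Lambda_\mathrm{NS}^*\subset\Lambda_0^*=\Lambda_0\cup\Lambda_1\cup\Lambda_2\cup\Lambda_3$, and then a case-by-case computation shows that every element of $\Lambda_1\cup\Lambda_3$ has half-odd inner product with some element of $\Lambda_2$, excluding those two cosets from the dual --- this is exactly the ``alternative'' you sketch at the end. Your main route instead counts indices: having checked closure and integrality, you observe $\Lambda_0\subset\Lambda_\CO\subset\Lambda_\CO^*\subset\Lambda_0^*$ with $[\Lambda_0^*:\Lambda_0]=4$ and $[\Lambda_\CO:\Lambda_0]=2$, so $[\Lambda_0^*:\Lambda_\CO^*]=[\Lambda_\CO:\Lambda_0]=2$ forces $\Lambda_\CO=\Lambda_\CO^*$. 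This is the standard discriminant argument for overlattices; it buys you a shorter proof that avoids re-deriving which cosets pair to half-integers (you only need the integrality check, not the ``half-odd'' exclusions), at the cost of invoking the index-duality identity $[A^*:B^*]=[B:A]$ and of needing the four cosets $\Lambda_0,\dots,\Lambda_3$ to be genuinely distinct --- which does hold here because $\delta\notin\Lambda$ and $\Lambda_1\neq\emptyset$ (if $\chi\oslash\lambda$ were even for all $\lambda\in\Lambda$ then $\delta\in\Lambda^*=\Lambda$), but is worth stating explicitly. The paper's computation is longer but fully self-contained and, as a by-product, records the half-odd pairings between cosets that are reused in the proof of Proposition~\ref{prop:norm_general}. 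Your integrality verification (the two fractional cross-terms $\lambda_1\oslash\delta$ appearing in pairs and $\delta\oslash\delta\in\BZ$ by the non-anomaly hypothesis) is sound in both parity cases.
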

\begin{proof}
We divide the original lattice $\Lambda$ into two parts $\Lambda = \Lambda_0 \cup \Lambda_1$. 
The dual lattice of $\Lambda_0$ is given by Proposition \ref{prop:dual_lattice_lambda0}:
\begin{align}
    \Lambda_0^* = \Lambda_0\cup\Lambda_1\cup\Lambda_2\cup\Lambda_3\ .
\end{align}
First, we consider the shifted lattice $\Lambda_\mathrm{NS} = \Lambda_0\cup\Lambda_2$.
It is straightforward to verify that the shifted lattice is integral, so we have
\begin{align}
    \Lambda_0\cup\Lambda_2 = \Lambda_\mathrm{NS}\subset \Lambda_\mathrm{NS}^*\ .
\end{align}
The dual lattice $\Lambda_\mathrm{NS}^*$ of the shifted lattice $\Lambda_\mathrm{NS}$ satisfies
\begin{align}
\label{eq:constraint_selfdual_general}
    \Lambda^*_\mathrm{NS} = (\Lambda_0\cup\Lambda_2)^* \subset \Lambda_0^* = \Lambda_0\cup\Lambda_1\cup\Lambda_2\cup\Lambda_3\ .
\end{align}
To specify the dual lattice $\Lambda_\mathrm{NS}^*$, we consider the inner products between the shifted lattice $\Lambda_\mathrm{NS}$ and $\Lambda_i$ ($i=0,1,2,3$).
Since $\Lambda_0$ has integer inner products with $\Lambda_i$ for all $i$ by Proposition \ref{prop:dual_lattice_lambda0}, we only need to examine the inner products between $\Lambda_2$ and $\Lambda_i$.

Suppose $\lambda\in\Lambda_2$ and $\lambda_i\in \Lambda_i$.
For $i=0,2$, we find the inner product $\lambda\oslash\lambda_i \in \BZ$ ($i=0,2$) because $\Lambda_\mathrm{NS}=\Lambda_0\cup\Lambda_2$ is an integral lattice.
For $i=1$, we consider two cases: $\delta\oslash\delta\in2\BZ$ and $\delta\oslash\delta\in2\BZ+1$.
In the case with $\delta\oslash\delta\in2\BZ$, a lattice vector is given by $\lambda_2=\lambda_1' +\delta\in\Lambda_1 + \delta$.
In this case, we obtain the following:
\begin{align}
    (\lambda_1' + \delta)\oslash\lambda_1 = \lambda_1'\oslash\lambda_1 + \delta\oslash\lambda_1 \in \BZ+\frac{1}{2}\ ,
\end{align}
which follows from the definition of $\Lambda_1$.
For the other case ($\delta\oslash\delta\in2\BZ+1$), we have $\lambda_2 = \lambda_0' + \delta\in\Lambda_0 + \delta$. Then, we obtain
\begin{align}
    (\lambda_0' +\delta)\oslash\lambda_1 = \lambda_0'\oslash\lambda_1 + \delta\oslash\lambda_1\in \BZ+\frac{1}{2}\ ,
\end{align}
where the first term is integer due to the integrality of the lattice $\Lambda=\Lambda_0\cup\Lambda_1$ and the second term is half-odd by definition of $\Lambda_1$.
For $i=3$, there are also two cases: $\delta\oslash\delta\in2\BZ$ and $\delta\oslash\delta\in2\BZ+1$.
For $\delta\oslash\delta\in 2\BZ$, we can write $\lambda_2 =\lambda_1 + \delta \in\Lambda_1 + \delta$ and $\lambda_3 = \lambda_0 + \delta\in\Lambda_0 + \delta$.
Then, the inner product between them is
\begin{align}
\label{eq:innerprd_self_general}
    \lambda_2\oslash\lambda_3 = \lambda_1\oslash\lambda_0 + \lambda_1\oslash\delta + \delta\oslash\lambda_0 + \delta\oslash\delta\in \BZ + \frac{1}{2}\ ,
\end{align}
where the second term is half-odd by definition of $\Lambda_1$.
For the other case ($\delta\oslash\delta\in2\BZ+1$),
we have $\lambda_2 =\lambda_0+\delta\in\Lambda_0 + \delta$ and $\lambda_3 = \lambda_1 + \delta\in\Lambda_1 + \delta$. Then we get the same result as \eqref{eq:innerprd_self_general}: $\lambda_2\oslash\lambda_3 \in \BZ+\frac{1}{2}$.

Then, the inner product between an element in $\Lambda_2$ and an element in $\Lambda_1\cup\Lambda_3$ is always half-odd. 
Therefore, any lattice vector in $\Lambda_1\cup \Lambda_3$ is not an element of the dual lattice $\Lambda_\mathrm{NS}^*$, so we can strengthen the constraint \eqref{eq:constraint_selfdual_general}:
\begin{align}
    \Lambda_\mathrm{NS}^* \subset \Lambda_0\cup\Lambda_2\ .
\end{align}
In summary, we have the following relation:
\begin{align}
    \Lambda_0\cup\Lambda_2\subset \Lambda_\mathrm{NS}^*\subset\Lambda_0\cup\Lambda_2\ ,
\end{align}
which implies that the integral lattice $\Lambda_\mathrm{NS}$ is self-dual: $\Lambda_\mathrm{NS}^* = \Lambda_0\cup\Lambda_2 = \Lambda_\mathrm{NS}$.
A similar discussion can be applied for the other shifted lattice $\Lambda_\CO$.
\end{proof}

From Proposition \ref{prop:self-dual_general}, the following corollary is obvious.

\begin{corollary}
The shifted lattices $\Lambda_\CO$ and $\Lambda_\mathrm{NS}$ are integral with respect to the metric $g$.
\label{corollary:integral}
\end{corollary}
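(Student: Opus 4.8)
The plan is to read this off directly from Proposition \ref{prop:self-dual_general} using nothing more than the definitions recalled at the start of section \ref{ss:general_modification}. There a lattice $L$ (with respect to the metric $g$) is declared \emph{integral} exactly when $L\subseteq L^*$, and \emph{self-dual} exactly when $L=L^*$; in particular every self-dual lattice is automatically integral, since $L=L^*$ trivially gives $L\subseteq L^*$. Proposition \ref{prop:self-dual_general} already asserts $\Lambda_\CO=\Lambda_\CO^*$ and $\Lambda_\mathrm{NS}=\Lambda_\mathrm{NS}^*$, so the inclusions $\Lambda_\CO\subseteq\Lambda_\CO^*$ and $\Lambda_\mathrm{NS}\subseteq\Lambda_\mathrm{NS}^*$ are immediate, and these are precisely the two claims of the corollary. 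So the write-up should be a single sentence invoking Proposition \ref{prop:self-dual_general}.

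If instead one wanted a proof that does not route through self-duality, the integrality can be extracted directly from the inner-product bookkeeping already performed inside the proof of Proposition \ref{prop:self-dual_general}. That proof records: $\lambda\oslash\lambda'\in\BZ$ for $\lambda,\lambda'\in\Lambda_0$; $\lambda\oslash\lambda'\in\BZ$ for $\lambda\in\Lambda_0$ and $\lambda'\in\Lambda_i$ with $i=1,2,3$ (this is Proposition \ref{prop:dual_lattice_lambda0}); and $\lambda\oslash\lambda'\in\BZ$ for $\lambda,\lambda'\in\Lambda_2$, with the analogous statement for $\Lambda_3$. Restricting to the index pairs $\{0,2\}$ for $\Lambda_\mathrm{NS}$ and $\{0,3\}$ for $\Lambda_\CO$, every pairwise inner product within each shifted lattice lies in $\BZ$, which is the definition of integrality.

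There is no real obstacle to overcome here: the corollary is a one-line consequence of the preceding proposition, and the alternative argument above merely repackages intermediate computations from that proof. I would therefore present only the short first route in the paper and omit the longer restatement.
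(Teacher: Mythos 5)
Your main route is exactly what the paper does: it states that the corollary is obvious from Proposition \ref{prop:self-dual_general}, since self-duality $L=L^*$ trivially implies the inclusion $L\subseteq L^*$ defining integrality. Your proposal is correct and matches the paper's argument; the alternative direct verification you sketch is a reasonable (and arguably cleaner, since the paper's proof of Proposition \ref{prop:self-dual_general} itself invokes integrality of $\Lambda_\mathrm{NS}$ as ``straightforward to verify'') but unnecessary addition.
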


Both the shifted lattices $\Lambda_\CO$ and $\Lambda_\mathrm{NS}$ share the self-duality with respect to $g$.
The following proposition clarifies the important difference between them.

\begin{proposition}
The shifted lattice $\Lambda_\CO$ is even and $\Lambda_\mathrm{NS}$ is odd with respect to the metric $g$.
\label{prop:norm_general}
\end{proposition}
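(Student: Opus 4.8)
The plan is to compute directly the norm $\lambda\oslash\lambda$ on each of the two cosets that make up $\Lambda_\CO$ and $\Lambda_\mathrm{NS}$, organising the computation along the same two cases $\delta\oslash\delta\in 2\BZ$ and $\delta\oslash\delta\in 2\BZ+1$ that already appear in the definition \eqref{eq:lattice_shift_23}. The only inputs needed are: the original lattice $\Lambda=\Lambda_0\cup\Lambda_1$ is even, so $\lambda\oslash\lambda\in 2\BZ$ for all $\lambda\in\Lambda$; the grading \eqref{eq:lattice_divide}, i.e.\ $\chi\oslash\lambda\equiv i\bmod 2$ for $\lambda\in\Lambda_i$ $(i=0,1)$; and the identity $\chi=2\delta$, which means that although $\delta\oslash\lambda$ need not be an integer, the combination $2\,\delta\oslash\lambda=\chi\oslash\lambda$ always is, with parity fixed by the grading. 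Self-duality of both $\Lambda_\CO$ and $\Lambda_\mathrm{NS}$ is already granted by Proposition~\ref{prop:self-dual_general}, so only the even/odd dichotomy remains.

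First I would handle $\Lambda_\CO=\Lambda_0\cup\Lambda_3$. On $\Lambda_0$ evenness is immediate since $\Lambda$ is even. For the coset $\Lambda_3$ I expand, for $\lambda$ in the relevant $\Lambda_0$ or $\Lambda_1$, $(\lambda+\delta)\oslash(\lambda+\delta)=\lambda\oslash\lambda+\chi\oslash\lambda+\delta\oslash\delta$ using $2\delta=\chi$. If $\delta\oslash\delta\in 2\BZ$ then $\Lambda_3=\Lambda_0+\delta$ and all three summands are even; if $\delta\oslash\delta\in 2\BZ+1$ then $\Lambda_3=\Lambda_1+\delta$ and the sum is $\text{even}+\text{odd}+\text{odd}$, again even. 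Hence every vector of $\Lambda_\CO$ has even norm, and together with Proposition~\ref{prop:self-dual_general} this makes $\Lambda_\CO$ even self-dual.

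Then I would run the identical computation for $\Lambda_\mathrm{NS}=\Lambda_0\cup\Lambda_2$, where by construction the coset $\Lambda_2$ carries odd norms: if $\delta\oslash\delta\in 2\BZ$ then $\Lambda_2=\Lambda_1+\delta$ and the norm of $\lambda_1+\delta$ is $\text{even}+\text{odd}+\text{even}=\text{odd}$; if $\delta\oslash\delta\in 2\BZ+1$ then $\Lambda_2=\Lambda_0+\delta$ and the norm of $\lambda_0+\delta$ is $\text{even}+\text{even}+\text{odd}=\text{odd}$. To conclude that $\Lambda_\mathrm{NS}$ is odd it suffices to exhibit one odd-norm vector, so I would check that $\Lambda_2$ is non-empty: in the case $\delta\oslash\delta\in 2\BZ+1$ we have $\delta\in\Lambda_0+\delta$, while in the case $\delta\oslash\delta\in 2\BZ$ one invokes the standing hypothesis $\delta\notin\Lambda$ to see $\Lambda_1\neq\emptyset$ (otherwise $\chi\oslash\lambda$ would be even for all $\lambda\in\Lambda$, forcing $\delta\in\Lambda^*=\Lambda$). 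Combined with integrality from Corollary~\ref{corollary:integral} and self-duality from Proposition~\ref{prop:self-dual_general}, this shows $\Lambda_\mathrm{NS}$ is odd self-dual.

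There is no genuine obstacle here; the content is a two-line parity count. The single point demanding care is the one flagged above: one must never treat $\delta\oslash\lambda$ as an integer — generically it is a half-integer — but always reduce to $\chi\oslash\lambda=2\,\delta\oslash\lambda$, whose parity is dictated by \eqref{eq:lattice_divide}; and one must separately verify that the odd coset is non-empty in the case $\delta\oslash\delta\in 2\BZ$, which is precisely where the hypothesis $\delta=\chi/2\notin\Lambda$ enters.
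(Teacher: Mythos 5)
Your proof is correct and follows essentially the same route as the paper's: expand $(\lambda+\delta)\oslash(\lambda+\delta)=\lambda\oslash\lambda+\chi\oslash\lambda+\delta\oslash\delta$, split into the cases $\delta\oslash\delta\in2\BZ$ and $\delta\oslash\delta\in2\BZ+1$, and read off the parity from the evenness of $\Lambda$ and the grading \eqref{eq:lattice_divide}. Your additional check that $\Lambda_2$ is non-empty (via $\delta\notin\Lambda=\Lambda^*$ forcing $\Lambda_1\neq\emptyset$) is a point the paper leaves implicit, but it does not change the argument.
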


\begin{proof}
From Corollary \ref{corollary:integral}, both $\Lambda_\CO$ and $\Lambda_\mathrm{NS}$ are integral, so we can classify them into even and odd lattices.
The norm of a lattice vector in $\Lambda_0$ is even due to the evenness of the original lattice $\Lambda$.
In what follows, we consider the norm of an element in $\Lambda_2$ and $\Lambda_3$.

First, we focus on the lattice $\Lambda_\CO = \Lambda_0\cup\Lambda_3 = \Lambda_0\cup(\Lambda_0+\delta)$ when $\delta\oslash\delta\in 2\BZ$.
Suppose $\lambda_0 + \delta\in\Lambda_0 + \delta$.
Its norm is given by 
\begin{align}
    (\lambda_0 + \delta)\oslash(\lambda_0 + \delta) = \lambda_0\oslash\lambda_0 + \delta\oslash\delta+ \chi\oslash\lambda_0 \in 2\BZ\ ,
\end{align}
where the first term is even due to the evenness of $\Lambda$, the second term is even by assumption, and the third term is also even due to the definition of $\Lambda_0$.
If $\delta\oslash\delta\in 2\BZ+1$, then the shifted lattice $\Lambda_\CO$ is $\Lambda_\CO = \Lambda_0\cup(\Lambda_1 + \delta)$.
Suppose $\lambda_1 + \delta\in\Lambda_1 + \delta$.
The norm of this vector is 
\begin{align}
    (\lambda_1 + \delta)\oslash(\lambda_1 + \delta) = \lambda_1\oslash\lambda_1 + \delta\oslash\delta +\chi\oslash\lambda_1 \in 2\BZ\ ,
\end{align}
where the first term is even due to the evenness of $\Lambda$ and the second term is odd by assumption, the third term is also odd by definition of $\Lambda_1$.
Since the norm of any lattice vector in $\Lambda_\CO$ is always even, the shifted lattice $\Lambda_\CO$ is an even lattice.

Next, we treat the other shifted lattice $\Lambda_\mathrm{NS}$. For $\delta\oslash\delta\in2\BZ$, the shifted lattice is given by $\Lambda_\mathrm{NS} = \Lambda_0\cup(\Lambda_1 + \delta)$. 
Suppose $\lambda_1 + \delta \in \Lambda_1 + \delta$.
The norm of this vector is
\begin{align}
    (\lambda_1 + \delta) \oslash (\lambda_1 + \delta) = \lambda_1\oslash\lambda_1 + \delta\oslash\delta + \chi\oslash\lambda_1 \in 2\BZ + 1\ ,
\end{align}
where the first term is even by the evenness of $\Lambda$ and the second term is even by assumption, the third term is odd due to the definition of $\Lambda_1$.
For $\delta\oslash\delta\in 2\BZ + 1$, $\Lambda_\mathrm{NS} = \Lambda_0 \cup (\Lambda_0 + \delta)$. Suppose a lattice vector $\lambda_0 + \delta \in \Lambda_0 + \delta$. Its norm is
\begin{align}
    (\lambda_0 + \delta)\oslash(\lambda_0 + \delta) = \lambda_0 \oslash\lambda_0 + \delta\oslash\delta + \chi\oslash\lambda_0\in2\BZ +1\ ,
\end{align}
where $\lambda_0\oslash\lambda_0\in2\BZ$ by the evenness of $\Lambda$ and the second term is odd by assumption, the third term is even by definition of $\Lambda_0$.
The norm of any element in $\Lambda_2$ is odd, so the shifted lattice $\Lambda_\mathrm{NS}$ is an odd lattice with respect to the metric $g$.
\end{proof}

Combining Proposition \ref{prop:self-dual_general} and \ref{prop:norm_general}, we arrive at the following theorem.
\begin{theorem}
Let $\Lambda_\CO$ and $\Lambda_\mathrm{NS}$ be the shifted lattices defined by \eqref{eq:lattice_def}.
Then $\Lambda_\CO$ is even self-dual and $\Lambda_\mathrm{NS}$ is odd self-dual with respect to the metric $g$.
\label{prop:deformation_theorem}
\end{theorem}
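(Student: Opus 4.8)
The plan is to assemble the statement directly from the two structural results already in hand, since "even self-dual" means precisely "self-dual and even" and "odd self-dual" means "self-dual and odd". Proposition~\ref{prop:self-dual_general} establishes that both shifted lattices $\Lambda_\CO = \Lambda_0\cup\Lambda_3$ and $\Lambda_\mathrm{NS} = \Lambda_0\cup\Lambda_2$ from \eqref{eq:lattice_def} are self-dual with respect to $g$. Proposition~\ref{prop:norm_general} establishes that $\Lambda_\CO$ is even and $\Lambda_\mathrm{NS}$ is odd with respect to the same metric. Intersecting these two conclusions lattice by lattice gives exactly the theorem: $\Lambda_\CO$ is even self-dual and $\Lambda_\mathrm{NS}$ is odd self-dual.

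The one point to flag in the write-up is the logical ordering of the ingredients. The even/odd dichotomy is only defined for integral lattices, so before invoking Proposition~\ref{prop:norm_general} one should note that integrality is already guaranteed by Corollary~\ref{corollary:integral} (equivalently, it follows from self-duality, since a self-dual lattice satisfies $\Lambda\subset\Lambda^*=\Lambda$). Thus the chain I would spell out is: self-duality via Proposition~\ref{prop:self-dual_general}, hence integrality via Corollary~\ref{corollary:integral}, hence the even/odd classification applies, hence the parity is pinned down by Proposition~\ref{prop:norm_general}. Concatenating these for each of the two lattices yields the two halves of the statement.

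I do not expect any real obstacle here, since all the substantive work has been done upstream — the computation of $\Lambda_0^*$ in Proposition~\ref{prop:dual_lattice_lambda0}, the half-integer inner-product argument ruling out $\Lambda_1\cup\Lambda_3$ from $\Lambda_\mathrm{NS}^*$ in Proposition~\ref{prop:self-dual_general}, and the case split on $\delta\oslash\delta \bmod 2$ in Proposition~\ref{prop:norm_general}. The only care needed is to confirm that the standing hypotheses of Section~\ref{ss:general_modification} ($\Lambda$ even self-dual, $\chi\in\Lambda$ with $\delta=\chi/2\notin\Lambda$, and $\delta\oslash\delta\in\BZ$, so that $\chi\oslash\chi\in 4\BZ$) are in force, which they are by assumption. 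The proof is therefore a one-line combination of the preceding propositions and corollary.
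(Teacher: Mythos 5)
Your proposal is correct and matches the paper exactly: the theorem is stated there as an immediate combination of Proposition~\ref{prop:self-dual_general} (self-duality) and Proposition~\ref{prop:norm_general} (even/odd parity), with integrality supplied by Corollary~\ref{corollary:integral}. Your extra remark about ordering the ingredients so that the even/odd dichotomy is only invoked after integrality is established is a reasonable refinement but does not change the argument.
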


Theorem \ref{prop:deformation_theorem} guarantees that lattice modifications of an even self-dual lattice $\Lambda$ yield new even self-dual lattice $\Lambda_\CO$ and odd self-dual lattice $\Lambda_\mathrm{NS}$.
In terms of CFTs, $\Lambda_\CO$ and $\Lambda_\mathrm{NS}$ correspond to the momentum lattices of the new bosonic and fermionic CFTs, which are obtained by orbifold and fermionization of the original bosonic CFT.
Theorem \ref{prop:deformation_theorem} suggests that this procedure of the orbifold and fermionization holds for any even self-dual lattice $\Lambda$ given a non-anomalous $\BZ_2$ symmetry.
In section \ref{sec:gauging_code_CFT}, we apply the lattice formulation of orbifold and fermionization to Narain CFTs constructed from quantum stabilizer codes and systematically compute the partition functions of orbifold and fermionized theory.

\section{Gauging $\BZ_2$ symmetry of Narain code CFTs}
\label{sec:gauging_code_CFT}

This section is devoted to the orbifold and fermionization of Narain code CFTs. Narain code CFTs are a class of Narain CFTs constructed from quantum codes. In section \ref{ss:review_code_cft}, we review the stabilizer formalism of quantum codes based on finite fields $\BF_p$ and the construction of Narain code CFTs for the cases with an odd prime number $p$ and $p=2$.
Section \ref{ss:gauging_code_oddprime} and \ref{ss:gauging_code_p=2} consider the $\BZ_2$ gauging for the case with an odd prime $p$ and $p=2$, respectively.
We also give a systematic way to compute the partition functions of the orbifold and fermionized theory.

\subsection{Narain code CFTs}
\label{ss:review_code_cft}

In this section, we review the construction of bosonic CFTs from quantum stabilizer codes \cite{Dymarsky:2020qom,Kawabata:2022jxt}.
Let $\BF_p = \BZ/p\,\BZ$ be a finite field for a prime number $p$ and $\{\, \ket{x}\,\big|\, x\in \BF_p\, \}$ be an orthonormal basis for a Hilbert space $\BC^p$. 
The Pauli group acting on $\BC^p$ is generated by the operators $X$ and $Z$ which satisfy the relations \cite{Gottesman:1997zz}: $X\,\ket{x} = \ket{x+1},~  Z\,\ket{x} = \omega^x\,\ket{x}$,
where $\omega = \exp(2\pi \i/p)$ and $x$ is defined modulo $p$.
$X$ and $Z$ are generalizations of the Pauli matrices acting on a qubit system ($p=2$) to a qudit system ($p\ge 2$).
Let $\Ba = (a_1, \cdots, a_n), \Bb = (b_1, \cdots, b_n)$ be vectors in $\BF_p^n$ and define an operator $g(\Ba, \Bb)$ acting on $n$-qudit system by
\begin{align}
    g(\Ba, \Bb) \equiv X^{a_1} Z^{b_1}\otimes \cdots \otimes X^{a_n} Z^{b_n}\ .
\end{align}
Here we omit a phase factor for $g(\Ba, \Bb)$ for simplicity.
A pair of two error operators do not commute in general but satisfy
\begin{align}
    g(\Ba, \Bb) \, g(\Ba', \Bb')  = \omega^{-\Ba\cdot \Bb' + \Ba'\cdot \Bb}\,g(\Ba', \Bb') \, g(\Ba, \Bb) \ ,
\end{align}
where $\Ba\cdot\Bb = \sum_{i=1}^n a_i\,b_i$.
Thus $g(\Ba, \Bb)$ and $g(\Ba', \Bb')$ commute if $\Ba\cdot \Bb' - \Ba'\cdot \Bb = 0$.
An $[[n, k]]$ quantum stabilizer code $V_\mathsf{S}$ is a subspace of $(\BC^p)^{\otimes n}$ fixed by the stabilizer group $\mathsf{S} = \langle\, g_1, \cdots, g_{n-k}\,\rangle$ generated by a commuting set of operators $g_i = g(\Ba^{(i)}, \Bb^{(i)})~(i=1, \cdots, n-k)$ \cite{Gottesman:1996rt,Gottesman:1998se,knill1996non,knill1996group,rains1999nonbinary}:
\begin{align}
    V_\mathsf{S} = \left\{\, \ket{\psi}\in (\BC^p)^{\otimes n}\,\, \big|\,\, g\,\ket{\psi} = \ket{\psi},\, \forall g \in \mathsf{S} \,\right\} \ .
\end{align}
Since $V_\mathsf{S}$ is the simultaneous eigenspace of the $n-k$ independent operators $g_i$, it is a $p^k$-dimensional subspace in $(\BC^p)^{\otimes n}$.
The stabilizer group $\mathsf{S}$ can be encoded into a $(n-k) \times 2n$ matrix of rank $n-k$ over $\BF_p$:
\begin{align}
    \SH = 
    \left[
        \begin{array}{c|c}
    	~\Ba^{(1)} ~& ~\Bb^{(1)}~ \\
            ~\Ba^{(2)} ~& ~\Bb^{(2)}~ \\
            ~\vdots ~& ~\vdots~ \\
            ~\Ba^{(n-k)}~ & ~\Bb^{(n-k)} ~
        \end{array}
    \right] \ .
\end{align}
The vertical line indicates that the left and right vectors in each row correspond to the $X$ and $Z$ operators for the stabilizer generator $g(\Ba, \Bb)$, respectively.
The commutativity of the generators $g_i$ yields the condition:
\begin{align}\label{stabilizer_condition}
    \SH\,\SW\,\SH^T = 0 \quad \text{mod}~p \ ,
    \qquad \SW 
        =
        \left[
        \begin{array}{cc}
    	0 & I_{n} \\
            -I_n & 0
        \end{array}
        \right]\ ,
\end{align}
where $I_n$ is the $n \times n$ identity matrix and $0$ is a zero matrix.

The matrix representation of the stabilizer generators manifests an intriguing relation between quantum stabilizer codes and classical codes.
An $[n,k]$ linear code $C$ over $\BF_p$ is defined as a $p^k$-dimensional linear subspace of $\BF_p^{n}$ generated by $k$ row vectors of a $k \times n$ matrix $G$ of rank $k$ over $\BF_p$:
\begin{align}
    C = \left\{\, c\in \BF_p^n\,\big|\, c = x\,G, ~x \in \BF_p^k\,\right\} \ .
\end{align}
Here, $G$ is called the generator matrix of the code $C$.
Now consider a classical code $\CC$ whose generator matrix is given by the matrix $G_\SH = \SH$ associated with a stabilizer code:
\begin{align}\label{eq:CC-stabilizer}
    \CC = \left\{\, c\in \BF_p^{2n}\,\big|\, c = x\,G_\SH, ~x \in \BF_p^{n-k}\,\right\} \ .
\end{align}
Having the construction of Narain CFTs in mind, we introduce the Lorentzian inner product $\odot$ for a pair of codewords $c, c' \in \CC$ by
\begin{align}\label{eq:eta-def}
     u\odot v = u\,\eta\,v^T \ , \qquad
    \eta 
        =
        \left[
        \begin{array}{cc}
    	0 & I_{n} \\
            I_n & 0
        \end{array}
        \right] \ .
\end{align}
The dual code $\CC^\perp$ with respect to the Lorentzian inner product $\odot$ is defined by
\begin{align}
    \CC^\perp =  \left\{\, c'\in \BF_p^{2n}\,\big|\, c'\odot c = 0~\text{mod}~p\,, ~c \in \CC\,\right\} \ .
\end{align}
The Lorentzian linear code $\CC$ is called \emph{self-orthogonal} if $\CC\subset\CC^\perp$ and \emph{self-dual} if $\CC = \CC^\perp$.
For the binary case $p=2$, we can further classify self-orthogonal codes.
In particular, we call $\CC$ \emph{doubly-even} if any codeword $c\in\CC$ satisfies $c\cdot c\in4\BZ$.

We can further extend the code $\CC$ to a $2n$-dimensional Lorentzian lattice $\Lambda(\CC)$ through the so-called Construction A \cite{conway2013sphere}:
\begin{align}\label{eq:constructon-A-lattice}
    \Lambda(\CC) = \left\{ \frac{c + p\,m}{\sqrt{p}}\in \BR^{n,n}\; \middle|\; c\in \CC, ~ m\in \BZ^{2n} \right\}\ ,
\end{align}
which is equipped with the Lorentzian inner product $\odot$.
In the construction of Narain CFTs from Lorentzian even self-dual lattices, the following theorems are essential.

\begin{theorem}[{\cite{Yahagi:2022idq}}]
Let $p$ be an odd prime and $\CC$ a self-dual code over $\BF_p$ with respect to $\eta$. Then, the Construction A lattice $\Lambda(\CC)$ is even self-dual with respect to $\eta$.
\label{prop:even_self_dual_odd}
\end{theorem}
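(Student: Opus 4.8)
The plan is to verify the two defining properties of an even self-dual lattice directly from the definitions, using that $\CC$ is self-dual over $\BF_p$ with respect to $\eta$. First I would check that $\Lambda(\CC)$ is integral with respect to $\odot$. Take two vectors $\frac{c+pm}{\sqrt{p}}$ and $\frac{c'+pm'}{\sqrt{p}}$ in $\Lambda(\CC)$ with $c,c'\in\CC$ and $m,m'\in\BZ^{2n}$. Their inner product is $\frac{1}{p}(c+pm)\odot(c'+pm') = \frac{1}{p}\,c\odot c' + (m\odot c' + c\odot m' + p\,m\odot m')$. The second group of terms is manifestly an integer, and $c\odot c'\in p\BZ$ because $\CC\subset\CC^\perp$, so the whole thing lies in $\BZ$. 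This shows $\Lambda(\CC)\subset\Lambda(\CC)^*$.

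Next I would compute the covolume (discriminant) to pin down self-duality. The lattice $\frac{1}{\sqrt p}\BZ^{2n}$ has covolume $p^{-n}$ with respect to the standard measure, and since $\eta$ has determinant $\pm 1$ this is unchanged measured with $\odot$. The sublattice $\Lambda(\CC)\subset \frac{1}{\sqrt p}\BZ^{2n}$ has index equal to $|\CC| = p^{n}$ (as $\CC$ is an $[2n,n]$ code, being self-dual), so $\mathrm{covol}(\Lambda(\CC)) = p^{n}\cdot p^{-n} = 1$. An integral lattice of covolume $1$ is self-dual; combined with the previous paragraph this gives $\Lambda(\CC) = \Lambda(\CC)^*$. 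Alternatively, one can argue set-theoretically: an element $v\in\Lambda(\CC)^*$ has integer pairing with all of $\frac{p}{\sqrt p}\BZ^{2n} = \sqrt p\,\BZ^{2n}\subset\Lambda(\CC)$, which forces $v\in\frac{1}{\sqrt p}\BZ^{2n}$, say $v = \frac{d}{\sqrt p}$ with $d\in\BZ^{2n}$; then pairing with $\frac{c}{\sqrt p}$ for $c\in\CC$ gives $\frac1p\,d\odot c\in\BZ$, i.e. $d\bmod p\in\CC^\perp = \CC$, so $v\in\Lambda(\CC)$.

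Finally I would check evenness: for $v = \frac{c+pm}{\sqrt p}$, $v\odot v = \frac1p(c+pm)\odot(c+pm) = \frac1p\,c\odot c + 2\,m\odot c + p\,m\odot m$. The last two terms are even integers (the middle one is $2$ times an integer, and $p\,m\odot m$ — here one uses that $\eta$ has zero diagonal blocks so $m\odot m = 2\sum_i m_i m_{n+i}$ is already even, hence $p\,m\odot m$ is even). For the first term, self-duality alone only gives $c\odot c\in p\BZ$, so I need the sharper statement that $c\odot c\in 2p\BZ$ for $c\in\CC$ when $p$ is odd. This is where the hypothesis that $p$ is \emph{odd} enters essentially: since $\eta$ has vanishing diagonal, $c\odot c = 2\sum_{i=1}^n c_i c_{n+i}$ is automatically even as an integer, and it is also divisible by $p$ from self-orthogonality; as $\gcd(2,p)=1$ it is divisible by $2p$. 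Hence $\frac1p\,c\odot c\in 2\BZ$ and $v\odot v\in 2\BZ$.

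The main obstacle is really just organizing the arithmetic cleanly — in particular remembering to exploit the off-diagonal form of $\eta$ to get the extra factor of $2$ for free, which is what makes the odd-$p$ case work without any doubly-even condition (contrast with the $p=2$ case treated separately in the paper). The index/covolume bookkeeping is the other place to be careful, but it is routine once one records that a self-dual code over $\BF_p$ has exactly $p^n$ codewords.
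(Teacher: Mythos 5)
Your proof is correct. Note first that the paper does not prove Theorem \ref{prop:even_self_dual_odd} itself; the statement is imported from \cite{Yahagi:2022idq}, so there is no in-text argument to compare against. Your argument is the standard Construction A computation and every step goes through: integrality follows from self-orthogonality of $\CC$ modulo $p$; self-duality follows either from the covolume count or from your second, set-theoretic argument (which is the cleaner of the two here, since it uses only $\sqrt{p}\,\BZ^{2n}\subset\Lambda(\CC)$ and $\CC^\perp=\CC$, with no discriminant bookkeeping); and your evenness check correctly isolates the one place where oddness of $p$ is essential, namely that $c\odot c$ is simultaneously even (because $\eta$ has vanishing diagonal blocks, so $c\odot c=2\sum_{i}c_i c_{n+i}$) and divisible by $p$ (self-orthogonality), hence divisible by $2p$ since $\gcd(2,p)=1$. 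This is indeed exactly why the $p=2$ case (Theorem \ref{prop:even_self_dual_p=2}) requires the extra doubly-even hypothesis.

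One small bookkeeping remark on the covolume route: the index $\bigl[\tfrac{1}{\sqrt p}\BZ^{2n}:\Lambda(\CC)\bigr]$ equals $p^{2n}/|\CC|$, not $|\CC|$ in general; the two coincide here only because a self-dual code is half-dimensional, $|\CC|=p^{n}$. The resulting value $p^{n}$, and hence $\mathrm{covol}(\Lambda(\CC))=1$, is correct, so this does not affect the argument.
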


\begin{theorem}[{\cite{Dymarsky:2020qom,Kawabata:2022jxt}}]
    Let $\CC$ be a doubly-even self-dual code over $\BF_2$ with respect to $\eta$. Then, the Construction A lattice $\Lambda(\CC)$ is even self-dual with respect to $\eta$.
    \label{prop:even_self_dual_p=2}
\end{theorem}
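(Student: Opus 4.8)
The plan is to verify the two defining properties of the Construction~A lattice separately: that $\Lambda(\CC)$ is self-dual with respect to $\eta$, and that it is even. The self-duality will use only that $\CC$ is a self-dual code and runs exactly parallel to the computation behind Theorem~\ref{prop:even_self_dual_odd}; the doubly-even hypothesis, which plays no role for odd $p$, enters only in the evenness step. I would work throughout with the explicit parametrization $v=(c+2m)/\sqrt2$ with $c\in\CC$, $m\in\BZ^{2n}$, noting that $\Lambda(\CC)$ is a full-rank lattice since it contains $\sqrt2\,\BZ^{2n}$ (take $c=0$), so that dual and self-dual make sense.

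For self-duality I would first prove the general Construction~A identity $\Lambda(\CC)^*=\Lambda(\CC^\perp)$ with respect to $\odot$. Given $y\in\BR^{n,n}$, requiring $y\odot\frac{2m}{\sqrt2}=\sqrt2\,(y\,\eta\,m^T)\in\BZ$ for all $m\in\BZ^{2n}$ forces $\sqrt2\,y\in\BZ^{2n}$, because $\eta$ is an integral unimodular matrix and so $\eta m^T$ runs over all of $\BZ^{2n}$; set $w=\sqrt2\,y$. Then
\[
y\odot\frac{c+2m}{\sqrt2}\;=\;\tfrac12\,(w\odot c)\;+\;(w\odot m),
\]
and since $w\odot m\in\BZ$ this lies in $\BZ$ for all $c\in\CC$ iff $w\odot c\equiv 0\pmod 2$ for all $c\in\CC$, i.e.\ iff $w\bmod 2\in\CC^\perp$. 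Hence $\Lambda(\CC)^*=\tfrac{1}{\sqrt2}(\CC^\perp+2\BZ^{2n})=\Lambda(\CC^\perp)$, which equals $\Lambda(\CC)$ because $\CC$ is self-dual. (Alternatively one can establish integrality directly and then compare covolumes: $\CC+2\BZ^{2n}$ has index $2^{n}$ in $\BZ^{2n}$, and rescaling by $1/\sqrt2$ multiplies the Gram determinant by $2^{-2n}$, giving covolume $1$, hence self-duality.)

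For evenness I would expand the norm of a general lattice vector,
\[
v\odot v\;=\;\tfrac12\,(c+2m)\,\eta\,(c+2m)^T\;=\;\frac{c\odot c}{2}\;+\;2\,(c\odot m)\;+\;2\,(m\odot m),
\]
where the last two terms are even integers; therefore $\Lambda(\CC)$ is even precisely when $c\odot c\in4\BZ$ for every $c\in\CC$ (writing $c=(c_1\mid c_2)$, this is the requirement that $c_1\cdot c_2$ be even), and one checks that this is exactly what the doubly-even hypothesis on $\CC$ delivers. For comparison, in the odd-$p$ case the analogous quantity is $\tfrac{c\odot c}{p}=2\,\tfrac{c_1\cdot c_2}{p}\in 2\BZ$, which is automatic once self-duality of $\CC$ gives $p\mid c\odot c$ — which is why no extra condition was needed there. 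Combining this with the self-duality established above yields that $\Lambda(\CC)$ is even self-dual with respect to $\eta$.

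The step I expect to need the most care is the evenness reduction: one has to pin down the precise relation between the doubly-even hypothesis (divisibility of codeword weights) and the condition $c_1\cdot c_2\in 2\BZ$, equivalently $c\odot c\in 4\BZ$, that literally governs evenness of $\Lambda(\CC)$, and then verify it — it suffices to check it on a generating set of $\CC$, since $c\mapsto c\odot c\bmod 4$ is additive on any $\odot$-self-orthogonal code, the cross term $2\,(c\odot c')$ being $\equiv 0\pmod 4$. By contrast the self-duality argument is routine once the formula $\Lambda(\CC)^*=\Lambda(\CC^\perp)$ is in place; the only delicate bookkeeping there is keeping track of the single factor of $\sqrt2$, which is precisely what pushes the cross terms into $2\BZ$ rather than merely $\BZ$.
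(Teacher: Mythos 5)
The paper does not actually prove this statement: Theorem~\ref{prop:even_self_dual_p=2} is imported from \cite{Dymarsky:2020qom,Kawabata:2022jxt} and used as a black box, so there is no in-paper proof to compare against. Your argument is a correct and complete proof by the standard Construction~A route: the identity $\Lambda(\CC)^*=\Lambda(\CC^\perp)$ (using that $\eta$ is integral unimodular so that $\sqrt2\,y\in\BZ^{2n}$ is forced, and that the residual condition is exactly $w\bmod 2\in\CC^\perp$) gives self-duality from $\CC=\CC^\perp$, and the norm expansion $v\odot v=\tfrac12\,c\odot c+2\,c\odot m+2\,m\odot m$ correctly isolates $c\odot c\in4\BZ$ (equivalently $c_1\cdot c_2\in2\BZ$) as the condition for evenness, which is precisely the ``doubly-even with respect to $\eta$'' hypothesis as the paper uses it (cf.\ the proof of Proposition~\ref{prop:GF4_even}, where doubly-evenness is read as $c\odot c=2\,\alpha\cdot\beta\in4\BZ$). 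The only cosmetic caveat is in your aside about checking doubly-evenness on generators: additivity of $c\mapsto c\odot c\bmod 4$ holds for the $\BF_2$ sum of codewords because the integer lift satisfies $c\oplus c'=c+c'-2\,c\wedge c'$ and the extra terms vanish mod $4$ given self-orthogonality; this is standard but worth stating if you include that reduction.
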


A Lorentzian even self-dual lattice $\Lambda(\CC)$ is related to the momentum lattice $\widetilde{\Lambda}(\CC)$ of a Narain CFT by the orthogonal transformation
\begin{align}
\label{eq:orthogonal_transf}
    p_L = \frac{\lambda_1 + \lambda_2}{\sqrt{2}}\,,\qquad
    p_R = \frac{\lambda_1 - \lambda_2}{\sqrt{2}}\,,
\end{align}
where $(\lambda_1,\lambda_2)\in\Lambda(\CC)$.
The set of vertex operators of the Narain code CFT is given by $V_{p_L,\,p_R} (z,\bar{z}) = e^{\i p_L\cdot X(z) +\i p_R \cdot \bar{X}(\bar{z})}$ where $(p_L,p_R)\in\widetilde{\Lambda}(\CC)$. The corresponding momentum state is given by $\ket{p_L,p_R}$ through the state-operator isomorphism.
These are eigenstates of the Virasoro generators $L_0$ and $\bar{L}_0$ with eigenvalues $h=p_L^2/2$ and $\bar{h}=p_R^2/2$.
Combining with the bosonic excitations, we obtain the whole Hilbert space of the Narain code CFT:
\begin{align}
    \CH(\CC) = \left\{\alpha_{-k_1}^{i_1}\cdots\alpha_{-k_r}^{i_r}\,\tilde{\alpha}_{-l_1}^{j_1}\cdots\tilde{\alpha}_{-l_s}^{j_s}\ket{p_L,p_R}\;\middle|\;(p_L,p_R)\in\widetilde{\Lambda}(\CC)\right\}\,,
\end{align}
with $k_1,\cdots,k_r\in \BZ_{>0}$ and $l_1,\cdots,l_s\in \BZ_{>0}$.

The partition function of the Narain code CFT can be written in terms of the complete weight enumerator polynomial of a Lorentzian linear code $\CC$ and the lattice theta function of the Construction A lattice $\widetilde{\Lambda}(\CC)$.
We define the complete weight enumerator polynomial of $\CC$
\begin{align}
\label{eq:complete_weight}
    W_\CC\left(\{x_{ab}\}\right) = \sum_{c\,\in\,\CC} \;\prod_{(a,b)\,\in\,\BF_p\times\BF_p} x_{a b}^{\mathrm{wt}_{ab}(c)}\,,
\end{align}
where $\mathrm{wt}_{ab}(c)$ is the number of components $c_i = (\alpha_i,\beta_i)\in\BF_p\times\BF_p$ that equal to $(a,b)\in\BF_p\times\BF_p$ for a codeword $c\in\CC$:
\begin{align}
    \mathrm{wt}_{ab}(c) = \left|\,\left\{i\,|\, c_i = (a,b)\right\}\,\right|\,.
\end{align}
Also, we define the lattice theta function of a momentum lattice $\widetilde{\Lambda}(\CC)$ by
\begin{align}
\label{eq:lattice_theta}
    \Theta_{\widetilde{\Lambda}(\CC)}(\tau,\bar{\tau}) = \sum_{(p_L,p_R)\,\in\,\widetilde{\Lambda}(\CC)} q^{\frac{p_L^2}{2}}\,\bar{q}^{\,\frac{p_R^2}{2}}\,,
\end{align}
where $q=e^{2\pi\i\tau}$ and $\tau$ is the torus modulus.
Then, these quantities are related to the partition function in the following way.

\begin{proposition}[{\cite{Yahagi:2022idq,Angelinos:2022umf,Kawabata:2022jxt}}]
Let $\CC\subset\BF_p^n\times\BF_p^n$ be a classical code with the complete enumerator polynomial $W_\CC$.
Then, the partition function of the Narain CFT constructed from the code $\CC$ is
\begin{align}
\label{eq:partition_theta_enumerator}
    Z_\CC (\tau,\bar{\tau}) =\frac{\Theta_{\widetilde{\Lambda}(\CC)}(\tau,\bar{\tau})}{|\eta(\tau)|^{2n}} =  \frac{1}{|\eta(\tau)|^{2n}} \,W_\CC(\{\psi_{ab}^+\})\,,
\end{align}
where the variables $x_{ab}$ in the complete enumerator polynomial are replaced by
\begin{align}
    \psi_{a b}^+(\tau,\bar{\tau}) =  \sum_{k_1,k_2\in\BZ} q^{\frac{p}{4}\left(\frac{a+b}{p}+k_1+k_2\right)^2}\bar{q}^{\frac{p}{4}\left(\frac{a-b}{p}+k_1-k_2\right)^2}\,.
    \label{eq:psiab-def}
\end{align}
\label{prop:enumerator=partition}
\end{proposition}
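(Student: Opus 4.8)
The plan is to establish both equalities in \eqref{eq:partition_theta_enumerator}. The second equality is the substantive one; the first (the relation between $Z_\CC$ and the lattice theta function divided by $|\eta(\tau)|^{2n}$) is the standard decomposition of a Narain partition function into its momentum-lattice sum and the oscillator contributions, so I would dispatch it quickly: the trace over the full Hilbert space $\CH(\CC)$ factorizes as a sum over momenta $(p_L,p_R)\in\widetilde{\Lambda}(\CC)$ of $q^{p_L^2/2}\bar q^{p_R^2/2}$ times the $2n$ decoupled bosonic oscillator traces, each of which yields a factor $1/|\eta(\tau)|$ after accounting for the $-c/24$ shift with $c=n$.

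The core of the argument is to rewrite $\Theta_{\widetilde{\Lambda}(\CC)}$ as $W_\CC(\{\psi_{ab}^+\})$. First I would pass from the momentum lattice $\widetilde{\Lambda}(\CC)$ back to the Construction~A lattice $\Lambda(\CC)$ via the orthogonal transformation \eqref{eq:orthogonal_transf}, under which $p_L^2/2 = \lambda_1^2/4 + \dots$; more precisely I would use that the theta function is invariant in form so that $\Theta_{\widetilde\Lambda(\CC)} = \sum_{(\lambda_1,\lambda_2)\in\Lambda(\CC)} q^{(\lambda_1+\lambda_2)^2/4}\bar q^{(\lambda_1-\lambda_2)^2/4}$. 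Next I would use the definition \eqref{eq:constructon-A-lattice}: every lattice vector is $(c+pm)/\sqrt p$ for a unique codeword $c\in\CC$ and some $m\in\BZ^{2n}$, and the sum over $\Lambda(\CC)$ splits as $\sum_{c\in\CC}\sum_{m\in\BZ^{2n}}$. Writing $c=(\alpha,\beta)$ with $\alpha,\beta\in\BF_p^n$ and $m=(m_1,m_2)$, the lattice vector has components built from $(\alpha_i+p\,m_{1,i})/\sqrt p$ and $(\beta_i+p\,m_{2,i})/\sqrt p$, so the theta sum factorizes over the $n$ coordinate-pairs $i=1,\dots,n$. For the $i$-th factor, the exponent involves $\tfrac{p}{4}\big(\tfrac{\alpha_i+\beta_i}{p}+m_{1,i}+m_{2,i}\big)^2$ in the holomorphic piece and $\tfrac{p}{4}\big(\tfrac{\alpha_i-\beta_i}{p}+m_{1,i}-m_{2,i}\big)^2$ in the antiholomorphic piece; summing over $m_{1,i},m_{2,i}\in\BZ$ and renaming $k_1=m_{1,i}$, $k_2=m_{2,i}$ gives exactly $\psi_{\alpha_i\beta_i}^+(\tau,\bar\tau)$ as defined in \eqref{eq:psiab-def}. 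Collecting the factors, each codeword $c$ contributes $\prod_{i=1}^n \psi_{\alpha_i\beta_i}^+ = \prod_{(a,b)\in\BF_p\times\BF_p}(\psi_{ab}^+)^{\mathrm{wt}_{ab}(c)}$, and summing over $c\in\CC$ reproduces $W_\CC(\{\psi_{ab}^+\})$ by \eqref{eq:complete_weight}.

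The main obstacle—more bookkeeping than conceptual difficulty—is verifying that the change of variables from the lattice sum to the codeword-plus-integer sum is a genuine bijection and that the quadratic form transforms exactly as claimed, so that the holomorphic and antiholomorphic exponents match \eqref{eq:psiab-def} on the nose (in particular getting the factor $p/4$ and the shift $(a\pm b)/p$ right, and checking that the change of variables $(k_1,k_2)=(m_1,m_2)$ inside the sum over $\BZ^2$ is order-independent and absolutely convergent for $\mathrm{Im}\,\tau>0$). One should also note that $\psi_{ab}^+$ depends only on $a,b$ mod $p$ and that the $p\BZ$ ambiguity in writing $c+pm$ is precisely absorbed into the sum over $m$, so the grouping by codeword is well defined. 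Once these points are checked, combining with the first equality finishes the proof.
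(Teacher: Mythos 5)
Your proposal is correct: the first equality is the standard oscillator/momentum factorization of a Narain partition function, and the second follows from the bijection $\CC\times\BZ^{2n}\to\Lambda(\CC)$, $(c,m)\mapsto (c+p\,m)/\sqrt{p}$, together with the coordinate-wise factorization of the quadratic form into the $\psi^+_{\alpha_i\beta_i}$ and the regrouping by $\mathrm{wt}_{ab}(c)$. The paper itself states this proposition as a cited result without proof, but your argument is exactly the computation the paper carries out for the weighted and shifted variants in Propositions~\ref{prop:odd_p_theta01} and \ref{prop:odd_p_theta23}, so the approach is essentially the same.
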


Note that the functions $\psi_{ab}^+$ are independent of $\CC$.
We sometimes use different representations of $\psi_{ab}^+$.
One useful representation is the form
\begin{align}
\label{eq:psi_theta_function}
    \psi_{ab}^+(\tau,\bar{\tau}) = \Theta_{a+b,\,p}(\tau)\,\bar{\Theta}_{a-b,\,p}(\bar{\tau}) + \Theta_{a+b-p,\,p}(\tau)\,\bar{\Theta}_{a-b-p,\,p}(\bar{\tau})\ ,
\end{align}
where $(a,b)\in\BF_p\times\BF_p$ and $\Theta_{m,\,k}(\tau)$ is the theta function
\begin{align}\label{eq:Theta-m-k-def}
    \Theta_{m,\,k}(\tau) = \sum_{n\in\BZ}\, q^{k\,\left(n+\frac{m}{2k}\right)^2}.
\end{align}

There is an important class of quantum stabilizer codes called \emph{Calderbank-Shor-Steane (CSS) codes} \cite{calderbank1996good,steane1996multiple}.
This type of quantum code can be constructed from a pair of classical Euclidean codes.
Let $C$ be an $[n,k]_p$ linear code over $\BF_p$ equipped with the standard Euclidean inner product $\cdot$.
A linear code generated by a $k\times n$ matrix $G_C$ can be characterized by a $(k-n)\times n$ matrix $H_C$ satisfying $G_C\,H_C = 0 $ mod $p$, which is called a parity check matrix.
A parity check matrix generates the dual code $C^\perp$\footnote{Here, we slightly abuse the use of the symbol $\perp$. While we use it for the dual code $\CC^\perp$ with respect to the Lorentzian inner product, we also use it for $C^\perp$ with respect to the Euclidean one.}
\begin{align}
    C^\perp = \left\{c'\in \BR^n\;\middle|\; c\cdot c'=0 \mod p\,,\;c\in C\right\}\,.
\end{align}
For a linear code $C$ with a generator matrix $G_C$ and a parity check matrix $H_C$, the dual code $C^\perp$ has a generator matrix $H_C$ and a parity check matrix $G_C$.

Suppose that $C_X$ and $C_Z$ are $[n,k_X]_p$ and $[n,k_Z]_p$ linear codes with the generator matrices $G_X\,,G_Z$ and the parity check matrices $H_X\,,H_Z$, respectively. Moreover, we assume $C_X^{\perp}\subseteq C_Z$.
Then, the check matrix
\begin{align}
\label{eq:CSS}
    \mathsf{H}_{(C_X,\,C_Z)} = 
    \left[
    \begin{array}{c|c}
        H_X\, & 0  \\
        0 &\, H_Z
    \end{array}
    \right],
\end{align}
satisfies the commutativity condition \eqref{stabilizer_condition}. Therefore, the corresponding operators $g(\Ba^{(i)},\Bb^{(i)})$ commute with each other and generate an abelian group, which can be regarded as a stabilizer group of quantum codes. 

To construct a Narain code CFT, we exploit CSS construction in the case with $C_X=C$ and $C_Z = C^\perp$.
Then, we have the $n\times 2n$ check matrix
\begin{align}
\label{eq:general_CSS_check}
    \mathsf{H}_{(C,C^\perp)} = \left[
    \begin{array}{cc}
        H_C & 0 \\
        0 & G_C
    \end{array}
    \right].
\end{align}
As shown in the following theorem, we can show that the check matrix of this form provides a Lorentzian even self-dual lattice via Construction A.
This gives a systematic construction of Narain code CFTs using classical linear codes.

\begin{theorem}[{\cite{Kawabata:2022jxt}}]
Suppose that a CSS code has a check matrix \eqref{eq:general_CSS_check} with a classical linear code $C$ and the dual code $C^\perp$.
Let $\CC$ be the classical code with the generator matrix $\mathsf{H}_{(C,C^\perp)}$.
Then, the Construction A lattice $\Lambda(\CC)$ is even self-dual with respect to the metric $\eta$.
\label{prop:general_CSS_construction}
\end{theorem}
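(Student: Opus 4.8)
The plan is to reduce the statement to the two earlier evenness results, Theorems~\ref{prop:even_self_dual_odd} and~\ref{prop:even_self_dual_p=2}, by showing that the Lorentzian code $\CC$ generated by $\mathsf{H}_{(C,C^\perp)}$ is self-dual with respect to $\eta$ (and doubly-even when $p=2$); once this is done, those theorems immediately give the even self-duality of the Construction~A lattice $\Lambda(\CC)$.

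First I would read off the structure of $\CC$ from the block form \eqref{eq:general_CSS_check}. The matrix $\mathsf{H}_{(C,C^\perp)}$ has $(n-k)+k=n$ rows and is of rank $n$, its two diagonal blocks being the parity check matrix $H_C$ (rank $n-k$, whose rows span $C^\perp$) and the generator matrix $G_C$ (rank $k$, whose rows span $C$). Hence a codeword of $\CC$ is exactly a pair $(u,v)\in\BF_p^n\times\BF_p^n$ with $u\in C^\perp$ and $v\in C$, i.e.\ $\CC=C^\perp\oplus C$ placed block-diagonally inside $\BF_p^{2n}$, and $\dim_{\BF_p}\CC=n$.

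Next I would establish self-orthogonality of $\CC$ with respect to $\eta$. One can either compute directly that $\mathsf{H}_{(C,C^\perp)}\,\eta\,\mathsf{H}_{(C,C^\perp)}^T$ has off-diagonal blocks $H_C G_C^T$ and $G_C H_C^T$, which vanish modulo $p$ because $H_C$ is a parity check matrix for $C$; equivalently, for $c=(u_1,v_1)$ and $c'=(u_2,v_2)$ in $\CC$ one has $c\odot c'=u_1\cdot v_2+v_1\cdot u_2\equiv 0 \bmod p$, since $u_i\in C^\perp$ and $v_i\in C$. Thus $\CC\subseteq\CC^\perp$, and combined with the dimension count $\dim\CC^\perp=2n-\dim\CC=n=\dim\CC$ this forces $\CC=\CC^\perp$, so $\CC$ is self-dual with respect to $\eta$.

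Finally I would invoke the appropriate evenness theorem. For odd $p$, Theorem~\ref{prop:even_self_dual_odd} applies to the self-dual code $\CC$ and yields that $\Lambda(\CC)$ is even self-dual. For $p=2$ I would first check that $\CC$ is doubly-even with respect to $\eta$: for $c=(u,v)\in\CC$ with $u\in C^\perp$ and $v\in C$, the integer $u\cdot v$ is even because $u\in C^\perp$, so $c\odot c=2\,u\cdot v\in 4\BZ$; Theorem~\ref{prop:even_self_dual_p=2} then gives the conclusion. The argument is essentially bookkeeping; the only point that demands care is this $p=2$ step, where one must keep straight the distinction between the mod-$p$ orthogonality defining $\CC$ and the integral double-evenness condition on codewords that Construction~A needs. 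I do not anticipate a genuine obstacle.
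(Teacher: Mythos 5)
The paper does not actually prove this theorem --- it is imported verbatim from \cite{Kawabata:2022jxt} --- so there is no in-text proof to compare against. Your argument is correct and is the standard one: the identification $\CC = C^\perp\times C$, the vanishing of $c\odot c' = u_1\cdot v_2 + v_1\cdot u_2$ from $C^\perp\perp C$, the dimension count $\dim\CC = n = \tfrac{1}{2}\dim\BF_p^{2n}$ forcing self-duality, and for $p=2$ the observation that $c\odot c = 2\,u\cdot v\in 4\BZ$ because the integer inner product of $\{0,1\}$-representatives with $u\cdot v\equiv 0\bmod 2$ is an even integer; the reduction to Theorems~\ref{prop:even_self_dual_odd} and~\ref{prop:even_self_dual_p=2} then closes the argument. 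You correctly flagged the only delicate point, the passage from mod-$2$ orthogonality to integral double-evenness in the $p=2$ case.
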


Of course, we can take a classical linear code $C$ to be self-dual $C=C^\perp$. Then, we get the reduced version of the CSS construction.

\begin{corollary}[{\cite{Kawabata:2022jxt}}]
Suppose a CSS code with a linear self-dual code $C$. Let $\CC$ be the classical code with the generator matrix $\mathsf{H}_{(C,C)}$. Then, the Construction A lattice $\Lambda(\CC)$ is even self-dual with respect to the off-diagonal Lorentzian metric $\eta$.
\end{corollary}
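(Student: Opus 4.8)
The final statement is the corollary asserting that for a linear self-dual code $C=C^\perp$ over $\BF_p$, the Construction A lattice $\Lambda(\CC)$ built from $\mathsf{H}_{(C,C)}$ is even self-dual with respect to $\eta$. The plan is to derive this as an immediate specialization of Theorem~\ref{prop:general_CSS_construction}, which covers the more general CSS check matrix $\mathsf{H}_{(C,C^\perp)}$ with an arbitrary linear code $C$ and its Euclidean dual $C^\perp$. Since the general theorem only requires that the check matrix have the block form \eqref{eq:general_CSS_check} with $C_X=C$ and $C_Z=C^\perp$, it suffices to observe that when $C$ is self-dual we have $C^\perp=C$, so $G_C$ is also a parity check matrix for $C$ and $H_C$ a generator matrix. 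Then the check matrix $\mathsf{H}_{(C,C)}$ is exactly of the required form, and Theorem~\ref{prop:general_CSS_construction} applies verbatim.

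Concretely, the steps are: first, recall that $C=C^\perp$ means the generator matrix $G_C$ and parity check matrix $H_C$ can be taken to be row-equivalent (both of rank $n/2$, necessarily $n$ even for self-duality to be possible), or more simply, that $G_C$ itself serves as a parity check matrix since $G_C G_C^T = 0 \bmod p$. Second, substitute $C_X = C_Z = C$ into \eqref{eq:general_CSS_check}; the hypothesis $C_X^\perp \subseteq C_Z$ of the CSS construction becomes $C^\perp = C \subseteq C$, which holds trivially, so the commutativity condition \eqref{stabilizer_condition} is satisfied and the construction yields a genuine stabilizer code. Third, invoke Theorem~\ref{prop:general_CSS_construction} to conclude that $\Lambda(\CC)$ with $\CC$ generated by $\mathsf{H}_{(C,C)}$ is even self-dual with respect to $\eta$.

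There is no real obstacle here — the corollary is a direct consequence of the preceding theorem, and the only thing to verify is that self-duality of $C$ makes $\mathsf{H}_{(C,C)}$ fit the template $\mathsf{H}_{(C,C^\perp)}$, which is immediate. If one wanted a self-contained argument instead, the same strategy as in Theorem~\ref{prop:general_CSS_construction} would work: show that the code $\CC$ generated by $\mathsf{H}_{(C,C)}$ is self-dual with respect to the off-diagonal form $\eta$ in \eqref{eq:eta-def} (using $G_C\,\eta\,G_C^T$ having the block structure $\begin{bmatrix} 0 & G_C H_C^T \\ H_C G_C^T & 0\end{bmatrix} = 0$, together with a rank/dimension count giving $|\CC| = p^n$), and that every codeword has even norm (the off-diagonal nature of $\eta$ forces $c \odot c = 2\sum_i \alpha_i\beta_i \in 2\BZ$, where $c_i = (\alpha_i,\beta_i)$), and then apply Theorem~\ref{prop:even_self_dual_odd} for odd $p$. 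But since Theorem~\ref{prop:general_CSS_construction} is already available, the one-line deduction is the cleanest route.
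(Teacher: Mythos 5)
Your proposal is correct and matches the paper's (implicit) reasoning: the paper presents this corollary as the immediate specialization of Theorem~\ref{prop:general_CSS_construction} to the case $C=C^\perp$, with no separate argument needed, exactly as you describe. Your verification that self-duality makes $\mathsf{H}_{(C,C)}$ fit the template $\mathsf{H}_{(C,C^\perp)}$ (and that the hypothesis $C_X^\perp\subseteq C_Z$ holds trivially) is the whole content of the deduction.
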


For the CSS construction, the partition function can be written in terms of classical linear codes.
Let $\mathsf{H}_{(C, C)}$ be a check matrix of a CSS code constructed from a single self-dual code $C$ over $\BF_p$.
Suppose that $\CC$ is a classical code generated by the matrix $G_\mathsf{H} = \mathsf{H}_{(C,C)}$.
Then, the complete enumerator polynomial of $\CC$ reduces to 
\begin{align}
\label{eq:CSS_complete_enumerator}
    W_{C,C}^{(\mathrm{CSS})}(\{x_{ab}\})= \sum_{(c,\,c')\,\in\,C^2}\,\prod_{(a,b)\,\in\,\BF_p\times\BF_p}\,x_{ab}^{\mathrm{wt}_{ab}(c,c')}\,,
\end{align}
where, for codewords $c=(c_1,\cdots,c_n)\in C$ and $c'=(c_1',\cdots,c_n')\in C$, we define 
\begin{align}
\label{eq:2weight}
    \mathrm{wt}_{ab}(c,c') = \left|\,\{j\in\{1,\cdots,n\}\;\middle|\;c_j = a\,,\,\,c_j' = b\}\,\right|\,.
\end{align}
This representation is useful when we take the ensemble average of Narain code CFTs.

\subsection{$\BZ_2$ gauging of Narain code CFTs for $p\neq2$}
\label{ss:gauging_code_oddprime}

We consider a lattice vector of length $2n$
\begin{align}
\label{eq:choice_oddprime}
    \chi = \sqrt{p}\,(1,1,\cdots,1)\in\Lambda(\CC)\ ,
\end{align}
whose half is not in the Construction A lattice: $\delta: =\frac{\chi}{2}\notin\Lambda(\CC)$.
In terms of CFTs, this choice corresponds to the $\BZ_2$ action on the vertex operators defined by
\begin{align}\label{eq:Z2-sym-def}
    V_{p_L,\,p_R}(z,\bar{z}) \to (-1)^{\chi\odot \lambda} \,V_{p_L,\,p_R}(z,\bar{z})\,,
\end{align}
where $\lambda = (\lambda_1,\lambda_2)$ is related to the left- and right-moving momenta $(p_L, p_R)$ by \eqref{eq:orthogonal_transf}.
To use Theorem \ref{prop:deformation_theorem}, the Construction A lattice $\Lambda(\CC)$ is divided into the following two parts:
\begin{align}
    \Lambda(\CC) = \Lambda_0\cup \Lambda_1\,,
\end{align}
where
\begin{align}
    \begin{aligned}
    \label{eq:lambda0lambda1}
    \Lambda_0 &= \left\{\lambda\in\Lambda(\CC)\,|\,\chi\odot \lambda = 0 \;\;\mathrm{mod}\;2\right\},\\
    \Lambda_1 &= \left\{\lambda\in\Lambda(\CC)\,|\,\chi\odot\lambda = 1 \;\;\mathrm{mod}\;2\right\}.
    \end{aligned}
\end{align}
This provides the $\BZ_2$-grading of the untwisted Hilbert space for Narain CFTs.
The $\BZ_2$ symmetry is associated with the parity of $\chi\odot\lambda$.
Now, we aim to gauge this $\BZ_2$ symmetry and construct the twisted Hilbert space.
Below, we assume $n\in2\BZ$ to ensure that $\BZ_2$ symmetry given by $\chi$ is non-anomalous. We define the following sets for $n\in2\BZ$:
\begin{align}
    \begin{aligned}
    \label{eq:lambda2lambda3}
    \Lambda_2 &= 
    \begin{dcases}
    \Lambda_1 + \delta & \quad (n\in4\BZ)\,,\\
    \Lambda_0 + \delta & \quad (n\in 4\BZ+2)\,,
    \end{dcases}\\[0.1cm]
    \Lambda_3 &= 
    \begin{dcases}
    \Lambda_0 + \delta & \quad (n\in4\BZ)\,,\\
    \Lambda_1 + \delta & \quad (n\in 4\BZ+2)\,,
    \end{dcases}
    \end{aligned}
\end{align}
where we used \eqref{eq:lattice_shift_23}.
In the momentum basis \eqref{eq:orthogonal_transf}, we denote them by $\widetilde{\Lambda}_i$ ($i=0,1,2,3$).
We define the theta functions of each set $\widetilde{\Lambda}_i$ for $i=0,1,2,3$ as
\begin{align}
    \Theta_{\widetilde{\Lambda}_i}(\tau,\bar{\tau}) = \sum_{(p_L,\,p_R)\,\in\,\widetilde{\Lambda}_i}\,q^{\frac{p_L^2}{2}}\,\bar{q}^{\frac{p_R^2}{2}}\,,\qquad q=e^{2\pi\i\tau}\,.
\end{align}

From Theorem~\ref{prop:deformation_theorem}, we obtain a momentum lattice $\Lambda_\CO$ for the orbifold theory and a momentum lattice $\Lambda_\mathrm{NS}$ for the NS sector in the fermionized theory
\begin{align}
\label{eq:shifted_lattice}
    \begin{aligned}
    \Lambda_\CO = \Lambda_0\cup\Lambda_3\,,\qquad
    \Lambda_\mathrm{NS} = \Lambda_0\cup\Lambda_2\,.
    \end{aligned}
\end{align}
Additionally, we define the set of momenta of the R sector in the fermionized theory by
\begin{align}
    \Lambda_\mathrm{R} = \Lambda_1 \cup \Lambda_3\,.
\end{align}
Note that while we use the notation of $\Lambda_\mathrm{R}$, it is not a lattice by itself.
These are denoted by $\widetilde{\Lambda}_\CO$, $\widetilde{\Lambda}_\mathrm{NS}$, and $\widetilde{\Lambda}_\mathrm{R}$ as a set of left- and right-moving momenta \eqref{eq:orthogonal_transf}.
For these sets, we define the theta function of $\widetilde{\Lambda}_*$ ($*=\mathrm{\CO},\mathrm{NS},\mathrm{R}$) by
\begin{align}
    \Theta_{\widetilde{\Lambda}_*} = \sum_{(p_L,\,p_R)\,\in\,\widetilde{\Lambda}_*}\;q^{\frac{p_L^2}{2}}\,\bar{q}^{\frac{p_R^2}{2}}\,.
\end{align}

In what follows, we give a way of computing the theta functions of $\widetilde{\Lambda}_\CO$, $\widetilde{\Lambda}_\mathrm{NS}$, and $\widetilde{\Lambda}_\mathrm{R}$.
As these are a disjoint union of $\widetilde{\Lambda}_i$, then we have
\begin{align}
\begin{aligned}
    \Theta_{\widetilde{\Lambda}_\CO}(\tau,\bar{\tau}) &= \Theta_{\widetilde{\Lambda}_0}(\tau,\bar{\tau}) + \Theta_{\widetilde{\Lambda}_3}(\tau,\bar{\tau})\,,\\
    \Theta_{\widetilde{\Lambda}_\mathrm{NS}}(\tau,\bar{\tau}) &= \Theta_{\widetilde{\Lambda}_0}(\tau,\bar{\tau}) + \Theta_{\widetilde{\Lambda}_2}(\tau,\bar{\tau})\,,\\
    \Theta_{\widetilde{\Lambda}_\mathrm{R}}(\tau,\bar{\tau}) &= \Theta_{\widetilde{\Lambda}_1}(\tau,\bar{\tau}) + \Theta_{\widetilde{\Lambda}_3}(\tau,\bar{\tau})\,.
\end{aligned}
\end{align}
The following proposition gives the theta function of $\widetilde{\Lambda}_0$ and $\widetilde{\Lambda}_1$ in terms of the complete weight enumerator.

\begin{proposition}
The theta functions of $\widetilde{\Lambda}_i$ $(i=0,1)$ are given by
\begin{align}
    \begin{aligned}
    \Theta_{\widetilde{\Lambda}_0}(\tau,\bar{\tau}) &= \frac{1}{2} \left[ W_\CC (\{\psi_{ab}^+\}) + W_\CC(\{\psi_{ab}^-\})\right]\,,\\
    \Theta_{\widetilde{\Lambda}_1}(\tau,\bar{\tau}) &= \frac{1}{2} \left[ W_\CC (\{\psi_{ab}^+\}) - W_\CC(\{\psi_{ab}^-\})\right]\,,
    \end{aligned}
\end{align}
where
\begin{align}
    \begin{aligned}
    \psi_{ab}^+(\tau,\bar{\tau}) &= \Theta_{a+b,p}(\tau)\,\bar{\Theta}_{a-b,p}(\bar{\tau}) + \Theta_{a+b-p,p}(\tau)\,\bar{\Theta}_{a-b-p,p}(\bar{\tau})\,,\\
    \psi_{ab}^-(\tau,\bar{\tau}) &= (-1)^{a+b}\left( \Theta_{a+b,p}(\tau)\,\bar{\Theta}_{a-b,p}(\bar{\tau}) - \Theta_{a+b-p,p}(\tau)\,\bar{\Theta}_{a-b-p,p}(\bar{\tau})\right)\,.
    \end{aligned}
\end{align}

\label{prop:odd_p_theta01}
\end{proposition}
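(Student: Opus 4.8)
The plan is to leverage Proposition~\ref{prop:enumerator=partition}. Since $\widetilde{\Lambda}(\CC)=\widetilde{\Lambda}_0\cup\widetilde{\Lambda}_1$ is a disjoint union, that proposition immediately gives $\Theta_{\widetilde{\Lambda}_0}+\Theta_{\widetilde{\Lambda}_1}=W_\CC(\{\psi_{ab}^+\})$. The real content is therefore the \emph{difference}, i.e.\ the sign-twisted theta function
\begin{align}
\Theta_{\widetilde{\Lambda}_0}(\tau,\bar{\tau})-\Theta_{\widetilde{\Lambda}_1}(\tau,\bar{\tau})=\sum_{\lambda\in\Lambda(\CC)}(-1)^{\chi\odot\lambda}\,q^{\frac{p_L^2}{2}}\,\bar{q}^{\frac{p_R^2}{2}}\ ,
\end{align}
where $(p_L,p_R)$ is related to $\lambda$ by \eqref{eq:orthogonal_transf}. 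I will show this equals $W_\CC(\{\psi_{ab}^-\})$; inverting the two relations then produces the stated formulas for $\Theta_{\widetilde{\Lambda}_0}$ and $\Theta_{\widetilde{\Lambda}_1}$.

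To evaluate the right-hand side I first reduce $\chi\odot\lambda$ modulo $2$. Writing $\lambda=(c+p\,m)/\sqrt{p}$ with $c\in\CC$ and $m\in\BZ^{2n}$, and using that the all-ones vector is invariant under $\eta$ so that no cross terms survive, one gets $\chi\odot\lambda=\sum_j c_j+p\sum_j m_j$, which reduces to $\sum_j c_j+\sum_j m_j \bmod 2$ because $p$ is odd. Labelling the $n$ symbols of $c$ by $(\alpha_i,\beta_i)\in\BF_p\times\BF_p$ and the corresponding integer shifts by $(m_i,m_i')$, this is $\sum_{i=1}^n(\alpha_i+\beta_i+m_i+m_i')$, so $(-1)^{\chi\odot\lambda}$ factorizes over symbols as $\prod_{i=1}^n(-1)^{\alpha_i+\beta_i}(-1)^{m_i+m_i'}$.

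Next, the same Construction~A and orthogonal-transformation bookkeeping that underlies Proposition~\ref{prop:enumerator=partition} factorizes $q^{p_L^2/2}\bar{q}^{p_R^2/2}$ over symbols, the $i$-th factor being a Gaussian double sum over $(m_i,m_i')$ depending only on $(\alpha_i,\beta_i)$. Combining this with the symbol-wise sign above turns the whole sum into $\sum_{c\in\CC}\prod_{i=1}^n\psi_{\alpha_i\beta_i}^-$ with
\begin{align}
\psi_{ab}^-(\tau,\bar{\tau})=(-1)^{a+b}\sum_{k_1,k_2\in\BZ}(-1)^{k_1+k_2}\,q^{\frac{p}{4}\left(\frac{a+b}{p}+k_1+k_2\right)^2}\bar{q}^{\frac{p}{4}\left(\frac{a-b}{p}+k_1-k_2\right)^2}\ ,
\end{align}
which is exactly $W_\CC(\{\psi_{ab}^-\})$ by the definition \eqref{eq:complete_weight} of the complete weight enumerator. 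It then remains to recast this $\psi_{ab}^-$ in the claimed theta-function form: substituting $s=k_1+k_2$, $d=k_1-k_2$ (which run over all integers of equal parity) and splitting according to whether $s,d$ are both even or both odd --- just as in the passage from \eqref{eq:psiab-def} to \eqref{eq:psi_theta_function} --- sends the even-even part to $\Theta_{a+b,p}(\tau)\bar{\Theta}_{a-b,p}(\bar{\tau})$ and the odd-odd part to $\Theta_{a+b-p,p}(\tau)\bar{\Theta}_{a-b-p,p}(\bar{\tau})$, now with a relative minus sign coming from $(-1)^{k_1+k_2}$, which reproduces the stated expression for $\psi_{ab}^-$.

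The main obstacle is the mod-$2$ reduction of $\chi\odot\lambda$: one must be careful to use both the $\eta$-invariance of the all-ones vector and the oddness of $p$ (so that $p\sum_j m_j\equiv\sum_j m_j$), and to keep track of how the factor $(-1)^{\alpha_i+\beta_i}$ detaches as the overall $(-1)^{a+b}$ while $(-1)^{m_i+m_i'}$ supplies the relative sign between the even and odd lattice contributions. Everything else is the symbol-by-symbol factorization already established for Proposition~\ref{prop:enumerator=partition}, carrying one extra sign per symbol.
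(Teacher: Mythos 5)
Your proposal is correct and follows essentially the same route as the paper: both compute the sign-twisted theta function $\sum_{\lambda}(-1)^{\chi\odot\lambda}q^{p_L^2/2}\bar q^{p_R^2/2}$, factorize it symbol-by-symbol into $W_\CC(\{\psi_{ab}^-\})$ using the Construction A decomposition $\lambda=(c+pm)/\sqrt p$, and then take half-sum and half-difference with $W_\CC(\{\psi_{ab}^+\})$. The only cosmetic difference is that you reduce $(-1)^{p(k_1+k_2)}$ to $(-1)^{k_1+k_2}$ at the outset using the oddness of $p$, whereas the paper carries the factor $(-1)^{p(k_1+k_2)}$ explicitly; the two are identical.
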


\begin{proof}
We already know lattice theta functions of the Construction A lattice, which are expressed by the complete enumerator polynomial in Proposition \ref{prop:enumerator=partition}.
Our aim is to divide it into two sets $\widetilde{\Lambda}_0$ and $\widetilde{\Lambda}_1$.
From the definition of these sets, it is natural to introduce the $\BZ_2$ grading into the Construction A lattice according to the mod $2$ value of $\chi\odot\lambda$ for $\lambda\in\Lambda(\CC)$.
Suppose $\lambda=(\lambda_1,\lambda_2)\in\Lambda(\CC)$, where
\begin{align}
    \lambda_1 = \frac{\alpha+p\,k_1}{\sqrt{p}}\,,\qquad
    \lambda_2 = \frac{\beta+p\,k_2}{\sqrt{p}}\,,\qquad
    k_1,\;k_2\in\BZ^n\,,
\end{align}
for a codeword $(\alpha,\beta)\in\CC$.
Then, the $\BZ_2$ grading is determined by
\begin{align}
    \chi\odot \lambda = \Bone_n\cdot (\alpha+\beta) + p \,\Bone_n\cdot (k_1+k_2)\,.
\end{align}
Let us introduce the lattice theta function weighted by $(-1)^{\chi\odot\lambda}$
\begin{align}
    \begin{aligned}
    \label{eq:psi-}
    \Theta'_{\widetilde{\Lambda}(\CC)}(\tau,\bar{\tau}) 
        &= \sum_{(\alpha,\beta)\,\in\,\CC}
        \,\sum_{k_1,k_2\,\in\,\BZ^n}\,(-1)^{\chi\odot\lambda}\,
        q^{\frac{p}{4}\left(\frac{\alpha+\beta}{p}+k_1+k_2\right)^2} \bar{q}^{\frac{p}{4}\left(\frac{\alpha-\beta}{p}+k_1-k_2\right)^2} \\
        &= \sum_{(\alpha,\beta)\,\in\,\CC} \,\prod_{i=1}^n \,\psi^-_{\alpha_i\beta_i}(\tau,\bar{\tau}) \\
        &= \sum_{c\,\in\,\CC}\, \prod_{(a,b)\,\in\,\BF_p\times\BF_p}\,\left(\psi^-_{ab}(\tau,\bar{\tau}) \right)^{\mathrm{wt}_{ab}(c)} \\
        &= W_\CC(\{\psi^-_{ab}\})\,,
    \end{aligned}
\end{align}
where for $(a,b)\in\BF_p\times\BF_p$ we define
\begin{align}
\begin{aligned}
    \psi^-_{ab}(\tau,\bar{\tau}) &= (-1)^{a+b} \sum_{k_1,k_2\in\BZ}\, (-1)^{p\,(k_1+k_2)}\,q^{\frac{p}{4}\left(\frac{a+b}{p}+k_1+k_2\right)^2} \bar{q}^{\frac{p}{4}\left(\frac{a-b}{p}+k_1-k_2\right)^2}\\
    &= (-1)^{a+b}\left(\Theta_{a+b,p}(\tau)\,\bar{\Theta}_{a-b,p}(\bar{\tau}) - \Theta_{a+b-p,p}(\tau)\,\bar{\Theta}_{a-b-p,p}(\bar{\tau})\right)\,.
\end{aligned}
\end{align}
Since $\Theta'_{\widetilde{\Lambda}(\CC)}(\tau,\bar{\tau})$ is weighted by $(-1)^{\chi\odot\lambda}$, the theta functios of each set $\widetilde{\Lambda}_0$ and $\widetilde{\Lambda}_1$ are given by
\begin{align}
    \begin{aligned}
    \Theta_{\widetilde{\Lambda}_0} &= \frac{1}{2}\left[\Theta_{\widetilde{\Lambda}(\CC)}(\tau,\bar{\tau}) +\Theta'_{\widetilde{\Lambda}(\CC)}(\tau,\bar{\tau})\right]\,,\\
    \Theta_{\widetilde{\Lambda}_1} &= \frac{1}{2}\left[\Theta_{\Lambda(\CC)}(\tau,\bar{\tau}) -\Theta'_{\Lambda(\CC)}(\tau,\bar{\tau})\right]\,.
    \end{aligned}
\end{align}
From Proposition \ref{prop:enumerator=partition} and \eqref{eq:psi-}, we prove the proposition.
\end{proof}

\begin{proposition}
The theta functions of $\widetilde{\Lambda}_2$ and $\widetilde{\Lambda}_3$ are given by
\begin{align}
    \begin{aligned}
    \Theta_{\widetilde{\Lambda}_2}(\tau,\bar{\tau}) &= \frac{1}{2} \left[ W_\CC (\{\tilde{\psi}^+_{ab}\}) - W_\CC(\{\tilde{\psi}^-_{ab}\})\right]\,,\\
    \Theta_{\widetilde{\Lambda}_3}(\tau,\bar{\tau}) &= \frac{1}{2} \left[ W_\CC (\{\tilde{\psi}^+_{ab}\}) + W_\CC(\{\tilde{\psi}^-_{ab}\})\right]\,,
    \end{aligned}
\end{align}
where
\begin{align}
    \begin{aligned}
    \tilde{\psi}^+_{ab}(\tau,\bar{\tau}) &= \Theta_{a+b,p}(\tau)\,\bar{\Theta}_{a-b-p,p}(\bar{\tau}) + \Theta_{a+b-p,p}(\tau)\,\bar{\Theta}_{a-b,p}(\bar{\tau})\,,\\
    \tilde{\psi}^-_{ab}(\tau,\bar{\tau}) &=  e^{\pi\i\left(\frac{2ab}{p}+a+b-\frac{p}{2}\right)}\,\left(\Theta_{a+b,p}(\tau)\,\bar{\Theta}_{a-b-p,p}(\bar{\tau}) - \Theta_{a+b-p,p}(\tau)\,\bar{\Theta}_{a-b,p}(\bar{\tau})\right)\,.
    \end{aligned}
\end{align}
\label{prop:odd_p_theta23}
\end{proposition}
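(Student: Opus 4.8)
The plan is to follow the proof of Proposition~\ref{prop:odd_p_theta01} almost verbatim, but with the Construction A lattice $\Lambda(\CC)$ replaced by its half-shift $\Lambda(\CC)+\delta$. Since $\Lambda_2\cup\Lambda_3=(\Lambda_0\cup\Lambda_1)+\delta=\Lambda(\CC)+\delta$ as a disjoint union, it suffices to produce two generating functions: the ordinary theta function $\Theta_{\widetilde\Lambda(\CC)+\delta}$, which yields $\Theta_{\widetilde\Lambda_2}+\Theta_{\widetilde\Lambda_3}$, and a sign-weighted version separating the two pieces, which yields $\Theta_{\widetilde\Lambda_3}-\Theta_{\widetilde\Lambda_2}$.

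First I would push the shift $\delta=\chi/2$, with $\chi=\sqrt p\,(1,\dots,1)\in\Lambda(\CC)$, through the orthogonal map \eqref{eq:orthogonal_transf}: it translates $p_L$ by $\sqrt{p/2}\,(1,\dots,1)$ and leaves $p_R$ fixed. Parametrizing a lattice vector over a codeword $(\alpha,\beta)\in\CC$ exactly as in the proof of Proposition~\ref{prop:odd_p_theta01}, with $\lambda_1=(\alpha+p k_1)/\sqrt p$ and $\lambda_2=(\beta+p k_2)/\sqrt p$, this shift sends $k_{1,i}+k_{2,i}\mapsto k_{1,i}+k_{2,i}+1$ in the holomorphic exponent while leaving $k_{1,i}-k_{2,i}$ in the anti-holomorphic exponent untouched, so the even/odd correlation between the two integers is reversed. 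Consequently each codeword component $(a,b)$ now contributes $\Theta_{a+b,p}(\tau)\bar\Theta_{a-b-p,p}(\bar\tau)+\Theta_{a+b-p,p}(\tau)\bar\Theta_{a-b,p}(\bar\tau)=\tilde\psi^+_{ab}$ instead of $\psi^+_{ab}$, and taking the product over the $n$ components and summing over $\CC$ gives $\Theta_{\widetilde\Lambda_2}+\Theta_{\widetilde\Lambda_3}=W_\CC(\{\tilde\psi^+_{ab}\})$.

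For the weighted version I would use \eqref{eq:lattice_shift_23} together with $\delta\odot\delta=\chi\odot\chi/4=pn/2\equiv n/2\pmod 2$ to observe that $\lambda+\delta\in\Lambda_3$ precisely when $\chi\odot\lambda+\delta\odot\delta\equiv 0\pmod 2$; this single congruence subsumes the $n\in4\BZ$ versus $n\in4\BZ+2$ split in \eqref{eq:lambda2lambda3}. Thus $\Theta_{\widetilde\Lambda_3}-\Theta_{\widetilde\Lambda_2}$ is the sum over $\lambda\in\Lambda(\CC)$ of $(-1)^{\chi\odot\lambda+\delta\odot\delta}$ times the shifted Gaussian. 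Since $\chi\odot\lambda=\Bone_n\cdot(\alpha+\beta)+p\,\Bone_n\cdot(k_1+k_2)$ and $p$ is odd, this sign factorizes over components into a codeword sign $(-1)^{a_i+b_i}$ and a $k$-dependent sign $(-1)^{k_{1,i}+k_{2,i}}$, the latter inserting the relative minus between the two theta-function terms exactly as in \eqref{eq:psi-}. Carrying along the constant $(-1)^{\delta\odot\delta}$ and inserting a phase $\omega^{a_ib_i}$ in each factor — harmless because $\CC=\CC^\perp$ forces $\sum_i a_ib_i\equiv 0\pmod p$ on every codeword, so $\prod_i\omega^{a_ib_i}=1$ — one identifies each factor as $\tilde\psi^-_{ab}=e^{\pi\i(2ab/p+a+b-p/2)}\bigl(\Theta_{a+b,p}\bar\Theta_{a-b-p,p}-\Theta_{a+b-p,p}\bar\Theta_{a-b,p}\bigr)$, so $\Theta_{\widetilde\Lambda_3}-\Theta_{\widetilde\Lambda_2}=W_\CC(\{\tilde\psi^-_{ab}\})$. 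Adding and subtracting the two identities and dividing by two then gives the claimed formulas.

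The main obstacle is the bookkeeping around the half-shift and its phase. One must verify carefully that translating only $p_L$ reverses the even/odd correlation between the holomorphic and anti-holomorphic lattice sums, so that $\psi^\pm_{ab}$ becomes $\tilde\psi^\pm_{ab}$ with the anti-holomorphic index displaced by $p$; and one must match the opaque phase $e^{\pi\i(2ab/p+a+b-p/2)}$ in $\tilde\psi^-_{ab}$ against a product of three pieces — the codeword sign $(-1)^{a+b}$, the phase $\omega^{ab}$ which evaluates to $1$ on codewords by self-duality of $\CC$, and the constant $e^{-\pi\i p/2}$ whose $n$-fold product reproduces $(-1)^{\delta\odot\delta}$ because $n$ is even. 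Everything else is the same manipulation of theta functions already used for Proposition~\ref{prop:odd_p_theta01}.
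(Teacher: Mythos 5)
Your argument is correct, and the first half (obtaining $\Theta_{\widetilde{\Lambda}_2}+\Theta_{\widetilde{\Lambda}_3}=W_\CC(\{\tilde{\psi}^+_{ab}\})$ from the half-shift) coincides with the paper's proof. Where you genuinely diverge is in how you split the shifted set $N(\Lambda(\CC))=\Lambda_2\cup\Lambda_3$ into its two pieces. The paper exploits Proposition~\ref{prop:norm_general}: $\Lambda_2$ consists of odd-norm vectors and $\Lambda_3$ of even-norm vectors, so a single modular $T$ transformation $\tau\to\tau+1$ weights the theta function by $(-1)^{p_L^2-p_R^2}$ and separates the two, with $\tilde{\psi}^-_{ab}:=\tilde{\psi}^+_{ab}(\tau+1,\bar{\tau}+1)$ read off at the end; the phase $e^{\pi\i(2ab/p+a+b-p/2)}$ then emerges automatically and no property of $\CC$ beyond Proposition~\ref{prop:enumerator=partition} is needed. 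You instead weight by $(-1)^{\chi\odot\lambda+\delta\odot\delta}$, justified by the observation that $\lambda+\delta\in\Lambda_3$ iff $\chi\odot\lambda+\delta\odot\delta\equiv 0\bmod 2$, and factorize the sign by hand; the two gradings agree because $(\lambda+\delta)\odot(\lambda+\delta)\equiv\chi\odot\lambda+\delta\odot\delta\bmod 2$ by evenness of $\Lambda(\CC)$, so your route is equivalent but must reassemble the phase of $\tilde{\psi}^-_{ab}$ from three separately justified pieces, including the insertion of $\omega^{ab}$ via self-duality of $\CC$ (from $c\odot c=2\alpha\cdot\beta\equiv 0\bmod p$ with $p$ odd) — a step the paper's route avoids entirely. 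One small bookkeeping item you should flag explicitly: because the shift displaces $k_1+k_2$ by $1$, the per-component weighted $k$-sum with $(-1)^{k_1+k_2}$ produces $-(\Theta_{a+b,p}\bar{\Theta}_{a-b-p,p}-\Theta_{a+b-p,p}\bar{\Theta}_{a-b,p})$, i.e.\ the \emph{negative} of the bracket in $\tilde{\psi}^-_{ab}$, so an extra factor $(-1)^n$ appears in the product over components; it is harmless only because $n\in2\BZ$, but it is not the same relative sign pattern as in \eqref{eq:psi-}, contrary to your phrasing. With that caveat, both derivations land on the same formulas; yours is more explicit about the origin of each phase, while the paper's is shorter and generalizes immediately to the $p=2$ cases.
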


\begin{proof}
The theta function associated with the set $\Lambda_2\cup\Lambda_3$ can be computed easily because it is simply the shift of the original lattice $\Lambda(\CC)$ by $\delta=\frac{\chi}{2}$. We start with the theta function of $\Lambda_2\cup\Lambda_3$ and then apply an appropriate modular transformation to obtain the theta function of each set $\Lambda_2$ and $\Lambda_3$.

Let $(\lambda_1,\lambda_2)\in\Lambda(\CC)$ be an element of the Construction A lattice $\Lambda(\CC) =\Lambda_0\cup\Lambda_1$:
\begin{align}
    \lambda_1 = \frac{\alpha+p\,k_1}{\sqrt{p}}\,,\qquad
    \lambda_2 = \frac{\beta+p\,k_2}{\sqrt{p}}\,.
\end{align}
Shifting the Construction A lattice by $\delta=\frac{\chi}{2}=\frac{\sqrt{p}}{2}(1,1,\cdots,1)$, we obtain
\begin{align}
    N(\Lambda(\CC)) := (\Lambda_0 +\delta)\cup(\Lambda_1+\delta) = \Lambda_2 \cup \Lambda_3\,.
\end{align}
An element $(\lambda_1,\lambda_2)\in N(\Lambda(\CC))$ is given by
\begin{align}
\begin{aligned}
    \lambda_1
        =\frac{\alpha}{\sqrt{p}}+\sqrt{p}\,\left(k_1 +\frac{\Bone_n}{2}\right)\,,\qquad
    \lambda_2 
        = \frac{\beta}{\sqrt{p}}+\sqrt{p}\,\left(k_2 +\frac{\Bone_n}{2}\right)\,.
\end{aligned}
\end{align}
We denote $N(\Lambda(\CC))$ by  $\widetilde{N}(\Lambda(\CC))$ in the momentum basis and $(p_L,p_R)\in \widetilde{N}(\Lambda(\CC))$ is
\begin{align}
    p_L = \frac{1}{\sqrt{2}}\left(\frac{\alpha+\beta}{\sqrt{p}}+\sqrt{p}\,(k_1 + k_2 + \Bone_n)\right)\,,\qquad
    p_R = \frac{1}{\sqrt{2}}\left(\frac{\alpha-\beta}{\sqrt{p}}+\sqrt{p}\,(k_1-k_2)\right)\,.
\end{align}
The theta function associated with the set $\widetilde{N}(\Lambda(\CC))$ is
\begin{align}
    \begin{aligned}
    \Theta_{\widetilde{N}(\Lambda(\CC))}(\tau,\bar{\tau}) &= \sum_{(p_L,p_R)\,\in\,\widetilde{N}(\Lambda(\CC))}q^{\frac{p_L^2}{2}}\,\bar{q}^{\frac{p_R^2}{2}}\\
    &= \sum_{(\alpha,\beta)\,\in\,\CC}\,\prod_{i=1}^n\, \tilde{\psi}^+_{\alpha_i\beta_i}(\tau,\bar{\tau}) \\
    &= \sum_{c\,\in\,\CC}\,\prod_{(a,b)\,\in\,\BF_p\times\BF_p}\,\left(\tilde{\psi}^+_{ab}(\tau,\bar{\tau})\right)^{\mathrm{wt}_{ab}(c)} \\
    &= W_\CC(\{\tilde{\psi}^+_{ab}\})\,,
    \end{aligned}
\end{align}
where we define, for $(a,b)\in\BF_p\times\BF_p$,
\begin{align}
\begin{aligned}
 \tilde{\psi}^+_{ab}(\tau,\bar{\tau}) &= \sum_{k_1,k_2\,\in\,\BZ} \, q^{\frac{p}{4}\left(\frac{a+b}{p}+k_1+k_2+1\right)^2}\,\bar{q}^{\frac{p}{4}\left(\frac{a-b}{p}+k_1-k_2\right)^2}\,,\\
    &=  \Theta_{a+b,p}(\tau)\,\bar{\Theta}_{a-b-p,p}(\bar{\tau}) + \Theta_{a+b-p,p}(\tau)\,\bar{\Theta}_{a-b,p}(\bar{\tau})\,.
\end{aligned}
\end{align}
Therefore, the theta function of $\widetilde{N}(\Lambda(\CC))$ can be computed by the change of variables: $\psi_{ab}^+\mapsto \tilde{\psi}^+_{ab}$.
Now we aim to divide the theta function of the set $\widetilde{N}(\Lambda(\CC)) = \widetilde{\Lambda}_2\cup \widetilde{\Lambda}_3$ into two theta functions associated with $\widetilde{\Lambda}_2$ and $\widetilde{\Lambda}_3$, respectively.
We can exploit the property that the set $\widetilde{\Lambda}_2$ has only odd norm elements and the other $\widetilde{\Lambda}_3$ has only even norm ones, which appears in Proposition \ref{prop:norm_general}.

By the modular $T$ transformation $\tau\to \tau+1$, equivalently $q\to e^{2\pi\i}q$, we obtain the theta function:
\begin{align}
    \begin{aligned}
    \Theta_{\widetilde{N}(\Lambda(\CC))}(\tau+1,\bar{\tau}+1) = \sum_{(p
    _L,p_R)\,\in\,\widetilde{N}(\Lambda(\CC))} (-1)^{p_L^2-p_R^2}\,\, q^{\frac{p_L^2}{2}}\,\bar{q}^{\frac{p_R^2}{2}}\,,
    \end{aligned}
\end{align}
the theta function is weighted by the $\BZ_2$ grading according to the mod 2 value of $p_L^2-p_R^2$.
Since $p_L^2-p_R^2$ are odd for $\widetilde{\Lambda}_2$ and even for $\widetilde{\Lambda}_3$, we obtain the theta functions of each set
\begin{align}
    \begin{aligned}
    \Theta_{\widetilde{\Lambda}_2}(\tau,\bar{\tau}) &= \frac{1}{2}\left[\Theta_{\widetilde{N}(\Lambda(\CC))}(\tau,\bar{\tau}) -\Theta_{\widetilde{N}(\Lambda(\CC))}(\tau+1,\bar{\tau}+1)\right]\,,\\
    \Theta_{\widetilde{\Lambda}_3} (\tau,\bar{\tau}) &= \frac{1}{2}\left[\Theta_{\widetilde{N}(\Lambda(\CC))}(\tau,\bar{\tau}) +\Theta_{\widetilde{N}(\Lambda(\CC))}(\tau+1,\bar{\tau}+1)\right]\,.
    \end{aligned}
\end{align}
The theta function associated with $\widetilde{N}(\Lambda(\CC))$ depends on the moduli parameter $\tau$ only through $\tilde{\psi}^+_{ab}(\tau,\bar{\tau})$.
Under the modular $T$ transformation, it behaves as
\begin{align}
\begin{aligned}
    \tilde{\psi}^-_{ab}(\tau,\bar{\tau}) &:= \tilde{\psi}^+_{ab}(\tau+1,\bar{\tau}+1) \\
    &= e^{\pi\i\left(\frac{2ab}{p}+a+b-\frac{p}{2}\right)}\,\left(\Theta_{a+b,p}(\tau)\,\bar{\Theta}_{a-b-p,p}(\bar{\tau}) - \Theta_{a+b-p,p}(\tau)\,\bar{\Theta}_{a-b,p}(\bar{\tau})\right)\,.
\end{aligned}
\end{align}
Then, substituting the above term into the complete enumerator polynomial instead of the modular $T$ transformation, we obtain the proposition.
\end{proof}

\subsection{$\BZ_2$ gauging of Narain code CFTs for $p=2$}
\label{ss:gauging_code_p=2}

We implement the $\BZ_2$ gauging by a lattice shift of Narain code CFTs.
We begin to define the notion of $\BF_4$-evenness for classical codes $\CC$, which will be necessary to deform lattices with the $\BZ_2$ symmetry.

Qubit stabilizer codes $(p=2)$ can be represented by classical codes over $\BF_4$ \cite{Calderbank:1996aj}.
Suppose that a stabilizer code has the following check matrix:
\begin{align}
    H = \left[
    \begin{array}{c|c}
    \,\alpha_1 \,&\, \beta_1\, \\
    \,\vdots \,&\, \vdots\, \\
    \,\alpha_{n} \,&\, \beta_{n}\,
    \end{array}\right],
\end{align}
where $\alpha_i,\beta_i\in\BZ^n_2$. 
We regard it as a generator matrix of a classical code $\CC\subset\BF_2^{2n}$. 
We map a codeword $c=(\alpha,\beta)\in\BF_2^n\times\BF_2^n$ of $\CC$ to a codeword $c\in\BF_4^n$ of the associated classical code over $\BF_4$ by the Gray map
\begin{align}
    \begin{aligned}
    0&\leftrightarrow (0,0)\,,&\qquad 1 &\leftrightarrow (1,1)\,,\\
    \omega&\leftrightarrow (1,0)\,,&\qquad \bar{\omega}&\leftrightarrow (0,1)\,.
    \end{aligned}
\end{align}
This map is an isomorphism under addition between $\BF_4$ and $\BF_2\times\BF_2$. 
Through the Gray map, the $i$-th component of $c\in\BF_4^n$ gives the $i$-th component of $\alpha$ and $\beta$ for $(\alpha,\beta)\in \BF_2^n\times\BF_2^n$.
For example,
\begin{align}
\begin{aligned}
\label{eq:gray_ex}
    \BF_4^2&\ni(1,\omega) &\quad &\longleftrightarrow \quad &(1,1\,|\,1,0)&\in\BF_2^2\times\BF_2^2\,,\\
    \BF_4^2&\ni(0,1)&\quad & \longleftrightarrow &\quad (0,1\,|\,0,1)&\in\BF_2^2\times\BF_2^2\,.
\end{aligned}
\end{align}

The Hamming distance of a codeword $c\in\BF_4$ is the number of non-zero elements $1,\omega,\bar{\omega}\in\BF_4$ of $c\in\BF_4^n$.
We can formulate it in the language of  $(\alpha,\beta)\in\BF_2^n\times\BF_2^n$.
Let $c\in \BF_4^n$ be a codeword, which is mapped to $(\alpha,\beta)\in\BF_2^n\times\BF_2^n$ through the Gray map.
Then, the Hamming distance of a codeword $c\in\BF_4$ can be counted by
\begin{align}
\label{eq:gf4_hamming}
    \mathrm{wt}(c) = \alpha\cdot\alpha + \beta\cdot\beta - \alpha\cdot\beta\ .
\end{align}
Let us check it using the examples \eqref{eq:gray_ex}. The first example $(1,\omega)\in\BF_4^2$ is mapped to $\alpha=(1,1)$ and $\beta=(1,0)$, and \eqref{eq:gf4_hamming} returns $\mathrm{wt}(c)=\alpha\cdot\alpha + \beta\cdot\beta - \alpha\cdot\beta=2+1-1=2$, which is correct since $(1,\omega)$ has no non-zero elements.
On the other hand, the Hamming distance of $(0,1)\in\BF_4^2$ is $1$. This can be reproduced by \eqref{eq:gf4_hamming} because $\mathrm{wt}(c)=\alpha\cdot\alpha + \beta\cdot\beta - \alpha\cdot\beta= 1 + 1 -1=1$.

In analogy with classical binary codes, we call a code over $\BF_4$ even if it has only codewords with even Hamming distance.
We extend this notion for $\BF_4^n$ to $\BF_2^n\times\BF_2^n$ using the Gray map.
We call $\CC\subset\BF_2^n\times\BF_2^n$ $\BF_4$-even if $\mathrm{wt}(c) = \alpha\cdot\alpha + \beta\cdot\beta - \alpha\cdot\beta \in 2\BZ$ for all codewords $c=(\alpha,\beta)\in\CC$.

\begin{proposition}
Let $\CC\subset\BF_2^n\times\BF_2^n$ be a doubly-even self-dual code with respect to the off-diagonal Lorentzian metric $\eta$.
If $\CC$ is $\BF_4$-even, then $\Bone_{2n}\in\CC$. Otherwise, $\Bone_{2n}\notin\CC$.
\label{prop:GF4_even}
\end{proposition}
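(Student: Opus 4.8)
The plan is to turn the assertion $\Bone_{2n}\in\CC$ into an orthogonality condition using self-duality, evaluate that condition in components, and then recognize it as $\BF_4$-evenness with the help of the doubly-even hypothesis; this handles the ``if'' and the ``otherwise'' statements at once, as the two directions of a single equivalence.

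First I would use $\CC=\CC^\perp$: this makes $\Bone_{2n}\in\CC$ equivalent to $\Bone_{2n}\odot c\equiv 0\pmod 2$ for every codeword $c\in\CC$. Next I would evaluate $\Bone_{2n}\odot c$. Since $\eta$ acts as the permutation swapping the two length-$n$ blocks of coordinates, it fixes $\Bone_{2n}$, so $\Bone_{2n}\odot c=\Bone_{2n}\cdot c=\sum_{i=1}^n(\alpha_i+\beta_i)$ for $c=(\alpha,\beta)$, and because $\alpha_i^2=\alpha_i$ in $\BF_2$ this equals $\alpha\cdot\alpha+\beta\cdot\beta$. Hence $\Bone_{2n}\in\CC$ if and only if $\alpha\cdot\alpha+\beta\cdot\beta\equiv 0\pmod 2$ for every $c=(\alpha,\beta)\in\CC$.

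Then I would bring in the $\BF_4$-weight. By \eqref{eq:gf4_hamming}, $\mathrm{wt}(c)=\alpha\cdot\alpha+\beta\cdot\beta-\alpha\cdot\beta$, so $\alpha\cdot\alpha+\beta\cdot\beta$ and $\mathrm{wt}(c)$ have the same parity as soon as $\alpha\cdot\beta$ is even. This is exactly what doubly-evenness provides: $c\odot c=\alpha\cdot\beta+\beta\cdot\alpha=2\,\alpha\cdot\beta$, so $c\odot c\in 4\BZ$ forces $\alpha\cdot\beta\in 2\BZ$ on every codeword. Therefore $\alpha\cdot\alpha+\beta\cdot\beta\equiv\mathrm{wt}(c)\pmod 2$, and combining with the previous step,
\[
  \Bone_{2n}\in\CC \iff \mathrm{wt}(c)\in 2\BZ\ \text{ for all }c\in\CC \iff \CC\ \text{is }\BF_4\text{-even},
\]
which is the claim.

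The one step that is not a routine manipulation is invoking doubly-evenness to pin down the parity of $\alpha\cdot\beta$; the reduction to orthogonality, the identity $\Bone_{2n}\eta=\Bone_{2n}$ that evaluates $\Bone_{2n}\odot c$, and reading off the definition of $\BF_4$-evenness are all immediate. I do not expect a further obstacle, but one should take care that the relevant self-intersection there is the Lorentzian one, $c\odot c=2\,\alpha\cdot\beta$, rather than the Euclidean self-product.
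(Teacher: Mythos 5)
Your proposal is correct and follows essentially the same route as the paper's proof: both use self-duality to convert $\Bone_{2n}\in\CC$ into the orthogonality condition $\Bone_{2n}\odot c\equiv 0\pmod 2$, evaluate this as $\alpha\cdot\alpha+\beta\cdot\beta$ via $x^2=x$ in $\BF_2$, and invoke doubly-evenness ($c\odot c=2\,\alpha\cdot\beta\in4\BZ$) to identify that parity with the parity of $\mathrm{wt}(c)$. The only cosmetic difference is that you organize the argument as a single chain of equivalences, whereas the paper proves the two directions separately.
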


\begin{proof}
Let $\CC$ be $\BF_4$-even. Then, a codeword $c = (\alpha,\beta)\in\CC$ satisfies
\begin{align}
\label{eq:GF4_even}
    \alpha\cdot\alpha + \beta\cdot\beta - \alpha\cdot\beta\in2\BZ\,.
\end{align}
Since $\CC$ is doubly-even with respect to $\eta$, we have
\begin{align}
    c \odot c = \alpha\cdot\beta + \beta\cdot\alpha = 2\alpha\cdot\beta \in4\BZ\ ,
\end{align}
which gives $\alpha\cdot\beta\in2\BZ$.
Combining it with \eqref{eq:GF4_even}, we obtain $\alpha\cdot\alpha + \beta\cdot\beta\in 2\BZ$.
Using the relation $\alpha\cdot\alpha=\Bone_n\cdot\alpha$ mod 2, we can write $\alpha\cdot\alpha + \beta\cdot\beta\in 2\BZ$ as
\begin{align}
\label{eq:GF(4)-even_doubly-even}
    \Bone_{2n}\odot c = 0 \qquad\mathrm{mod}\;\,2\ ,
\end{align}
where $c=(\alpha,\beta)$ and $\Bone_{2n}$ is the $2n$-dimensional row vector $(1,1,\cdots,1)$. The classical code $\CC$ is self-dual, so the above relation shows $\Bone_{2n}\in\CC^\perp = \CC$. If a classical code $\CC$ is not $\BF_4$-even, there exists a codeword $c=(\alpha,\beta)$ that does not satisfy \eqref{eq:GF4_even}. Then there exists a codeword $c\in\CC$ that does not satisfy $\Bone_{2n}\odot c = 0$ mod $2$, which concludes $\Bone_{2n}\notin \CC^\perp = \CC$.
\end{proof}

\subsubsection{$\BF_4$-even code}
\label{ss:GF(4)even}
Let $\CC$ be $\BF_4$-even, equivalently $\Bone_{2n}\in\CC$.
We take an element of the Construction A lattice $\Lambda(\CC)$
\begin{align}
    \chi = \frac{1}{\sqrt{2}}\,(1,1,\cdots,1)\in \Lambda(\CC).
\end{align}
Since a $\BF_4$-even code $\CC$ contains the all-one vector $\Bone_{2n}$, the Construction A lattice includes $\chi\in\Lambda(\CC)$.
Note that for a non-$\BF_4$-even code $\CC$, the vector $\Bone_{2n}$ is not contained in a code $\CC$ by Proposition \ref{prop:GF4_even}. Then, the following construction does not hold.
We separately discuss non-$\BF_4$-even codes in the next section.

Since the Construction A lattice $\Lambda(\CC)$ is integral, we can divide it into
\begin{align}
    \Lambda(\CC) = \Lambda_0 \cup \Lambda_1\ ,
\end{align}
where
\begin{align}
    \begin{aligned}
    \label{eq:lambda0lambda1_p=2}
    \Lambda_0 &= \{\lambda\in\Lambda(\CC)\,|\,\chi\odot \lambda = 0 \;\;\mathrm{mod}\;2\}\ ,\\
    \Lambda_1 &= \{\lambda\in\Lambda(\CC)\,|\,\chi\odot\lambda = 1 \;\;\mathrm{mod}\;2\}\ .
    \end{aligned}
\end{align}
We define a shift vector of length $2n$ by
\begin{align}
    \delta = \frac{\chi}{2} = \frac{1}{2\sqrt{2}}\,(1,1,\cdots,1)\ .
\end{align}
The non-anomalous condition for the $\BZ_2$ symmetry reduces to $n\in4\BZ$.
We define the following sets according to \eqref{eq:lattice_shift_23}:
\begin{align}
\label{eq:lambda2lambda3_p=2}
    \begin{aligned}
    \Lambda_2 &= 
    \begin{dcases}
    \Lambda_1 + \delta & (n\in8\BZ)\ ,\\
    \Lambda_0 + \delta & (n\in 8\BZ+4)\ ,
    \end{dcases}\\[0.1cm]
    \Lambda_3 &= 
    \begin{dcases}
    \Lambda_0 + \delta & (n\in8\BZ)\ ,\\
    \Lambda_1 + \delta & (n\in 8\BZ+4)\ .
    \end{dcases}
    \end{aligned}
\end{align}
As in the case of an odd prime $p\neq2$, we denote them by $\widetilde{\Lambda}_i$ in the momentum basis.
By combining each set, we obtain the momentum lattice in the orbifold theory, the set of momenta of the NS and R sector in the fermionized theory
\begin{align}
    \widetilde{\Lambda}_\CO = \widetilde{\Lambda}_0\cup\widetilde{\Lambda}_3\,,\qquad
    \widetilde{\Lambda}_\mathrm{NS} = \widetilde{\Lambda}_0\cup\widetilde{\Lambda}_2\,,\qquad \widetilde{\Lambda}_\mathrm{R} = \widetilde{\Lambda}_1\cup\widetilde{\Lambda}_3\,.
\end{align}

\begin{proposition}
The theta functions of $\widetilde{\Lambda}_0$ and $\widetilde{\Lambda}_1$ are
\begin{align}
    \begin{aligned}
    \Theta_{\widetilde{\Lambda}_0}(\tau,\bar{\tau}) &= \frac{1}{2} \left( W_\CC (\{\psi^+_{ab}\}) + W_\CC(\{\psi^-_{ab}\})\right)\,,\\
    \Theta_{\widetilde{\Lambda}_1}(\tau,\bar{\tau}) &= \frac{1}{2} \left( W_\CC (\{\psi^+_{ab}\}) - W_\CC(\{\psi^-_{ab}\})\right)\,,
    \end{aligned}
\end{align}
where
\begin{align}\label{psi_pm}
    \begin{aligned}
    \psi^+_{ab}(\tau,\bar{\tau}) &= \Theta_{a+b,2}(\tau)\,\overline{\Theta_{a-b,2}({\tau})} + \Theta_{a+b-2,2}(\tau)\,\overline{\Theta_{a-b-2,2}({\tau})}\,,\\
    \psi^-_{ab}(\tau,\bar{\tau}) &= (-1)^{\frac{a+b}{2}}\left( \Theta_{a+b,2}(\tau)\,\overline{\Theta_{a-b,2}({\tau})} - \Theta_{a+b-2,2}(\tau)\,\overline{\Theta_{a-b-2,2}({\tau})}\right)\,.
    \end{aligned}
\end{align}
\end{proposition}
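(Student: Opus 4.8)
The plan is to reproduce the proof of Proposition~\ref{prop:odd_p_theta01} with $p=2$; the only genuinely new ingredient is the arithmetic of the sign $(-1)^{\chi\odot\lambda}$ for the choice $\chi=\tfrac{1}{\sqrt2}\Bone_{2n}$. First, by Proposition~\ref{prop:enumerator=partition} specialized to $p=2$ we already have $\Theta_{\widetilde\Lambda(\CC)}(\tau,\bar\tau)=W_\CC(\{\psi^+_{ab}\})$ with $\psi^+_{ab}$ as in \eqref{psi_pm}. Then I would introduce the signed theta function
\begin{align}
\Theta'_{\widetilde\Lambda(\CC)}(\tau,\bar\tau)=\sum_{(p_L,p_R)\in\widetilde\Lambda(\CC)}(-1)^{\chi\odot\lambda}\,q^{\frac{p_L^2}{2}}\,\bar q^{\frac{p_R^2}{2}}\,,
\end{align}
where $\lambda=(\lambda_1,\lambda_2)\in\Lambda(\CC)$ corresponds to $(p_L,p_R)$ via \eqref{eq:orthogonal_transf}. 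Since $(-1)^{\chi\odot\lambda}$ equals $+1$ exactly on $\Lambda_0$ and $-1$ exactly on $\Lambda_1$, this gives $\Theta_{\widetilde\Lambda_0}=\tfrac12\bigl(\Theta_{\widetilde\Lambda(\CC)}+\Theta'_{\widetilde\Lambda(\CC)}\bigr)$ and $\Theta_{\widetilde\Lambda_1}=\tfrac12\bigl(\Theta_{\widetilde\Lambda(\CC)}-\Theta'_{\widetilde\Lambda(\CC)}\bigr)$, so the whole statement reduces to identifying $\Theta'_{\widetilde\Lambda(\CC)}=W_\CC(\{\psi^-_{ab}\})$.

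The main computation is to evaluate the sign. Writing $\lambda_1=(\alpha+2k_1)/\sqrt2$ and $\lambda_2=(\beta+2k_2)/\sqrt2$ for a codeword $(\alpha,\beta)\in\CC$ and $k_1,k_2\in\BZ^n$, the Lorentzian inner product with $\eta$ gives
\begin{align}
\chi\odot\lambda=\tfrac12\,\Bone_n\cdot(\alpha+\beta)+\Bone_n\cdot(k_1+k_2)=\sum_{i=1}^n\Bigl(\tfrac{\alpha_i+\beta_i}{2}+k_{1,i}+k_{2,i}\Bigr)\,.
\end{align}
The key point is that the total shift $\tfrac12\Bone_n\cdot(\alpha+\beta)$ is an integer for every codeword: doubly-even self-duality forces $\alpha\cdot\beta\in2\BZ$, and $\BF_4$-evenness forces $\alpha\cdot\alpha+\beta\cdot\beta-\alpha\cdot\beta\in2\BZ$, so that $\Bone_n\cdot(\alpha+\beta)=\alpha\cdot\alpha+\beta\cdot\beta\in2\BZ$ (recall $\alpha\cdot\alpha\equiv\Bone_n\cdot\alpha\pmod 2$). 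Hence $(-1)^{\chi\odot\lambda}$ factorizes site by site into phases $e^{\pi\i(\alpha_i+\beta_i)/2}\,(-1)^{k_{1,i}+k_{2,i}}$, each depending only on $(\alpha_i,\beta_i)$ and $(k_{1,i},k_{2,i})$. For sites with $\alpha_i\neq\beta_i$ the phase $e^{\pi\i(\alpha_i+\beta_i)/2}=\pm\i$ is genuinely complex, but $\BF_4$-evenness guarantees the number of such sites in any codeword is even, so every product below is real.

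Feeding this factorization into $\Theta'_{\widetilde\Lambda(\CC)}$ and regrouping sites by the value of $(\alpha_i,\beta_i)\in\BF_2\times\BF_2$, exactly as in \eqref{eq:psi-}, yields
\begin{align}
\Theta'_{\widetilde\Lambda(\CC)}(\tau,\bar\tau)=\sum_{c\in\CC}\ \prod_{(a,b)\in\BF_2\times\BF_2}\bigl(\psi^-_{ab}(\tau,\bar\tau)\bigr)^{\mathrm{wt}_{ab}(c)}=W_\CC(\{\psi^-_{ab}\})\,,
\end{align}
with $\psi^-_{ab}(\tau,\bar\tau)=(-1)^{(a+b)/2}\sum_{k_1,k_2\in\BZ}(-1)^{k_1+k_2}\,q^{\frac12(\frac{a+b}{2}+k_1+k_2)^2}\bar q^{\frac12(\frac{a-b}{2}+k_1-k_2)^2}$. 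It then remains to resum this double series: after the substitution $u=k_1+k_2$, $v=k_1-k_2$ (which range over the integers subject to $u\equiv v\bmod 2$), the weight $(-1)^{k_1+k_2}=(-1)^u$ equals $+1$ on the part with $u,v$ even and $-1$ on the part with $u,v$ odd; recognizing these two pieces, via \eqref{eq:Theta-m-k-def}, as $\Theta_{a+b,2}(\tau)\,\overline{\Theta_{a-b,2}(\tau)}$ and $\Theta_{a+b-2,2}(\tau)\,\overline{\Theta_{a-b-2,2}(\tau)}$ respectively reproduces the $\psi^-_{ab}$ displayed in \eqref{psi_pm}. Combining with $\Theta_{\widetilde\Lambda_0}=\tfrac12(\Theta_{\widetilde\Lambda(\CC)}\pm\Theta'_{\widetilde\Lambda(\CC)})$ then completes the proof. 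I expect the one delicate point to be the bookkeeping of the half-integer phase $(-1)^{(a+b)/2}$—checking that it is factored out consistently per site and that $\BF_4$-evenness makes $W_\CC(\{\psi^-_{ab}\})$ a well-defined, real function—while the surrounding theta-function identities are the same as in Propositions~\ref{prop:enumerator=partition} and \ref{prop:odd_p_theta01}.
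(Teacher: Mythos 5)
Your proposal is correct and follows essentially the same route as the paper: you introduce the theta function weighted by $(-1)^{\chi\odot\lambda}$, use the doubly-even self-duality together with $\BF_4$-evenness to show $\tfrac12\Bone_n\cdot(\alpha+\beta)\in\BZ$ so the sign factorizes site by site into the $\psi^-_{ab}$ building blocks, and resum via $u=k_1+k_2$, $v=k_1-k_2$ exactly as the paper does. Your extra remark that the $\pm\i$ phases at sites with $\alpha_i\neq\beta_i$ pair up to give a real product is a nice sanity check the paper leaves implicit (and is also automatic since $\psi^-_{01}=\psi^-_{10}=0$), but it does not change the argument.
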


\begin{proof}
From Proposition \ref{prop:enumerator=partition}, the lattice theta function of the Construction A lattice $\Lambda(\CC)$ can be computed from the complete enumerator polynomial.
To divide $\Lambda(\CC)$ into two subsets $\widetilde{\Lambda}_0$ and $\widetilde{\Lambda}_1$, it is natural to introduce the $\BZ_2$ grading of $\Lambda(\CC)$.
Let $\lambda=(\lambda_1,\lambda_2)$ be an element of $\Lambda(\CC)$, which can be written as
\begin{align}
\label{eq:lattice_vector}
    \lambda_1 = \frac{\alpha+2k_1}{\sqrt{2}}\,,\qquad
    \lambda_2 = \frac{\beta+2k_2}{\sqrt{2}}\,,\qquad
    k_1\,,\,k_2\in\BZ^n\,,
\end{align}
for a codeword $(\alpha,\beta)\in\CC$. The $\BZ_2$ grading given by $\chi\odot\lambda$ reduces to the one by
\begin{align}
    \chi\odot \lambda = \frac{\Bone_n}{2}\cdot (\alpha+\beta) +  \,\Bone_n\cdot (k_1+k_2)\,,
\end{align}
where we have $\Bone_n\cdot (\alpha+\beta)\in2\BZ$ from \eqref{eq:GF(4)-even_doubly-even}. This ensures $\chi\odot\lambda\in\BZ$.
The lattice theta function weighted by $\chi\odot\lambda$ is
\begin{align}
    \begin{aligned}
    \Theta'_{\widetilde{\Lambda}(\CC)}(\tau,\bar{\tau}) &= \sum_{(\alpha,\beta)\,\in\,\CC}
    \,\sum_{k_1,k_2\,\in\,\BZ^n}\,(-1)^{\chi\odot\lambda}\,
    q^{\frac{1}{2}\left(\frac{\alpha+\beta}{2}+k_1+k_2\right)^2} \bar{q}^{\frac{1}{2}\left(\frac{\alpha-\beta}{2}+k_1-k_2\right)^2} \\
    &= \sum_{(\alpha,\beta)\,\in\,\CC} \,\prod_{i=1}^n \,\psi^-_{\alpha_i\beta_i}(\tau,\bar{\tau})\\ 
    &= \sum_{c\,\in\,\CC}\, \prod_{(a,b)\,\in\,\BF_2\times\BF_2}\,\left(\psi^-_{ab}(\tau,\bar{\tau}) \right)^{\mathrm{wt}_{ab}(c)} \\
    &= W_\CC(\{\psi^-_{ab}\})\,,
    \end{aligned}
\end{align}
where, for $(a,b)\in\BF_2\times\BF_2$, we define
\begin{align}
\begin{aligned}
    \psi^-_{ab}(\tau,\bar{\tau}) &= (-1)^{\frac{a+b}{2}} \sum_{k_1,k_2\in\BZ}\, (-1)^{k_1+k_2}\,q^{\frac{1}{2}\left(\frac{a+b}{2}+k_1+k_2\right)^2} \bar{q}^{\frac{1}{2}\left(\frac{a-b}{2}+k_1-k_2\right)^2} \\
    &= (-1)^{\frac{a+b}{2}}\left(\Theta_{a+b,2}(\tau)\,\overline{\Theta_{a-b,2}({\tau})} - \Theta_{a+b-2,2}(\tau)\,\overline{\Theta_{a-b-2,2}({\tau})}\right)\,.
\end{aligned}
\end{align}
Since the weighted theta function $\Theta'_{\widetilde{\Lambda}(\CC)}(\tau,\bar{\tau})$ has positive coefficients for $\Lambda_0$ and negative ones for $\Lambda_1$, the theta functions of each set are given by
\begin{align}
    \begin{aligned}
    \Theta_{\widetilde{\Lambda}_0}(\tau,\bar{\tau}) 
        &= \frac{1}{2}\left[\Theta_{\widetilde{\Lambda}(\CC)}(\tau,\bar{\tau}) +\Theta'_{\widetilde{\Lambda}(\CC)}(\tau,\bar{\tau})\right] \\
        &= \frac{1}{2}\left[ W_\CC(\{\psi^+_{ab}\})+W_\CC(\{\psi^-_{ab}\})\right] \,,\\
    \Theta_{\widetilde{\Lambda}_1}(\tau,\bar{\tau}) 
        &= \frac{1}{2}\left[ \Theta_{\widetilde{\Lambda}(\CC)}(\tau,\bar{\tau}) -\Theta'_{\widetilde{\Lambda}(\CC)}(\tau,\bar{\tau})\right] \\
        &= \frac{1}{2}\left[W_\CC(\{\psi^+_{ab}\})-W_\CC(\{\psi^-_{ab}\})\right]\,.
    \end{aligned}
\end{align}
\end{proof}

\begin{proposition}
The theta functions of $\widetilde{\Lambda}_2$ and $\widetilde{\Lambda}_3$ are
\begin{align}
    \begin{aligned}
    \Theta_{\widetilde{\Lambda}_2}(\tau,\bar{\tau}) &= \frac{1}{2} \left[ W_\CC (\{\tilde{\psi}^+_{ab}\}) - W_\CC(\{\tilde{\psi}^-_{ab}\})\right]\,,\\
    \Theta_{\widetilde{\Lambda}_3}(\tau,\bar{\tau}) &= \frac{1}{2} \left[ W_\CC (\{\tilde{\psi}^+_{ab}\}) + W_\CC(\{\tilde{\psi}^-_{ab}\})\right]\,,
    \end{aligned}
\end{align}
where
\begin{align}\label{psi_tilde_pm}
    \begin{aligned}
    \tilde{\psi}^+_{ab}(\tau,\bar{\tau}) &= \Theta_{a+b+1,2}(\tau)\,\overline{\Theta_{a-b,2}({\tau})} + \Theta_{a+b-1,2}(\tau)\,\overline{\Theta_{a-b-2,2}({\tau})}\,,\\
    \tilde{\psi}^-_{ab}(\tau,\bar{\tau}) &= e^{\pi\i\,\left(a+\frac{1}{2}\right) \left(b+\frac{1}{2}\right)}\left(\Theta_{a+b+1,2}(\tau)\,\overline{\Theta_{a-b,2}({\tau})} - \Theta_{a+b-1,2}(\tau)\,\overline{\Theta_{a-b-2,2}({\tau})}\right)\,.
    \end{aligned}
\end{align}
\end{proposition}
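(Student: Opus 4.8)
The plan is to mimic the proof of Proposition~\ref{prop:odd_p_theta23}, replacing the odd prime $p$ by $2$ and using the $\BF_4$-even data of the previous subsection. The starting observation is that, by the definition~\eqref{eq:lambda2lambda3_p=2}, the union $\widetilde{\Lambda}_2\cup\widetilde{\Lambda}_3$ is exactly the translate of the whole Construction A lattice $\widetilde{\Lambda}(\CC)$ by $\delta=\tfrac{\chi}{2}=\tfrac{1}{2\sqrt{2}}(1,\dots,1)$. Writing a generic vector of $\Lambda(\CC)$ as in~\eqref{eq:lattice_vector} and passing to the momentum basis~\eqref{eq:orthogonal_transf}, the shift by $\delta$ — being proportional to the all-ones vector — translates each left-moving momentum component by $\tfrac12$ and leaves the right-moving components fixed. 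Hence the theta function of $\widetilde{\Lambda}_2\cup\widetilde{\Lambda}_3$ factorizes over the $n$ components just as in Proposition~\ref{prop:enumerator=partition}, with $\psi^+_{ab}$ replaced by
\[
\tilde{\psi}^+_{ab}(\tau,\bar{\tau})=\sum_{k_1,k_2\in\BZ}q^{\frac12\left(\frac{a+b+1}{2}+k_1+k_2\right)^2}\bar{q}^{\,\frac12\left(\frac{a-b}{2}+k_1-k_2\right)^2}\,.
\]
Splitting this sum according to the common parity of $k_1+k_2$ and $k_1-k_2$, and using $\Theta_{m,2}=\Theta_{m+4,2}$, reproduces the expression for $\tilde{\psi}^+_{ab}$ stated in~\eqref{psi_tilde_pm}; thus $\Theta_{\widetilde{\Lambda}_2}+\Theta_{\widetilde{\Lambda}_3}=W_\CC(\{\tilde{\psi}^+_{ab}\})$.

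The second step is to isolate $\widetilde{\Lambda}_2$ from $\widetilde{\Lambda}_3$ using the norm. By Proposition~\ref{prop:norm_general}, vectors of $\Lambda_2$ have odd norm and vectors of $\Lambda_3$ have even norm with respect to $\eta$; since $p_L^2-p_R^2=\lambda\odot\lambda$ under~\eqref{eq:orthogonal_transf}, this means $p_L^2-p_R^2$ is odd on $\widetilde{\Lambda}_2$ and even on $\widetilde{\Lambda}_3$. Therefore the modular $T$-transformation $\tau\to\tau+1$, which multiplies the summand $q^{p_L^2/2}\bar{q}^{\,p_R^2/2}$ by $(-1)^{p_L^2-p_R^2}$, changes the sign of the $\widetilde{\Lambda}_2$ contribution while preserving the $\widetilde{\Lambda}_3$ one. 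Writing $\widetilde{N}:=\widetilde{\Lambda}_2\cup\widetilde{\Lambda}_3$, this gives
\[
\Theta_{\widetilde{\Lambda}_2}=\tfrac12\left[\Theta_{\widetilde{N}}(\tau,\bar{\tau})-\Theta_{\widetilde{N}}(\tau+1,\bar{\tau}+1)\right]\,,\qquad
\Theta_{\widetilde{\Lambda}_3}=\tfrac12\left[\Theta_{\widetilde{N}}(\tau,\bar{\tau})+\Theta_{\widetilde{N}}(\tau+1,\bar{\tau}+1)\right]\,,
\]
and it remains to evaluate $\tilde{\psi}^+_{ab}(\tau+1,\bar{\tau}+1)$.

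For the last step I would use $\Theta_{m,2}(\tau+1)=e^{\pi\i m^2/4}\,\Theta_{m,2}(\tau)$: the two terms of $\tilde{\psi}^+_{ab}$ then acquire the phases $e^{\pi\i[(a+b+1)^2-(a-b)^2]/4}$ and $e^{\pi\i[(a+b-1)^2-(a-b-2)^2]/4}$. A direct computation, expanding the squares and discarding integer multiples of $2$ from the coefficient of $\pi\i$, shows that these equal $e^{\pi\i(a+\frac12)(b+\frac12)}$ and $-\,e^{\pi\i(a+\frac12)(b+\frac12)}$ respectively, so that $\tilde{\psi}^+_{ab}(\tau+1,\bar{\tau}+1)=\tilde{\psi}^-_{ab}(\tau,\bar{\tau})$ with $\tilde{\psi}^-_{ab}$ as in~\eqref{psi_tilde_pm}. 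Substituting $W_\CC(\{\tilde{\psi}^+_{ab}(\tau+1,\bar{\tau}+1)\})=W_\CC(\{\tilde{\psi}^-_{ab}\})$ into the two displayed formulas above then yields the proposition. The only point that needs genuine care is this phase bookkeeping: one must check that the relative sign between the two terms is exactly $-1$, so that under $T$ the whole of $\tilde{\psi}^+_{ab}$ becomes a single overall phase $e^{\pi\i(a+\frac12)(b+\frac12)}$ times the \emph{difference} of the two theta-function products, matching~\eqref{psi_tilde_pm} rather than some other combination; everything else is the same disjoint-union and $T$-transformation argument already carried out for $p\neq2$.
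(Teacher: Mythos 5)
Your proof is correct and follows essentially the same route as the paper's: express $\Theta_{\widetilde{\Lambda}_2\cup\widetilde{\Lambda}_3}$ as $W_\CC(\{\tilde{\psi}^+_{ab}\})$ via the shift by $\delta$, then separate the two pieces using the modular $T$-transformation together with the odd/even norm dichotomy of Proposition \ref{prop:norm_general}. Your explicit phase bookkeeping $(a+b+1)^2-(a-b)^2=(2a+1)(2b+1)$ and $(a+b-1)^2-(a-b-2)^2=(2a-3)(2b+1)$, giving the relative sign $e^{-\pi\i(2b+1)}=-1$, correctly verifies the step the paper only states.
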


\begin{proof}
Let $(\lambda_1,\lambda_2)$ be an element of $\Lambda(\CC)$ given by \eqref{eq:lattice_vector}. 
The shift of the Construction A lattice by $\chi=\frac{\chi}{2}$ endows $N(\Lambda(\CC)): =\Lambda_2\cup\Lambda_3$.
Therefore, an element $(\lambda_1,\lambda_2)$ of $\Lambda_2\cup\Lambda_3$ can be written as
\begin{align}
\label{eq:element_23}
    \lambda_1 = \frac{\alpha}{\sqrt{2}} + \sqrt{2}\,\left(k_1 + \frac{\Bone_n}{4}\right)\,,\qquad
    \lambda_2 = \frac{\beta}{\sqrt{2}} + \sqrt{2}\,\left(k_2 + \frac{\Bone_n}{4}\right)\,,
\end{align}
where $(\alpha,\beta)\in\CC$ and $k_1$, $k_2\in\BZ^n$.

To compute the corresponding theta function, we move onto the momentum basis.
We denote $N(\Lambda(\CC))$ by $\widetilde{N}(\Lambda(\CC))$ in the momentum basis. Then, an element $(p_L,p_R) = (\frac{\lambda_1+\lambda_2}{\sqrt{2}},\frac{\lambda_1-\lambda_2}{\sqrt{2}})\in \widetilde{N}(\Lambda(\CC))$ is
\begin{align}
    p_L = \frac{\alpha+\beta+2(k_1+k_2)}{2}+\frac{\Bone_n}{2}\,,\qquad
    p_R = \frac{\alpha-\beta+2(k_1-k_2)}{2}\,,
\end{align}
where $(\lambda_1,\lambda_2)\in N(\Lambda(\CC))$ given in \eqref{eq:element_23}.
Then, the theta function associated with $\widetilde{N}(\Lambda(\CC))$ is
\begin{align}
\begin{aligned}
    \Theta_{\widetilde{N}(\Lambda(\CC))}(\tau,\bar{\tau}) 
        &= \sum_{(\alpha,\beta)\,\in\,\CC} \sum_{k_1,k_2\,\in\,\BZ^n}\,q^{\frac{1}{2}\left(\frac{\alpha+\beta}{2}+k_1+k_2 +\frac{\Bone_n}{2}\right)^2}\,\bar{q}^{\frac{1}{2}\left(\frac{\alpha-\beta}{2}+k_1-k_2 \right)^2}\,\\
        &= \sum_{(\alpha,\beta)\,\in\,\CC}\, \prod_{i=1}^n\,\tilde{\psi}^+_{\alpha_i\beta_i}(\tau,\bar{\tau}) \\
        &= \sum_{c\,\in\,\CC}\,\prod_{(a,b)\,\in\,\BF_2\times\BF_2}\,\left(\tilde{\psi}^+_{ab}(\tau,\bar{\tau})\right)^{\mathrm{wt}_{ab}(c)} \\
        &= W_\CC(\{\tilde{\psi}^+_{ab}\})\,,
\end{aligned}
\end{align}
where, for $(a,b)\in\BF_2\times\BF_2$, we define
\begin{align}
    \begin{aligned}
    \tilde{\psi}^+_{ab}(\tau,\bar{\tau})  
        &= \sum_{k_1,k_2\,\in\,\BZ} \, q^{\frac{1}{2}\left(\frac{a+b}{2}+k_1+k_2+\frac{1}{2}\right)^2}\,\bar{q}^{\frac{1}{2}\left(\frac{a-b}{2}+k_1-k_2\right)^2} \\
        &=  \Theta_{a+b+1,2}(\tau)\,\overline{\Theta_{a-b,2}({\tau})} + \Theta_{a+b-1,2}(\tau)\,\overline{\Theta_{a-b-2,2}({\tau})}\,.
    \end{aligned}
\end{align}
Therefore, the shift of the Construction A lattice $\Lambda(\CC)$ by $\delta=\frac{\chi}{2}$ changes the theta function by the replacement of variables: $\psi^+_{ab}\mapsto \tilde{\psi}^+_{ab}$.

Let us divide $\widetilde{N}(\Lambda(\CC))$ into two subsets $\widetilde{\Lambda}_2$ and $\widetilde{\Lambda}_3$.
Note that the set $\widetilde{\Lambda}_2$ has only odd norm, while $\widetilde{\Lambda}_3$ has only even norm, which can be read off from Proposition \ref{prop:norm_general}.
Since the modular $T$ transformation $\tau\to \tau+1$ acts as
\begin{align}
    \begin{aligned}
    \Theta_{\widetilde{N}(\Lambda(\CC))}(\tau+1,\bar{\tau}+1) = \sum_{(p
    _L,p_R)\,\in\,\widetilde{N}(\Lambda(\CC))} (-1)^{p_L^2-p_R^2}\,\, q^{\frac{p_L^2}{2}}\,\bar{q}^{\frac{p_R^2}{2}}\,,
    \end{aligned}
\end{align}
we obtain the theta functions of each set $\widetilde{\Lambda}_2$ and $\widetilde{\Lambda}_3$
\begin{align}
    \begin{aligned}
    \Theta_{\widetilde{\Lambda}_2}(\tau,\bar{\tau}) 
        &= \frac{1}{2}\left[\Theta_{\widetilde{N}(\Lambda(\CC))}(\tau,\bar{\tau}) -\Theta_{\widetilde{N}(\Lambda(\CC))}(\tau+1,\bar{\tau}+1)\right]\,,\\
    \Theta_{\widetilde{\Lambda}_3}(\tau,\bar{\tau}) 
        &= \frac{1}{2}\left[\Theta_{\widetilde{N}(\Lambda(\CC))}(\tau,\bar{\tau}) +\Theta_{\widetilde{N}(\Lambda(\CC))}(\tau+1,\bar{\tau}+1)\right]\,.
    \end{aligned}
\end{align}
The theta function of $\widetilde{N}(\Lambda(\CC))$ depends on the torus modulus $\tau$ only through $\tilde{\psi}^+_{ab}(\tau,\bar{\tau})$.
Under the modular $T$ transformation $\tau\to\tau+1$, it behaves as
\begin{align}
    \tilde{\psi}^+_{ab}(\tau+1,\bar{\tau}+1) = e^{\pi\i\,\left(a+\frac{1}{2}\right)\left(b+\frac{1}{2}\right)}\,\left(\Theta_{a+b+1,2}(\tau)\,\bar{\Theta}_{a-b,2}(\bar{\tau}) - \Theta_{a+b-1,2}(\tau)\,\bar{\Theta}_{a-b-2,2}(\bar{\tau})\right)\,.
\end{align}
Then, we substitute the above term into the complete enumerator polynomial instead of the modular $T$ transformation. 
\end{proof}

Note that, in the binary case $(p=2)$, the theta functions are
\begin{align}
    \begin{aligned}
    \label{eq:theta_p=2}
    \Theta_{0,2} = \frac{\vartheta_3+\vartheta_4}{2}\,,\qquad 
    \Theta_{1,2} = \frac{\vartheta_2}{2}\,,\qquad
    \Theta_{2,2} = \frac{\vartheta_3-\vartheta_4}{2}\,,\qquad
    \Theta_{3,2} = \frac{\vartheta_2}{2}\,,
    \end{aligned}
\end{align}
where $\vartheta_i(\tau)$ ($i=2,3,4$) are the Jacobi theta functions.\footnote{We use the convention of \cite{Polchinski:1998rq}.}
Then, we have the following:
\begin{align}
    \begin{aligned}
        \psi^+_{00} &= \frac{\vartheta_3\,\bar{\vartheta}_3+\vartheta_4\,\bar{\vartheta}_4}{2}\,,&\qquad
        \psi^+_{01} &=\psi^+_{10} = \frac{\vartheta_2\,\bar{\vartheta}_2}{2}\,,&\qquad
        \psi^+_{11} &= \frac{\vartheta_3\,\bar{\vartheta}_3-\vartheta_4\,\bar{\vartheta}_4}{2}\,,\\
    
        \psi^-_{00} &= \frac{\vartheta_3\,\bar{\vartheta}_4+\vartheta_3\,\bar{\vartheta}_4}{2}\,,&\qquad
        \psi^-_{01} &= 
        \psi^-_{10} = 0\,,&\quad 
        \psi^-_{11} &= \frac{\vartheta_4\,\bar{\vartheta}_3-\vartheta_3\,\bar{\vartheta}_4}{2}\,, \\
    
        \tilde{\psi}^+_{00} &= \frac{\vartheta_2\,\bar{\vartheta}_3}{2}\,,&\qquad
        \tilde{\psi}^+_{01}&= \tilde{\psi}^+_{10} = \frac{\vartheta_3\,\bar{\vartheta}_2}{2}\,,&\quad 
        \tilde{\psi}^+_{11} &= \frac{\vartheta_2\,\bar{\vartheta}_3}{2}\,,\\
    
        \tilde{\psi}^-_{00} &= e^{\frac{\pi\i}{4}}\, \frac{\vartheta_2\,\bar{\vartheta}_4}{2}\,,&\qquad
        \tilde{\psi}^-_{01} &= \tilde{\psi}^-_{10} =\,\, e^{-\frac{\pi\i}{4}}\, \frac{\vartheta_4\,\bar{\vartheta}_2}{2}\,,&\quad 
        \tilde{\psi}^-_{11} &= e^{\frac{\pi\i}{4}}\, \frac{\vartheta_2\,\bar{\vartheta}_4}{2}\,. 
    \end{aligned}
\end{align}
Substituting the above theta functions into the lattice theta function of $\Lambda_\CO$, it reproduces the result in Eq.(3.45) in \cite{Dymarsky:2020qom} for the case with $n\in 8\BZ+4$ which they focus on.

\subsubsection{Non-$\BF_4$-even code}
Let a classical code $\CC$ be non-$\BF_4$-even.
Then, the classical code does not contain the all-one vector: $\Bone_{2n}\notin\CC$.
In this case, we choose the following lattice vector of length $2n$:
\begin{align}
\label{eq:choice_nongf4}
    \chi = \sqrt{2}\,(1,1,\cdots,1)\in\Lambda(\CC)\,,
\end{align}
where its half $\delta=\frac{\chi}{2}$ is not an element of the lattice $\Lambda(\CC)$ due to the absence of $\Bone_{2n}$ in the code $\CC$. The vector $\chi$ satisfies the non-anomalous condition: $\delta\odot\delta=n\in\BZ$.
The Construction A lattice $\Lambda(\CC)$ is divided into two parts according to \eqref{eq:lattice_divide} as
\begin{align}
    \Lambda=\Lambda_0\cup\Lambda_1\,,
\end{align}
where
\begin{align}
\begin{aligned}
    \Lambda_0 
        &=
        \left\{\lambda\in\Lambda(\CC)\,|\,\chi\odot\lambda =0 \;\;\mathrm{mod}\;\,2\right\}\,,\\
    \Lambda_1 
        &=
        \left\{\lambda\in\Lambda(\CC)\,|\,\chi\odot\lambda =1 \;\;\mathrm{mod}\;\,2\right\}\,.
\label{eq:lambda0lambda1_nonGF(4)}
\end{aligned}
\end{align}
We define the sets shifted by the vector $\delta$
\begin{align}
    \begin{aligned}
    \label{eq:lambda2lambda3_nonGF(4)}
    \Lambda_2 &=
    \begin{dcases}
    \Lambda_1 + \delta & \quad (n\in2\BZ)\,,\\
    \Lambda_0 + \delta & \quad (n\in2\BZ+1)\,,
    \end{dcases}\\
    \Lambda_3 &=
    \begin{dcases}
    \Lambda_0 + \delta & \quad (n\in2\BZ)\,,\\
    \Lambda_1 + \delta & \quad (n\in2\BZ+1)\,.
    \end{dcases}
    \end{aligned}
\end{align}
In the momentum basis, we denote them by $\widetilde{\Lambda}_i$.

\begin{proposition}
The theta functions of $\widetilde{\Lambda}_0$ and $\widetilde{\Lambda}_1$ are given by
\begin{align}
    \begin{aligned}
    \Theta_{\widetilde{\Lambda}_0}(\tau,\bar{\tau}) &= \frac{1}{2} \left[ W_\CC (\{\psi^+_{ab}\}) + W_\CC(\{\psi^-_{ab}\})\right]\,,\\
    \Theta_{\widetilde{\Lambda}_1}(\tau,\bar{\tau}) &= \frac{1}{2} \left[ W_\CC (\{\psi^+_{ab}\}) - W_\CC(\{\psi^-_{ab}\})\right]\,,
    \end{aligned}
\end{align}
where 
\begin{align}
    \begin{aligned}
    \psi^+_{ab}(\tau,\bar{\tau}) &= \Theta_{a+b,2}(\tau)\,\bar{\Theta}_{a-b,2}(\bar{\tau}) + \Theta_{a+b-2,2}(\tau)\,\bar{\Theta}_{a-b-2,2}(\bar{\tau})\,,\\
    \psi^-_{ab}(\tau,\bar{\tau}) &= (-1)^{a+b}\,\psi^+_{ab}(\tau,\bar{\tau})\,.
    \end{aligned}
\end{align}
\end{proposition}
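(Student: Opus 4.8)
The plan is to transcribe the argument used for the $\BF_4$-even case, so I would start from the identity $\Theta_{\widetilde{\Lambda}(\CC)}(\tau,\bar\tau)=W_\CC(\{\psi^+_{ab}\})$ coming from Proposition~\ref{prop:enumerator=partition} (its $p=2$ specialization), and isolate $\widetilde{\Lambda}_0$ and $\widetilde{\Lambda}_1$ by inserting the $\BZ_2$ grading $(-1)^{\chi\odot\lambda}$ into the lattice sum. First I would write a vector $\lambda=(\lambda_1,\lambda_2)\in\Lambda(\CC)$ as in \eqref{eq:lattice_vector}, namely $\lambda_1=(\alpha+2k_1)/\sqrt{2}$, $\lambda_2=(\beta+2k_2)/\sqrt{2}$ for a codeword $(\alpha,\beta)\in\CC$ and $k_1,k_2\in\BZ^n$, and compute $\chi\odot\lambda$ with $\chi=\sqrt{2}\,\Bone_{2n}$. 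Since the off-diagonal metric $\eta$ interchanges the two halves, $\chi\odot\lambda=\sqrt{2}\,\Bone_n\cdot(\lambda_1+\lambda_2)=\Bone_n\cdot(\alpha+\beta)+2\,\Bone_n\cdot(k_1+k_2)$, so modulo $2$ only the codeword part survives and $(-1)^{\chi\odot\lambda}=\prod_{i=1}^{n}(-1)^{\alpha_i+\beta_i}$, a weight that factorizes over the code positions.

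With this factorized sign, the weighted theta function
\[
  \Theta'_{\widetilde{\Lambda}(\CC)}(\tau,\bar\tau)
  =\sum_{(p_L,p_R)\in\widetilde{\Lambda}(\CC)}(-1)^{\chi\odot\lambda}\,q^{p_L^2/2}\,\bar q^{p_R^2/2}
\]
becomes a product over the $n$ code positions, the $i$-th factor being $(-1)^{\alpha_i+\beta_i}\psi^+_{\alpha_i\beta_i}(\tau,\bar\tau)$; grouping positions by the value $(a,b)\in\BF_2\times\BF_2$ of $(\alpha_i,\beta_i)$ gives $\Theta'_{\widetilde{\Lambda}(\CC)}=W_\CC(\{\psi^-_{ab}\})$ with $\psi^-_{ab}=(-1)^{a+b}\psi^+_{ab}$, which is the claimed expression (the sign $(-1)^{a+b}$ is well defined on $\BF_2$ since it depends only on $a+b$ mod $2$). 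Then, directly from the definition \eqref{eq:lambda0lambda1_nonGF(4)} of $\Lambda_0$ and $\Lambda_1$, the even/odd projections give $\Theta_{\widetilde{\Lambda}_0}=\tfrac12(\Theta_{\widetilde{\Lambda}(\CC)}+\Theta'_{\widetilde{\Lambda}(\CC)})$ and $\Theta_{\widetilde{\Lambda}_1}=\tfrac12(\Theta_{\widetilde{\Lambda}(\CC)}-\Theta'_{\widetilde{\Lambda}(\CC)})$, which unpack into the two displayed formulas.

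I do not expect a genuine obstacle here: this is a lighter version of the $\BF_4$-even computation. The one point to be careful about is precisely the normalization of $\chi$: here $\chi=\sqrt{2}\,\Bone_{2n}$ (the half-all-ones vector is \emph{not} in the lattice because $\CC$ is non-$\BF_4$-even), so the lattice-shift integers $k_1,k_2$ contribute only through the even term $2\,\Bone_n\cdot(k_1+k_2)$ and drop out mod $2$. This is what makes $\psi^-_{ab}$ collapse to the simple sign twist $(-1)^{a+b}\psi^+_{ab}$, rather than the difference of products of $\Theta_{m,2}$'s that appears in \eqref{psi_pm} for the $\BF_4$-even choice $\chi=\tfrac{1}{\sqrt{2}}\Bone_{2n}$.
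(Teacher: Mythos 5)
Your proposal is correct and follows essentially the same route as the paper: write a Construction A vector as $\lambda_1=(\alpha+2k_1)/\sqrt{2}$, $\lambda_2=(\beta+2k_2)/\sqrt{2}$, observe that $\chi\odot\lambda=\Bone_n\cdot(\alpha+\beta)+2\,\Bone_n\cdot(k_1+k_2)$ so the sign $(-1)^{\chi\odot\lambda}$ factorizes over code positions and depends only on the codeword, identify the weighted theta function with $W_\CC(\{\psi^-_{ab}\})$ where $\psi^-_{ab}=(-1)^{a+b}\psi^+_{ab}$, and project. Your closing remark correctly isolates the one point where this case differs from the $\BF_4$-even computation, namely that the normalization $\chi=\sqrt{2}\,\Bone_{2n}$ makes the $k_1,k_2$ contribution even so that $\psi^-_{ab}$ collapses to a pure sign twist.
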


\begin{proof}
We aim to divide the theta function of the Construction A lattice into two sets: $\widetilde{\Lambda}_0$ and $\widetilde{\Lambda}_1$.
From the definition of these sets, we introduce $\BZ_2$ grading into the Construction A lattice by whether $\chi\odot\lambda$ is even or odd for $\lambda\in\Lambda(\CC)$.
Suppose that $\lambda=(\lambda_1,\lambda_2)\in\Lambda(\CC)$ is given by
\begin{align}
    \lambda_1 = \frac{\alpha+2\,k_1}{\sqrt{2}}\,,\qquad
    \lambda_2 = \frac{\beta+2\,k_2}{\sqrt{2}}\,,\qquad
    k_1,k_2\in\BZ^n\,,
\end{align}
for a codeword $(\alpha,\beta)\in\CC$. Then the $\BZ_2$ classification is given in terms of the following:
\begin{align}
    \chi\odot \lambda = \Bone_n\cdot (\alpha+\beta) + 2 \,\Bone_n\cdot (k_1+k_2)\,.
\end{align}
Then the inner product $\chi\odot\lambda$ is even if $\Bone_n\cdot (\alpha+\beta)\in2\BZ$ and odd if $\Bone_n\cdot(\alpha+\beta)\in2\BZ+1$.
We define the weighted theta function according to whether $\chi\odot\lambda$ is even or odd:
\begin{align}
    \begin{aligned}
    \Theta'_{\widetilde{\Lambda}(\CC)}(\tau,\bar{\tau}) 
        &= \sum_{(\alpha,\beta)\,\in\,\CC}
        \,\sum_{k_1,k_2\,\in\,\BZ^n}\,(-1)^{\chi\odot\lambda}\,
        q^{\frac{1}{2}\left(\frac{\alpha+\beta}{2}+k_1+k_2\right)^2} \bar{q}^{\frac{1}{2}\left(\frac{\alpha-\beta}{2}+k_1-k_2\right)^2}\\
        &= \sum_{(\alpha,\beta)\,\in\,\CC} \,\prod_{i=1}^n \,\psi^-_{\alpha_i\beta_i}(\tau,\bar{\tau}) \\
        &= \sum_{c\,\in\,\CC}\, \prod_{(a,b)\,\in\,\BF_p\times\BF_p}\,\left(\psi^-_{ab}(\tau,\bar{\tau}) \right)^{\mathrm{wt}_{ab}(c)} \\
        &= W_\CC(\{\psi^-_{ab}\})\,,
    \end{aligned}
\end{align}
where for $(a,b)\in\BF_p\times\BF_p$ we define
\begin{align}
\begin{aligned}
    \psi^-_{ab}(\tau,\bar{\tau}) &= (-1)^{a+b} \sum_{k_1,k_2\in\BZ}\, q^{\frac{1}{2}\left(\frac{a+b}{2}+k_1+k_2\right)^2} \bar{q}^{\frac{1}{2}\left(\frac{a-b}{2}+k_1-k_2\right)^2}\,,\\
    &= (-1)^{a+b}\left(\Theta_{a+b,2}(\tau)\,\bar{\Theta}_{a-b,2}(\bar{\tau}) + \Theta_{a+b-2,2}(\tau)\,\bar{\Theta}_{a-b-2,2}(\bar{\tau})\right)\,.
\end{aligned}
\end{align}
Since we weighted the theta function $\Theta'_{\widetilde{\Lambda}(\CC)}$ according to whether a vector $\lambda$ is in the set $\Lambda_0$ or $\Lambda_1$, we obtain the following:
\begin{align}
    \begin{aligned}
    \Theta_{\widetilde{\Lambda}_0} &= \frac{1}{2}\left[\Theta_{\widetilde{\Lambda}(\CC)}(\tau,\bar{\tau}) +\Theta'_{\widetilde{\Lambda}(\CC)}(\tau,\bar{\tau})\right]\,,\\
    \Theta_{\widetilde{\Lambda}_1} &= \frac{1}{2}\left[\Theta_{\widetilde{\Lambda}(\CC)}(\tau,\bar{\tau}) -\Theta'_{\widetilde{\Lambda}(\CC)}(\tau,\bar{\tau})\right]\,,
    \end{aligned}
\end{align}
where the theta function of the Construction A lattice $\Lambda(\CC)$ is given in terms of the complete enumerator polynomial in Proposition \ref{prop:enumerator=partition}.
\end{proof}

\begin{proposition}
The theta functions of $\widetilde{\Lambda}_2$ and $\widetilde{\Lambda}_3$ are 
\begin{align}
    \begin{aligned}
    \Theta_{\widetilde{\Lambda}_2}(\tau,\bar{\tau}) &= \frac{1}{2} \left[ W_\CC (\{\tilde{\psi}^+_{ab}\}) - W_\CC(\{\tilde{\psi}^-_{ab}\})\right]\,,\\
    \Theta_{\widetilde{\Lambda}_3}(\tau,\bar{\tau}) &= \frac{1}{2} \left[ W_\CC (\{\tilde{\psi}^+_{ab}\}) + W_\CC(\{\tilde{\psi}^-_{ab}\})\right]\,,
    \end{aligned}
\end{align}
where
\begin{align}
    \begin{aligned}
    \tilde{\psi}^+_{ab}(\tau,\bar{\tau}) &= \Theta_{a+b,2}(\tau)\,\bar{\Theta}_{a-b-2,2}(\bar{\tau}) + \Theta_{a+b-2,2}(\tau)\,\bar{\Theta}_{a-b,2}(\bar{\tau})\,,\\
    \tilde{\psi}^-_{ab}(\tau,\bar{\tau}) &= \tilde{\psi}^+_{ab}(\tau+1,\bar{\tau}+1)\,.
    \end{aligned}
\end{align}
\end{proposition}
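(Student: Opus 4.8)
The plan is to adapt the proof of Proposition~\ref{prop:odd_p_theta23} essentially verbatim: the vector $\chi=\sqrt{2}\,(1,1,\cdots,1)$ of \eqref{eq:choice_nongf4} plays, in the binary setting, exactly the role that the odd-prime choice \eqref{eq:choice_oddprime} plays there, so the same three-step argument carries over. First I would form the shifted set
\[
    N(\Lambda(\CC)) := (\Lambda_0+\delta)\cup(\Lambda_1+\delta) = \Lambda_2\cup\Lambda_3\,,\qquad \delta = \frac{\chi}{2} = \frac{1}{\sqrt{2}}\,(1,1,\cdots,1)\,,
\]
and write a generic element of $\Lambda(\CC)=\Lambda_0\cup\Lambda_1$ as $\lambda_1=(\alpha+2k_1)/\sqrt{2}$, $\lambda_2=(\beta+2k_2)/\sqrt{2}$ with $(\alpha,\beta)\in\CC$ and $k_1,k_2\in\BZ^n$. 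Adding $\delta$ turns these into $\lambda_1=\alpha/\sqrt{2}+\sqrt{2}\,(k_1+\Bone_n/2)$, and likewise for $\lambda_2$, and passing to the momentum basis \eqref{eq:orthogonal_transf} yields $p_L=(\alpha+\beta)/2+(k_1+k_2+\Bone_n)$ and $p_R=(\alpha-\beta)/2+(k_1-k_2)$ for $(p_L,p_R)\in\widetilde{N}(\Lambda(\CC))$.

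Next I would compute $\Theta_{\widetilde{N}(\Lambda(\CC))}(\tau,\bar\tau)=\sum q^{p_L^2/2}\bar q^{p_R^2/2}$. As in the earlier proofs the sum factorizes over the $n$ coordinates, and collecting codewords according to their complete weight distribution gives $\Theta_{\widetilde{N}(\Lambda(\CC))}=W_\CC(\{\tilde\psi^+_{ab}\})$ with
\[
    \tilde\psi^+_{ab}(\tau,\bar\tau)=\sum_{k_1,k_2\in\BZ} q^{\frac{1}{2}\left(\frac{a+b}{2}+k_1+k_2+1\right)^2}\,\bar q^{\frac{1}{2}\left(\frac{a-b}{2}+k_1-k_2\right)^2}\,.
\]
Splitting this double sum by the common parity of $k_1+k_2$ and $k_1-k_2$ and re-expressing each piece through the level-$2$ theta functions \eqref{eq:Theta-m-k-def}, using the standard identities $\Theta_{m,2}=\Theta_{-m,2}$ and $\Theta_{m+4,2}=\Theta_{m,2}$, reproduces the asserted closed form $\tilde\psi^+_{ab}=\Theta_{a+b,2}\,\bar\Theta_{a-b-2,2}+\Theta_{a+b-2,2}\,\bar\Theta_{a-b,2}$; equivalently this is just the replacement $\psi^+_{ab}\mapsto\tilde\psi^+_{ab}$ induced by shifting $\Lambda(\CC)$ by $\delta$.

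Then I would separate $\widetilde{N}(\Lambda(\CC))=\widetilde\Lambda_2\cup\widetilde\Lambda_3$ according to the parity of the Lorentzian norm $p_L^2-p_R^2$. Proposition~\ref{prop:norm_general}, applied through $\Lambda_\mathrm{NS}=\Lambda_0\cup\Lambda_2$ being odd and $\Lambda_\CO=\Lambda_0\cup\Lambda_3$ being even, shows that every vector of $\Lambda_2$ has odd norm and every vector of $\Lambda_3$ has even norm. Since the modular $T$ transformation $\tau\to\tau+1$ multiplies the summand of $\Theta_{\widetilde{N}(\Lambda(\CC))}$ by $(-1)^{p_L^2-p_R^2}$, we get
\[
    \Theta_{\widetilde\Lambda_2}=\frac{1}{2}\!\left[\Theta_{\widetilde{N}(\Lambda(\CC))}(\tau,\bar\tau)-\Theta_{\widetilde{N}(\Lambda(\CC))}(\tau+1,\bar\tau+1)\right],\qquad
    \Theta_{\widetilde\Lambda_3}=\frac{1}{2}\!\left[\Theta_{\widetilde{N}(\Lambda(\CC))}(\tau,\bar\tau)+\Theta_{\widetilde{N}(\Lambda(\CC))}(\tau+1,\bar\tau+1)\right].
\]
Because the modulus enters $\Theta_{\widetilde{N}(\Lambda(\CC))}=W_\CC(\{\tilde\psi^+_{ab}\})$ only through $\tilde\psi^+_{ab}$, performing the $T$ transformation amounts to the substitution $\tilde\psi^+_{ab}\mapsto\tilde\psi^-_{ab}:=\tilde\psi^+_{ab}(\tau+1,\bar\tau+1)$ inside the weight enumerator, yielding the two stated formulas.

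The only place calling for care is the bookkeeping in the second step — tracking which parity class of $(k_1+k_2,\,k_1-k_2)$ feeds into which $\Theta_{m,2}$ factor and invoking the correct $p=2$ theta identities so that the representatives match the asserted form — together with reading off from Proposition~\ref{prop:norm_general} that it is $\widetilde\Lambda_2$, not $\widetilde\Lambda_3$, that carries the odd-norm vectors, so that the signs in the final answer come out right. Everything else is a direct transcription of the odd-$p$ argument, so I anticipate no genuine obstacle.
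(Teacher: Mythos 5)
Your proposal is correct and is exactly the route the paper takes: the paper's proof of this proposition consists of the single sentence that it is identical to the odd-prime case (Proposition~\ref{prop:odd_p_theta23}), and your write-up is precisely that adaptation with the details (the shift by $\delta=\chi/2$, the factorized theta function $W_\CC(\{\tilde\psi^+_{ab}\})$, and the parity/modular-$T$ split using Proposition~\ref{prop:norm_general}) filled in correctly.
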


\begin{proof}
The proof is identical to the case with an odd prime $p\neq2$.
\end{proof}

For non-$\BF_4$-even codes, we have
\begin{align}
    \begin{aligned}
    \label{eq:nongf4_psi_theta}
        \psi^+_{00} &= \frac{\vartheta_3\,\bar{\vartheta}_3+\vartheta_4\,\bar{\vartheta}_4}{2}\,,&\qquad
        \psi^+_{01} &=\psi^+_{10} = \frac{\vartheta_2\,\bar{\vartheta}_2}{2}\,,&\qquad
        \psi^+_{11} &= \frac{\vartheta_3\,\bar{\vartheta}_3-\vartheta_4\,\bar{\vartheta}_4}{2}\,,\\
    
        \psi^-_{00} &= \frac{\vartheta_3\,\bar{\vartheta}_3+\vartheta_4\,\bar{\vartheta}_4}{2}\,,&\qquad
        \psi^-_{01} &=\psi^-_{10} = -\frac{\vartheta_2\,\bar{\vartheta}_2}{2}\,,&\qquad
        \psi^-_{11} &= \frac{\vartheta_3\,\bar{\vartheta}_3-\vartheta_4\,\bar{\vartheta}_4}{2}\,,\\
    
        \tilde{\psi}^+_{00} &= \frac{\vartheta_3\,\bar{\vartheta}_3-\vartheta_4\,\bar{\vartheta}_4}{2}\,,&\qquad
        \tilde{\psi}^+_{01}&= \tilde{\psi}^+_{10} = \frac{\vartheta_2\,\bar{\vartheta}_2}{2}\,,&\quad 
        \tilde{\psi}^+_{11} &= \frac{\vartheta_3\,\bar{\vartheta}_3+\vartheta_4\,\bar{\vartheta}_4}{2}\,,\\
    
        \tilde{\psi}^-_{00} &= -\frac{\vartheta_3\,\bar{\vartheta}_3-\vartheta_4\,\bar{\vartheta}_4}{2}\,,&\qquad
        \tilde{\psi}^-_{01} &= \tilde{\psi}^-_{10} =\,\,  \frac{\vartheta_2\,\bar{\vartheta}_2}{2}\,,&\quad 
        \tilde{\psi}^-_{11} &= \frac{\vartheta_3\,\bar{\vartheta}_3+\vartheta_4\,\bar{\vartheta}_4}{2}\,. 
    \end{aligned}
\end{align}

By using the theta functions of $\widetilde{\Lambda}_i$, we can compute the partition functions of each sector in the orbifold and fermionized theories.
In the next section, we utilize the results to investigate the $\BZ_2$ gaugings of Narain code CFTs.

\section{Examples}
\label{sec:examples}

In this section, we use the construction of the orbifold and fermionized theory in section~\ref{sec:gauging_code_CFT} to analyze some examples of Narain code CFTs and their $\BZ_2$ gauging.

\subsection{Quantum codes of length $n=1$}
\label{ss:n=1}

Let us consider a classical Euclidean code $C\subset\BF_p$ whose generator matrix is $G_C = [1]$ and whose parity check matrix is $H_C = [0]$.
Then, the classical code and its dual are
\begin{align}
    C = \BF_p\,,\qquad C^\perp = \{0\}\,.
\end{align}
Using the CSS construction \eqref{eq:general_CSS_check}, the check matrix of the corresponding stabilizer code is
\begin{align}
    \mathsf{H}_{(C,C^\perp)} = 
    \begin{bmatrix}
        0 & 1
    \end{bmatrix}\,.
\end{align}
Therefore, the classical code generated by $G_\mathsf{H} = \mathsf{H}$ is
\begin{align}
    \CC = \{(0,0),(0,1),\cdots,(0, p-1)\}\subset\BF_p^2\,.
\end{align}
Via Construction A, we obtain the Lorentzian even self-dual lattice
\begin{align}
    \Lambda(\CC) = \left\{\,\left(\sqrt{p}\,k_1\,,\,\frac{k_2}{\sqrt{p}}\,\right)\in\BR^2\;\middle|\; k_1,k_2\in\BZ\,\right\}\,,
\end{align}
where the lattice has the off-diagonal Lorentzian metric $\eta$.
We depict the momentum lattice in figure\,\ref{fig:original_orbifold}.
Note that $\Lambda(\CC)$ exactly agrees with the momentum lattice of a compact boson at radius $R = \sqrt{2/p}$.
In terms of a compact boson, $k_1\in\BZ$ corresponds to the momentum number $m$, and $k_2\in\BZ$ corresponds to the winding number $w$.

For an odd prime $p\neq2$, we choose an element $\chi = \sqrt{p}\,(1,1)\in\Lambda(\CC)$ according to \eqref{eq:choice_oddprime}. 
This choice, however, does not satisfy the non-anomalous condition that requires $n\in2\BZ$, and we cannot perform orbifolding and fermionization by gauging the corresponding $\BZ_2$ symmetry.
Indeed, this choice corresponds to the diagonal $\BZ_2$ subgroup of the $\U(1)_m \times \U(1)_w$ symmetry, which acts on the vertex operators as
\begin{align}
    V_{p_L,\,p_R}(z,\bar{z}) \to (-1)^{k_1+k_2}\,V_{p_L,\,p_R}(z,\bar{z})\,,
\end{align}
and cannot be gauged due to the mixed anomaly between $\U(1)_m$ and $\U(1)_w$ symmetries~\cite{Ji:2019ugf}.

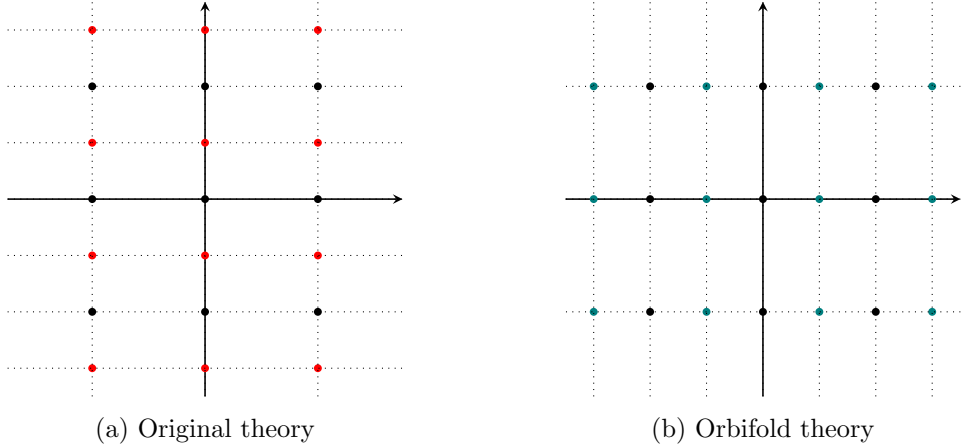
\begin{figure}
    \centering
    \begin{tabular}{cc}
        \begin{minipage}[t]{0.45\textwidth}
        \begin{center}
            \begin{tikzpicture}[scale=0.75, transform shape]
            \centering
    \draw[semithick,->,>=stealth](-3.5,0)--(3.5,0);
    \draw[semithick,->,>=stealth](0,-3.5)--(0,3.5);
\begin{scope}
    \fill[red](0,-1) circle[radius=0.07cm];
    \fill(0,-2) circle[radius=0.07cm];
    \fill[red](0,-3) circle[radius=0.07cm];
    \fill(0,0) circle[radius=0.07cm];
    \fill[red](0,1) circle[radius=0.07cm];
    \fill(0,2) circle[radius=0.07cm];
    \fill[red](0,3) circle[radius=0.07cm];
    \draw[dotted] (0,-3.5)--(0,3.5);
\end{scope}
\begin{scope}[xshift=2cm]
    \fill[red](0,-1) circle[radius=0.07cm];
    \fill(0,-2) circle[radius=0.07cm];
    \fill[red](0,-3) circle[radius=0.07cm];
    \fill(0,0) circle[radius=0.07cm];
    \fill[red](0,1) circle[radius=0.07cm];
    \fill(0,2) circle[radius=0.07cm];
    \fill[red](0,3) circle[radius=0.07cm];
    \draw[dotted] (0,-3.5)--(0,3.5);
\end{scope}
\begin{scope}[xshift=-2cm]
    \fill[red](0,-1) circle[radius=0.07cm];
    \fill(0,-2) circle[radius=0.07cm];
    \fill[red](0,-3) circle[radius=0.07cm];
    \fill(0,0) circle[radius=0.07cm];
    \fill[red](0,1) circle[radius=0.07cm];
    \fill(0,2) circle[radius=0.07cm];
    \fill[red](0,3) circle[radius=0.07cm];
    \draw[dotted] (0,-3.5)--(0,3.5);
\end{scope}

    \draw[dotted] (-3.5,-3)--(3.5,-3);
    \draw[dotted] (-3.5,-2)--(3.5,-2);
    \draw[dotted] (-3.5,-1)--(3.5,-1);
    \draw[dotted] (-3.5,0)--(3.5,0);
    \draw[dotted] (-3.5,1)--(3.5,1);
    \draw[dotted] (-3.5,2)--(3.5,2);
    \draw[dotted] (-3.5,3)--(3.5,3);
\end{tikzpicture}
        \end{center}
            \subcaption{Original theory}
        \end{minipage} & 
        \begin{minipage}[t]{0.45\textwidth}
        \begin{center}
            \begin{tikzpicture}[scale=0.75, transform shape, rotate=90]
\begin{scope}[rotate=-90]
    \draw[semithick,->,>=stealth](-3.5,0)--(3.5,0);
    \draw[semithick,->,>=stealth](0,-3.5)--(0,3.5);
\end{scope}
\begin{scope}
    \fill[teal](0,-1) circle[radius=0.07cm];
    \fill(0,-2) circle[radius=0.07cm];
    \fill[teal](0,-3) circle[radius=0.07cm];
    \fill(0,0) circle[radius=0.07cm];
    \fill[teal](0,1) circle[radius=0.07cm];
    \fill(0,2) circle[radius=0.07cm];
    \fill[teal](0,3) circle[radius=0.07cm];
    \draw[dotted] (0,-3.5)--(0,3.5);
\end{scope}
\begin{scope}[xshift=2cm]
    \fill[teal](0,-1) circle[radius=0.07cm];
    \fill(0,-2) circle[radius=0.07cm];
    \fill[teal](0,-3) circle[radius=0.07cm];
    \fill(0,0) circle[radius=0.07cm];
    \fill[teal](0,1) circle[radius=0.07cm];
    \fill(0,2) circle[radius=0.07cm];
    \fill[teal](0,3) circle[radius=0.07cm];
    \draw[dotted] (0,-3.5)--(0,3.5);
\end{scope}
\begin{scope}[xshift=-2cm]
    \fill[teal](0,-1) circle[radius=0.07cm];
    \fill(0,-2) circle[radius=0.07cm];
    \fill[teal](0,-3) circle[radius=0.07cm];
    \fill(0,0) circle[radius=0.07cm];
    \fill[teal](0,1) circle[radius=0.07cm];
    \fill(0,2) circle[radius=0.07cm];
    \fill[teal](0,3) circle[radius=0.07cm];
    \draw[dotted] (0,-3.5)--(0,3.5);
\end{scope}

    \draw[dotted] (-3.5,-3)--(3.5,-3);
    \draw[dotted] (-3.5,-2)--(3.5,-2);
    \draw[dotted] (-3.5,-1)--(3.5,-1);
    \draw[dotted] (-3.5,0)--(3.5,0);
    \draw[dotted] (-3.5,1)--(3.5,1);
    \draw[dotted] (-3.5,2)--(3.5,2);
    \draw[dotted] (-3.5,3)--(3.5,3);
\end{tikzpicture}
        \end{center}
            \subcaption{Orbifold theory}
        \end{minipage}
    \end{tabular}
    \caption{The Lorentzian lattice $\Lambda(\CC)$ for the original Narain code CFT (the left panel) and the one $\Lambda_\CO$ for its orbifold (the right panel). On the left panel, the black and red points represent $\Lambda_0$ and $\Lambda_1$, respectively. On the right panel, the black and green points express $\Lambda_0$ and $\Lambda_3$, respectively.}
    \label{fig:original_orbifold}
\end{figure}

For the binary case $(p=2)$, we take an element $\chi = \sqrt{2}\,(1,1)\in\Lambda(\CC)$ by \eqref{eq:choice_nongf4} because the classical code $\CC$ does not contain the all-ones vector: $(1,1)\notin \CC$, then $\CC$ is a non-$\BF_4$-even code.
In this case, the non-anomalous condition is $n\in\BZ$, so we can take the orbifold and fermionization by $\chi = \sqrt{2}\,(1,1)$.
Indeed, this choice corresponds to the winding $\BZ_2$ symmetry,\footnote{If we exchange the roles of $C$ and $C^\perp$, the resulting code CFT has the non-anomalous momentum $\BZ_2$ symmetry specified by the same vector $\chi=\sqrt{2}\,(1,1)$.
More generally, exchanging the roles of $C$ and $C^\perp$ in the CSS construction acts as a T-duality on Narain code CFTs.
} which acts on the vertex operators as
\begin{align}
    V_{p_L,\,p_R}(z,\bar{z}) \to (-1)^{k_2}\,V_{p_L,\,p_R}(z,\bar{z})\,.
\end{align}
We can divide the Construction A lattice into two parts
\begin{align}
    \begin{aligned}
        \Lambda_0 &= \left\{\,\left(\sqrt{2}\,k_1\,,\,\sqrt{2}\,k_2\,\right)\in\BR^2\;\middle|\; k_1\in\BZ\,,\; k_2\in\BZ\,\right\}\,,\\
        \Lambda_1 &= \left\{\,\left(\sqrt{2}\,k_1\,,\,\frac{k_2}{\sqrt{2}}\,\right)\in\BR^2\;\middle|\; k_1\in\BZ\,,\; k_2\in2\BZ+1\,\right\}\,,
    \end{aligned}
\end{align}
which implies that $\Lambda(\CC)$ is graded only by the value of $k_2$. Then, the $\BZ_2$ symmetry corresponding to the choice of $\chi\in\Lambda(\CC)$ is purely the winding $\BZ_2$ symmetry without anomaly.
The other shifted sets are
\begin{align}
    \begin{aligned}
        \Lambda_2 &= \left\{\,\left(\frac{l_1}{\sqrt{2}}\,,\,\frac{l_2}{\sqrt{2}}\,\right)\in\BR^2\;\middle|\; l_1\in2\BZ+1\,,\;l_2\in2\BZ+1\,\right\}\,,\\
        \Lambda_3 &= \left\{\,\left(\frac{l_1}{\sqrt{2}}\,,\,\sqrt{2}\,l_2\,\right)\in\BR^2\;\middle|\; l_1\in2\BZ+1\,,\; l_2\in\BZ\,\right\}\,.
    \end{aligned}
\end{align}

The orbifold gives the momentum lattice $\Lambda_\CO = \Lambda_0\cup\Lambda_3$, which can be written as
\begin{align}
    \Lambda_\CO = \left\{\,\left(\frac{k_1}{\sqrt{2}}\,,\,\sqrt{2}\,k_2\,\right)\in\BR^2\;\middle|\; k_1\in\BZ\,,\; k_2\in\BZ\,\right\}\,.
\end{align}
This matches the momentum lattice of a compact boson of radius $R=2$. Since the original Construction A lattice corresponds to the radius $R=1$ for $p=2$, the $\BZ_2$ orbifold by $\chi = \sqrt{2}\,(1,1)$ makes the target circle only twice as large.
In our notation, the T-duality acts on the radius $R$ as $R\to 2/R$. Then, the orbifolded theory, which is a compact boson of radius $R=2$, is equivalent to the original theory $(R=1)$.

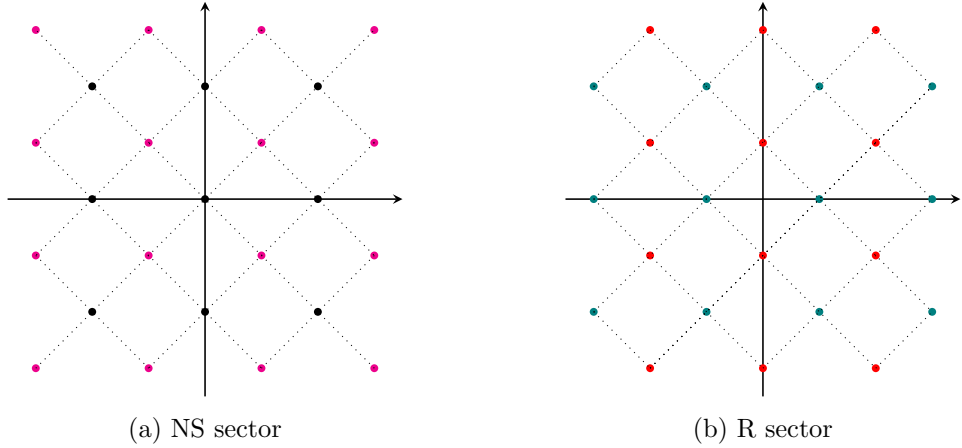
\begin{figure}
    \centering
    \begin{tabular}{cc}
        \begin{minipage}{0.45\textwidth}
            \begin{center}
\begin{tikzpicture}[scale=0.75, transform shape]
    \draw[semithick,->,>=stealth](-3.5,0)--(3.5,0);
    \draw[semithick,->,>=stealth](0,-3.5)--(0,3.5);
    \begin{scope}
    \fill[magenta](1,-1) circle[radius=0.07cm];
    \fill(0,-2) circle[radius=0.07cm];
    \fill[magenta](1,-3) circle[radius=0.07cm];
    \fill(0,0) circle[radius=0.07cm];
    \fill[magenta](1,1) circle[radius=0.07cm];
    \fill(0,2) circle[radius=0.07cm];
    \fill[magenta](1,3) circle[radius=0.07cm];
    \end{scope}
    \begin{scope}[xshift=2cm]
    \fill[magenta](1,-1) circle[radius=0.07cm];
    \fill(0,-2) circle[radius=0.07cm];
    \fill[magenta](1,-3) circle[radius=0.07cm];
    \fill(0,0) circle[radius=0.07cm];
    \fill[magenta](1,1) circle[radius=0.07cm];
    \fill(0,2) circle[radius=0.07cm];
    \fill[magenta](1,3) circle[radius=0.07cm];
    \end{scope}
    \begin{scope}[xshift=-2cm]
    \fill[magenta](1,-1) circle[radius=0.07cm];
    \fill(0,-2) circle[radius=0.07cm];
    \fill[magenta](1,-3) circle[radius=0.07cm];
    \fill(0,0) circle[radius=0.07cm];
    \fill[magenta](1,1) circle[radius=0.07cm];
    \fill(0,2) circle[radius=0.07cm];
    \fill[magenta](1,3) circle[radius=0.07cm];
    \end{scope}
    \begin{scope}[xshift=-4cm]
    \fill[magenta](1,-1) circle[radius=0.07cm];
    \fill[magenta](1,-3) circle[radius=0.07cm];
    \fill[magenta](1,1) circle[radius=0.07cm];
    \fill[magenta](1,3) circle[radius=0.07cm];
    \end{scope}
    
    \draw[dotted] (-3,-3)--(3,3);
    \draw[dotted] (1,3)--(-3,-1);
    \draw[dotted] (-1,3)--(-3,1);
    \draw[dotted] (3,1)--(-1,-3);
    \draw[dotted] (3,-1)--(1,-3);
    \draw[dotted] (3,-3)--(-3,3);
    \draw[dotted] (1,3)--(3,1);
    \draw[dotted] (-1,3)--(3,-1);
    \draw[dotted] (-3,1)--(1,-3);
    \draw[dotted] (-3,-1)--(-1,-3);
\end{tikzpicture}
\subcaption{NS sector}
\end{center}
        \end{minipage} & 
        \begin{minipage}{0.45\textwidth}
            \begin{center}
\begin{tikzpicture}[scale=0.75, transform shape]
    \draw[semithick,->,>=stealth](-3.5,0)--(3.5,0);
    \draw[semithick,->,>=stealth](0,-3.5)--(0,3.5);
    \begin{scope}[xshift=1cm]
    \begin{scope}
    \fill[red](1,-1) circle[radius=0.07cm];
    \fill[teal](0,-2) circle[radius=0.07cm];
    \fill[red](1,-3) circle[radius=0.07cm];
    \fill[teal](0,0) circle[radius=0.07cm];
    \fill[red](1,1) circle[radius=0.07cm];
    \fill[teal](0,2) circle[radius=0.07cm];
    \fill[red](1,3) circle[radius=0.07cm];
    \end{scope}
    \begin{scope}[xshift=2cm]
    \fill[teal](0,-2) circle[radius=0.07cm];
    \fill[teal](0,0) circle[radius=0.07cm];
    \fill[teal](0,2) circle[radius=0.07cm];
    \end{scope}
    \begin{scope}[xshift=-2cm]
    \fill[red](1,-1) circle[radius=0.07cm];
    \fill[teal](0,-2) circle[radius=0.07cm];
    \fill[red](1,-3) circle[radius=0.07cm];
    \fill[teal](0,0) circle[radius=0.07cm];
    \fill[red](1,1) circle[radius=0.07cm];
    \fill[teal](0,2) circle[radius=0.07cm];
    \fill[red](1,3) circle[radius=0.07cm];
    \end{scope}
    \begin{scope}[xshift=-4cm]
    \fill[red](1,-1) circle[radius=0.07cm];
    \fill[teal](0,-2) circle[radius=0.07cm];
    \fill[red](1,-3) circle[radius=0.07cm];
    \fill[teal](0,0) circle[radius=0.07cm];
    \fill[red](1,1) circle[radius=0.07cm];
    \fill[teal](0,2) circle[radius=0.07cm];
    \fill[red](1,3) circle[radius=0.07cm];
    \end{scope}
    \end{scope}
    \draw[dotted] (2,-3)--(3,-2);
    \draw[dotted] (0,-3)--(3,0);
    \draw[dotted] (-2,-3)--(3,2);
    \draw[dotted] (-3,-2)--(2,3);
    \draw[dotted] (-2,-3)--(3,2);
    \draw[dotted] (-3,0)--(0,3);
    \draw[dotted] (-3,2)--(-2,3);
    
    \draw[dotted] (-3,-2)--(-2,-3);
    \draw[dotted] (-3,0)--(0,-3);
    \draw[dotted] (-3,2)--(2,-3);
    \draw[dotted] (-2,3)--(3,-2);
    \draw[dotted] (-2,-3)--(3,2);
    \draw[dotted] (0,3)--(3,0);
    \draw[dotted] (2,3)--(3,2);
\end{tikzpicture}
\subcaption{R sector}
\end{center}
        \end{minipage}
    \end{tabular}
    \caption{The Lorentzian lattice $\Lambda_{\mathrm{NS}}$ for the NS sector (the left panel) and $\Lambda_{\mathrm{R}}$ for the R sector (the right panel) in the fermionized theory. On the left panel, the black and pink points represent $\Lambda_0$ and $\Lambda_2$, respectively. On the right panel, the red and green points express $\Lambda_1$ and $\Lambda_3$, respectively.}
    \label{fig:fermionized}
\end{figure}

In the fermionized theory, the NS sector can be built out of the vertex operators $V_{\lambda}(z)$ where $\lambda$ is in the momentum lattice $\Lambda_{\mathrm{NS}} = \Lambda_0 \cup \Lambda_2$ (see figure~\ref{fig:fermionized})
\begin{align}
    \Lambda_{\mathrm{NS}}= \left\{\,\left(\frac{l_1}{\sqrt{2}}\,,\,\frac{l_2}{\sqrt{2}}\,\right)\in\BR^2\;\middle|\;  l_1 = l_2 \;\;\mathrm{mod}\;2\,,\;l_1\in\BZ\,,\;l_2\in\BZ\,\right\}\,.
\end{align}
The Ramond sector can be obtained by the momentum set $\Lambda_{\mathrm{R}} = \Lambda_1\cup\Lambda_3$ where
\begin{align}
    \Lambda_{\mathrm{R}}= \left\{\,\left(\frac{l_1}{\sqrt{2}}\,,\,\frac{l_2}{\sqrt{2}}\,\right)\in\BR^2\;\middle|\;  l_1 \neq l_2 \;\;\mathrm{mod}\;2\,,\;l_1\in\BZ\,,\;l_2\in\BZ\,\right\}\,.
\end{align}

The partition functions for these theories can be easily computed using the technique given in the previous section.
For $p=2$, the complete weight enumerator of $\CC$ is
\begin{align}
    W_\CC (\{x_{ab}\}) = x_{00} + x_{10}\,.
\end{align}
From \eqref{eq:theta_p=2}, the partition function of the original bosonic theory is
\begin{align}
    Z_\CC (\tau,\bar{\tau}) = \frac{1}{|\eta(\tau)|^2}\frac{\vartheta_2\,\bar{\vartheta}_2+\vartheta_3\,\bar{\vartheta}_3+ \vartheta_4\,\bar{\vartheta}_4}{2}\,.
\end{align}
The orbifolded partition function is
\begin{align}
\begin{aligned}
    Z^\CO_\CC(\tau,\bar{\tau}) &= \frac{1}{2|\eta(\tau)|^2} \left[W_\CC (\{\psi^+_{ab}\}) + W_\CC(\{{\psi}^-_{ab}\}) +  W_\CC (\{\tilde{\psi}^+_{ab}\}) + W_\CC(\{\tilde{\psi}^-_{ab}\})\right]\,\\
    & = \frac{1}{|\eta(\tau)|^2}\frac{\vartheta_2\,\bar{\vartheta}_2+\vartheta_3\,\bar{\vartheta}_3+ \vartheta_4\,\bar{\vartheta}_4}{2}\,,
\end{aligned}
\end{align}
where we used \eqref{eq:nongf4_psi_theta}.
This is the same as the one for the original theory, which is consistent with the fact that they are related by T-duality.
On the other hand, the fermionized theory has four partition functions depending on the choice of spin structures $\rho$. On the torus, the spin structure is $\rho = (a,b)\in\BZ_2\times\BZ_2$. Following \eqref{eq:fermionized_part_gen}, we denote the corresponding partition functions by $Z^\CF_\CC[\rho]$. Then, we have
\begin{align}
    \begin{aligned}
        Z^\CF_\CC[00] &= \frac{1}{2|\eta(\tau)|^2} \left[W_\CC (\{\psi^+_{ab}\}) + W_\CC(\{{\psi}^-_{ab}\}) +  W_\CC (\{\tilde{\psi}^+_{ab}\}) - W_\CC(\{\tilde{\psi}^-_{ab}\})\right] = \left|\frac{\vartheta_3}{\eta}\right|^2\ ,\\
        
        Z^\CF_\CC[10] &= \frac{1}{2|\eta(\tau)|^2} \left[W_\CC (\{\psi^+_{ab}\}) + W_\CC(\{{\psi}^-_{ab}\}) -  W_\CC (\{\tilde{\psi}^+_{ab}\}) + W_\CC(\{\tilde{\psi}^-_{ab}\})\right] = \left|\frac{\vartheta_4}{\eta}\right|^2\ ,\\

        Z^\CF_\CC[01] &= \frac{1}{2|\eta(\tau)|^2} \left[W_\CC (\{\psi^+_{ab}\}) - W_\CC(\{{\psi}^-_{ab}\}) +  W_\CC (\{\tilde{\psi}^+_{ab}\}) + W_\CC(\{\tilde{\psi}^-_{ab}\})\right] = \left|\frac{\vartheta_2}{\eta}\right|^2\ ,\\

        Z^\CF_\CC[11] &= \frac{1}{2|\eta(\tau)|^2} \left[W_\CC (\{\psi^+_{ab}\}) - W_\CC(\{{\psi}^-_{ab}\}) -  W_\CC (\{\tilde{\psi}^+_{ab}\}) - W_\CC(\{\tilde{\psi}^-_{ab}\})\right] = 0\ .
    \end{aligned}
\end{align}
These are the partition functions of a free Dirac fermion. Therefore, we conclude that the fermionized theory of a compact boson of radius $R=1$ is a free Dirac fermion. This can be expected because a Dirac fermion has been known to be equivalent to a compact boson with $R=1$ via bosonization \cite{Ginsparg:1988ui}.

\subsection{CSS construction with self-dual codes}

In this subsection, we consider a CSS code defined by a single self-dual code $C=C^\perp\subset\BF_p^n$.
Then, the resulting classical code is $\CC = C\times C\subset\BF_p^{2n}$ via CSS construction \eqref{eq:general_CSS_check}.
In this case, the Construction A lattice $\Lambda(\CC)$ is given by 
\begin{align}
    \Lambda(\CC) = \Lambda_\mathrm{E}(C)\times \Lambda_\mathrm{E}(C)\ ,
\end{align}
where we denote by $\Lambda_\mathrm{E}(C)$ the Euclidean Construction A lattice \cite{nebe2006self}:
\begin{align}
    \Lambda_\mathrm{E}(C) 
        =
            \left\{ \frac{c + p\,m}{\sqrt{p}}\; \bigg|\; c\in C, ~ m\in \BZ^{n} \right\}\ .
\end{align}
The Euclidean Construction A lattice is odd self-dual when $C$ is a self-dual code for an odd prime $p$.
Since the Euclidean Construction A lattice is odd self-dual, one can introduce a characteristic vector $\mathbf{x}\in\Lambda_\mathrm{E}(C)$ \cite{conway2013sphere,milnor1973symmetric,serre2012course} such that for any $\lambda\in\Lambda_\mathrm{E}(C)$
\begin{align}
    \mathbf{x}\cdot \lambda = \lambda\cdot \lambda\mod 2\,.
\end{align}
For the Construction A lattice from a $p$-ary code, we can choose $\mathbf{x} = \sqrt{p}\,\mathbf{1}_n\in\Lambda_\mathrm{E}(C)$ \cite{Gaiotto:2018ypj,Kawabata:2023nlt}.
By using the characteristic vector, we divide $\Lambda_\mathrm{E}(C)$ into two parts
\begin{align}
    \Lambda_\mathrm{E}(C) = \Lambda_\mathrm{E}^{(0)}\cup\Lambda_\mathrm{E}^{(2)}
\end{align}
where
\begin{align}
    \begin{aligned}
    \Lambda_\mathrm{E}^{(0)} &=  \left\{ \lambda\in\Lambda_\mathrm{E}(C) \mid \mathbf{x}\cdot\lambda = 0 \mod 2 \right\}\,,
    \\
    \Lambda_\mathrm{E}^{(2)} &= \left\{ \lambda\in\Lambda_\mathrm{E}(C) \mid \mathbf{x}\cdot\lambda = 1 \mod 2 \right\}\,.
\end{aligned}
\end{align}
We can also introduce the shadow of $\Lambda_\mathrm{E}(C)$ \cite{conway1990new} by
\begin{align}
    S(\Lambda_\mathrm{E}(C)) = \Lambda_\mathrm{E}(C) +\frac{\mathbf{x}}{2} = \Lambda_\mathrm{E}^{(1)} \cup \Lambda_\mathrm{E}^{(3)}\,,
\end{align}
where we divide the shadow into two parts
\begin{align}
    \begin{aligned}
            \Lambda_\mathrm{E}^{(1)} &=  \left\{ \,\lambda+\frac{\mathbf{x}}{2}\in S(\Lambda(C)) \;\middle|\; \mathbf{x}\cdot\left(\lambda+\frac{\mathbf{x}}{2}\right) = 0 \mod 2\, \right\}\,,
    \\
    \Lambda_\mathrm{E}^{(3)} &= \left\{ \,\lambda+\frac{\mathbf{x}}{2}\in S(\Lambda(C)) \;\middle|\; \mathbf{x}\cdot\left(\lambda+\frac{\mathbf{x}}{2}\right) = 1 \mod 2\, \right\}\,.
    \end{aligned}
\end{align}

Let us consider the orbifold and fermionization of Narain CFTs based on the Construction A lattice $\Lambda(\CC)$.
The choice of an element $\chi = \sqrt{p}\,\mathbf{1}_{2n}\in\Lambda(\CC)$ is always non-anomalous.
Then, one can denote it by $\chi = (\mathbf{x},\mathbf{x})\in\Lambda_\mathrm{E}(C)\times \Lambda_\mathrm{E}(C)$.
The $\BZ_2$ grading of the Construction A lattice $\Lambda(\CC)$ gives $\Lambda(\CC) = \Lambda_0 \cup \Lambda_1$ where
\begin{align}
    \begin{aligned}
        \Lambda_0 &= \left\{(\lambda_1,\lambda_2)\in\Lambda(C)\times\Lambda(C)\mid \chi \odot (\lambda_1,\lambda_2) = 0\mod 2\right\}\\
        &= \left(\Lambda_\mathrm{E}^{(0)}\times \Lambda_\mathrm{E}^{(0)}\right) \cup \left(\Lambda_\mathrm{E}^{(2)} \times \Lambda_\mathrm{E}^{(2)}\right)\,,
    \end{aligned}
\end{align}
and
\begin{align}
    \Lambda_1 = \left(\Lambda_\mathrm{E}^{(0)}\times \Lambda_\mathrm{E}^{(2)}\right) \cup \left(\Lambda_\mathrm{E}^{(2)} \times \Lambda_\mathrm{E}^{(0)}\right)\,.
\end{align}
On the other hand, by shifting $\Lambda_0$ and $\Lambda_1$ by $\delta = \chi/2 = (\mathbf{x},\mathbf{x})/2$, we obtain
\begin{align}
\begin{aligned}
    \Lambda_2 &= \left(\Lambda_\mathrm{E}^{(1)}\times \Lambda_\mathrm{E}^{(1)}\right) \cup \left(\Lambda_\mathrm{E}^{(3)} \times \Lambda_\mathrm{E}^{(3)}\right)\,,\\
    \Lambda_3 &= \left(\Lambda_\mathrm{E}^{(1)}\times \Lambda_\mathrm{E}^{(3)}\right) \cup \left(\Lambda_\mathrm{E}^{(3)} \times \Lambda_\mathrm{E}^{(1)}\right)\,.
\end{aligned}
\end{align}
Therefore, the momentum lattice in the orbifold theory is
\begin{align}
    \Lambda_\CO = \left(\Lambda_\mathrm{E}^{(0)}\times \Lambda_\mathrm{E}^{(0)}\right) \cup \left(\Lambda_\mathrm{E}^{(2)} \times \Lambda_\mathrm{E}^{(2)}\right) \cup \left(\Lambda_\mathrm{E}^{(1)}\times \Lambda_\mathrm{E}^{(3)}\right) \cup \left(\Lambda_\mathrm{E}^{(3)} \times \Lambda_\mathrm{E}^{(1)}\right)\,.
\end{align}
The momenta in the NS sector and the R sector in the fermionized theory are given by
\begin{align}
    \begin{aligned}
        \Lambda_\mathrm{NS} &= \left(\Lambda_\mathrm{E}^{(0)}\times \Lambda_\mathrm{E}^{(0)}\right) \cup \left(\Lambda_\mathrm{E}^{(2)} \times \Lambda_\mathrm{E}^{(2)}\right) \cup \left(\Lambda_\mathrm{E}^{(1)}\times \Lambda_\mathrm{E}^{(1)}\right) \cup \left(\Lambda_\mathrm{E}^{(3)} \times \Lambda_\mathrm{E}^{(3)}\right)\,,\\
        \Lambda_\mathrm{R} &= \left(\Lambda_\mathrm{E}^{(0)}\times \Lambda_\mathrm{E}^{(2)}\right) \cup \left(\Lambda_\mathrm{E}^{(2)} \times \Lambda_\mathrm{E}^{(0)}\right) \cup \left(\Lambda_\mathrm{E}^{(1)}\times \Lambda_\mathrm{E}^{(3)}\right) \cup \left(\Lambda_\mathrm{E}^{(3)} \times \Lambda_\mathrm{E}^{(1)}\right)\,.
    \end{aligned}
\end{align}

\subsubsection{$n=2\,,\,$ $p=5$ case ($C_{5,2}$)}

Here, we consider the CSS code defined by a single self-dual code whose generator matrix is $G_{C_{5,2}} = [1 \, 2]$. Then, the codewords consist of
\begin{align}
\label{eq:5,2}
    C_{5,2} = \left\{\,(0,0),(1,2),(2,4),(3,1),(4,3)\,\right\}\,.
\end{align}
The Euclidean Construction A lattice from $C_{5,2}$ is
\begin{align}
    \Lambda_\mathrm{E}(C_{5,2}) = \left\{ \frac{c+5m}{\sqrt{5}}\in\BR^2\;\middle|\; c\in C_{5,2}\,,\,m\in\BZ^2\right\}\,.
\end{align}
We can take a basis of $\Lambda(C_{5,2})$ by $v_1 = \frac{1}{\sqrt{5}}(1,2)$ and $v_2 = \frac{1}{\sqrt{5}}(2,-1)$, which is orthonormal: $v_i\cdot v_j=\delta_{ij}$ $(i,j=1,2)$.
Then the elements $\lambda\in\Lambda(C_{5,2})$ of the Euclidean Construction A lattice are written as $\lambda = m_1\,v_1 + m_2\, v_2$ where $m_1$, $m_2\in\BZ$. Thus, the Euclidean Construction A lattice is a two-dimensional square lattice.

Choosing a characteristic vector by $\mathbf{x} = \sqrt{5}\,(1,1)$, we can divide the Euclidean Construction A lattice into $\Lambda_\mathrm{E}^{(i)}$ ($i=0,2$) where
\begin{align}
    \begin{aligned}
        \Lambda_\mathrm{E}^{(0)} &= \left\{ m_1\, v_1 + m_2\, v_2\mid m_1 + m_2 \in 2\BZ\,,\,m_1,m_2\in\BZ \right\}\,,\\
        \Lambda_\mathrm{E}^{(2)} &= \left\{ m_1 \, v_1 + m_2 \, v_2\mid m_1 + m_2 \in 2\BZ+1\,,\,m_1,m_2\in\BZ \right\}\,.
    \end{aligned}
\end{align}

On the other hand, the shadow of $\Lambda(C_{5,2})$ is
\begin{align}
    S(\Lambda_\mathrm{E}(C_{5,2})) = \Lambda_\mathrm{E}(C_{5,2}) +\frac{\mathbf{x}}{2} = \left\{\frac{c+5m}{\sqrt{5}}\in\BR^2 \;\middle|\;c\in C_{2}\,,\,\,m\in\left(\BZ+\frac{1}{2}\right)^2\right\}\,.
\end{align}
In the basis by $v_1,v_2$, the characteristic vector $\mathbf{x} = \sqrt{5}\,(1,1)$ is represented by $\mathbf{x} =  3v_1 + v_2$, so the shadow $S(\Lambda_\mathrm{E}(C_{5,2})) = \Lambda_\mathrm{E}(C_{5,2})+\frac{\mathbf{x}}{2}$ consists of $\widetilde{\lambda} = m_1'\,v_1 +m_2'\, v_2$ where $m_1'$, $m_2'\in \BZ+\frac{1}{2}$. 
Hence, the shadow of $\Lambda_\mathrm{E}(C_{5,2})$ consists of points lying at the center of each square in the two-dimensional square lattice $\Lambda_\mathrm{E}(C_{5,2})$.
The $\BZ_2$ grading of the shadow is
\begin{align}
    \begin{aligned}
        \Lambda_\mathrm{E}^{(1)} &= \left\{ m_1' \,v_1 + m_2' \,v_2\;\middle|\; m_1' + m_2' \in 2\BZ+1\,,\,\,m_1',m_2'\in\BZ+\frac{1}{2} \right\}\,,\\
        \Lambda_\mathrm{E}^{(3)} &= \left\{ m_1' \,v_1 + m_2' \,v_2\;\middle|\; m_1' + m_2' \in 2\BZ\,,\,\,m_1',m_2'\in\BZ+\frac{1}{2} \right\}\,.
    \end{aligned}
\end{align}

The above discussion in the Euclidean signature can be used to provide the set of momenta in the Lorentzian signature.
By combining $\Lambda_\mathrm{E}^{(i)}$, we obtain $\Lambda_i$ whose elements take the form $(m_1\, v_1 + m_2\, v_2\,,\,m_1'\,v_1 + m_2'\,v_2)$ where the parameters $m_1,m_2,m_1',m_2'$ run over the regions
\begin{align}
\begin{aligned}
\label{eq:exmp_lattice}
    \Lambda_0:&\qquad m_1\,,\,m_2\,,\,m_1'\,,\,m_2'\in\BZ\,\qquad\qquad \;\;\, m_1 + m_2 = m_1' + m_2'&  & \mod 2\,, \\
    \Lambda_1:&\qquad m_1\,,\,m_2\,,\,m_1'\,,\,m_2'\in\BZ\,\qquad \qquad \;\;\, m_1 + m_2 = m_1' + m_2'+1 &  & \mod 2\,, \\
    \Lambda_2:&\qquad m_1\,,\,m_2\,,\,m_1'\,,\,m_2'\in\BZ+1/2\,\qquad m_1 + m_2 = m_1' + m_2' &  & \mod 2\,, \\
    \Lambda_3:&\qquad m_1\,,\,m_2\,,\,m_1'\,,\,m_2'\in\BZ+1/2\,\qquad m_1 + m_2 = m_1' + m_2'+1 &  & \mod 2\,.
\end{aligned}
\end{align}
Then, the lattice of the orbifold theory $\CO$ is $\Lambda_\CO = \Lambda_0\cup \Lambda_3$.
Similarly, the lattice of the NS sector in the fermionized theory is $\Lambda_\mathrm{NS} = \Lambda_0\cup \Lambda_2$ and the set for the Ramond sector is $\Lambda_\mathrm{R} = \Lambda_1\cup \Lambda_3$.

Rotating the sets $\Lambda_i\ni (\lambda_1,\lambda_2)$ into the momentum basis by $p_L = (\lambda_1+\lambda_2)/\sqrt{2}$ and $p_R = (\lambda_1-\lambda_2)/\sqrt{2}$, we obtain the sets $\widetilde{\Lambda}_i$ of left- and right-moving moementa. Then, the theta functions of $\widetilde{\Lambda}_i$ are
\begin{align}
    \Theta_{\widetilde{\Lambda}_i}(\tau,\bar{\tau}) = \sum_{m_1,m_1',m_2,m_2'}\; q^{\frac{1}{4}\left\{(m_1+m_1')^2 + (m_2 + m_2')^2\right\}}\; \bar{q}^{\frac{1}{4}\left\{(m_1-m_1')^2 + (m_2 - m_2')^2\right\}}\,,
\end{align}
where the sum is taken under the conditions of \eqref{eq:exmp_lattice} depending on $i=0,1,2,3$. 
The orbifold partition function is given by 
\begin{align}
\begin{aligned}
    Z_\CC^\CO(\tau,\bar{\tau}) = \frac{1}{|\eta(\tau)|^4} \left[\Theta_{\widetilde{\Lambda}_0}(\tau,\bar{\tau}) + \Theta_{\widetilde{\Lambda}_3}(\tau,\bar{\tau})  \right]\,.
\end{aligned}
\end{align}
In the fermionized theory, depending on the choice of the spin structures, the partition functions become
\begin{align*}
    Z_\CC^\CF[00] = \frac{1}{|\eta(\tau)|^4} \left[\Theta_{\widetilde{\Lambda}_0}(\tau,\bar{\tau}) + \Theta_{\widetilde{\Lambda}_2}(\tau,\bar{\tau})  \right]\,,\qquad
    Z_\CC^\CF[10] = \frac{1}{|\eta(\tau)|^4} \left[\Theta_{\widetilde{\Lambda}_0}(\tau,\bar{\tau}) - \Theta_{\widetilde{\Lambda}_2}(\tau,\bar{\tau})  \right]\,,\\
    Z_\CC^\CF[01] = \frac{1}{|\eta(\tau)|^4} \left[\Theta_{\widetilde{\Lambda}_1}(\tau,\bar{\tau}) + \Theta_{\widetilde{\Lambda}_3}(\tau,\bar{\tau})  \right]\,,\qquad
    Z_\CC^\CF[11] = \frac{1}{|\eta(\tau)|^4} \left[\Theta_{\widetilde{\Lambda}_1}(\tau,\bar{\tau}) - \Theta_{\widetilde{\Lambda}_3}(\tau,\bar{\tau})  \right]\,.
\end{align*}

\subsubsection{$n=6\,,\,$ $p=5$ case ($C_{5,2}^3$)}

Let us consider the Euclidean classical code $C_{5,2}^3$ generated by
\begin{align}
    G_{C_{5,2}^3} = 
    \begin{bmatrix}
        1 & 2 & 0 & 0 & 0 & 0\\
        0 & 0 & 1 & 2 & 0 & 0\\
        0 & 0 & 0 & 0 & 1 & 2
    \end{bmatrix}\,.
\end{align}
The classical code is given by $C_{5,2}^3 = C_{5,2}\times C_{5,2} \times C_{5,2}$ where $C_{5,2}$ is the classical code in \eqref{eq:5,2}.
Then, the Euclidean Construction A lattice is also written as $\Lambda_\mathrm{E}(C_{5,2}^3) = \Lambda_\mathrm{E}(C_{5,2})\times\Lambda_\mathrm{E}(C_{5,2})\times\Lambda_\mathrm{E}(C_{5,2})$.
Hence, we can choose an orthonormal basis of $\Lambda_\mathrm{E}(C_{5,2}^3)$ as follows:
\begin{align}
\begin{aligned}
    v_1 = \frac{1}{\sqrt{5}}\,(1,2,0,0,0,0)\,,\qquad v_2 = \frac{1}{\sqrt{5}}\,(2,-1,0,0,0,0)\,,\\
    v_3 = \frac{1}{\sqrt{5}}\,(0,0,1,2,0,0)\,,\qquad v_4 = \frac{1}{\sqrt{5}}\,(0,0,2,-1,0,0)\,,\\
    v_5 = \frac{1}{\sqrt{5}}\,(0,0,0,0,1,2)\,,\qquad v_6 = \frac{1}{\sqrt{5}}\,(0,0,0,0,2,-1)\,,
\end{aligned}
\end{align}
where $v_i\cdot v_j=\delta_{ij}$ ($i=1,2,3,4,5,6$).
Then any element of $\Lambda_\mathrm{E}(C_{5,2}^3)$ can be written in the form $\lambda = \sum_{i=1}^6 m_i\, v_i$ where $m_i\in\BZ$.
For example, the characteristic vector is $\mathbf{x} =\sqrt{5}\,(1,1,1,1,1,1) = 3v_1 +v_2 + 3v_3+v_4+3v_5+v_6$.

Therefore, we can decompose the Construction A lattice $\Lambda_\mathrm{E}(C_{5,2}^3)$ into $\Lambda_\mathrm{E}^{(0)}$ and $\Lambda_\mathrm{E}^{(2)}$ where
\begin{align}
    \begin{aligned}
        \Lambda_\mathrm{E}^{(0)} &= \left\{ \sum_{i=1}^6 m_i\, v_i\;\middle|\; \sum_{i=1}^6 m_i  \in 2\BZ\,,\,m_i\in\BZ \right\}\,,\\
        \Lambda_\mathrm{E}^{(2)} &= \left\{ \sum_{i=1}^6 m_i\, v_i\;\middle|\; \sum_{i=1}^6 m_i  \in 2\BZ+1\,,\,m_i\in\BZ \right\}\,.
    \end{aligned}
\end{align}

On the other hand, the shadow $S\left(\Lambda_\mathrm{E}(C_{5,2}^3)\right)$, which is the translation of $\Lambda_\mathrm{E}(C_{5,2}^3)$ by $\frac{\mathbf{x}}{2}$, is given by
\begin{align}
    S\left(\Lambda_\mathrm{E}(C_{5,2}^3)\right) = \left\{\sum_{i=1}^6 m_i'\,v_i\in\BR^6\;\middle|\; m_i'\in \BZ+\frac{1}{2} \right\}\,.
\end{align}
Therefore, the shadow of $\Lambda_\mathrm{E}(C_{5,2}^3)$ is $S(\Lambda_\mathrm{E}(C_{5,2}^3)) = S(\Lambda_\mathrm{E}(C_{5,2}))\times S(\Lambda_\mathrm{E}(C_{5,2}))\times S(\Lambda_\mathrm{E}(C_{5,2}))$.
This can be divided as
\begin{align}
    \begin{aligned}
        \Lambda_\mathrm{E}^{(1)} &= \left\{ \sum_{i=1}^6 m_i'\, v_i\;\middle|\; \sum_{i=1}^6 m_i'  \in 2\BZ+1\,,\,m_i'\in\BZ+1/2 \right\}\,,\\
        \Lambda_\mathrm{E}^{(3)} &= \left\{ \sum_{i=1}^6 m_i'\, v_i\;\middle|\; \sum_{i=1}^6 m_i'  \in 2\BZ\,,\,m_i'\in\BZ+1/2 \right\}\,.
    \end{aligned}
\end{align}

Using the above sets, we obtain the sets of momenta in the Lorentzian signature. An element of each $\Lambda_{i=0,1,2,3}$ takes the form 
\begin{align}
    (\lambda_1,\lambda_2) = \left(\,\sum_i m_i\, v_i\,,\, \sum_i m_i'\, v_i\,\right)\,,
\end{align}
where $m_i$ and $m_i'$ run over the region
\begin{align}
\begin{aligned}
\label{eq:exmp_lattice_2}
    \Lambda_0:&\qquad m_i\,,\,m_i'\in\BZ\, \;\, &\sum_i m_i &= \sum_i m_i' & & \mod 2\,, \\
    \Lambda_1:&\qquad m_i\,,\,m_i'\in\BZ\,\;\, &\sum_i m_i &= \sum_i m_i'+1 & & \mod 2\,, \\
    \Lambda_2:&\qquad m_i\,,\,m_i'\in\BZ+\frac{1}{2}\, \;\;\, \qquad&\sum_i m_i &= \sum_i m_i' & & \mod 2\,, \\
    \Lambda_3:&\qquad m_i\,,\,m_i'\in\BZ+\frac{1}{2}\,\;\;\, &\sum_i m_i &= \sum_i m_i'+1 & & \mod 2\,.
\end{aligned}
\end{align}
The orbifold theory is based on the momentum lattice $\Lambda_\CO = \Lambda_0\cup \Lambda_3$. In the fermionized theory, the NS sector and the R sector are given by $\Lambda_\mathrm{NS} = \Lambda_0 \cup \Lambda_2$ and $\Lambda_\mathrm{R} = \Lambda_1 \cup \Lambda_3$, respectively.
Recently, it has been shown in \cite{Kawabata:2023usr} that the fermionized theory is an $\CN=4$ superconformal field theory called the GTVW model \cite{Gaberdiel:2013psa}.

\section{Chern-Simons theories for Narain code CFTs and their $\mathbb{Z}_2$ gauging}
\label{sec:chern-simons}

In this section, we argue that abelian Chern-Simons (CS) theories naturally capture some properties of a Narain code CFT, its orbifold, and its fermionization.
Our consideration has some similarity to~\cite{Buican:2021uyp} in spirit but is distinct in many crucial ways.
See footnote~\ref{footnote:folding}.
The CS theories can be viewed as the symmetry topological field theories (TFTs)~\cite{Gaiotto:2020iye,Apruzzi:2021nmk,Freed:2022qnc} for the Narain code CFTs.

As a preparation for the subsequent subsections, here we review the basics of general abelian Chern-Simons theories.
An abelian CS theory is defined by the action
\begin{equation}
    S = \frac{\i}{4\pi} \int  K_{IJ} A^I \wedge \d A^J \,,
\end{equation}
where $A^I$ ($I=1,\ldots,2n$) are U(1) gauge fields and $K=(K_{IJ})$ is a non-degenerate symmetric integer matrix.
It is convenient to represent the level matrix $K$ in terms of an integral lattice~$\Gamma$ with an inner product $\oslash$ and a basis $\{e_I\}$: 
\begin{equation}
    K_{IJ} = e_I \oslash e_J \,.
\end{equation}
We denote the CS theory by ${\rm CS}(\Gamma)$.
The CS theory ${\rm CS}(\Gamma)$ is spin (fermionic), i.e., it depends on the spin structure of the spacetime manifold, when at least one of the diagonal components~$K_{II}$ is odd, i.e., when $\Gamma$ is an odd lattice~\cite{Dijkgraaf:1989pz,Belov:2005ze}.
If $K_{II}$ are even, or equivalently if $\Gamma$ is even, the CS theory is non-spin (bosonic).

Suppose that the Chern-Simons theory is non-spin.
The data for the modular tensor category corresponding to the non-spin abelian CS theory can be expressed in terms of $K$, or equivalently by the pair $(\Gamma,\oslash)$.
See~\cite{Lee:2018eqa} for a review.
A Wilson line 
\begin{equation}
W_n:= \exp\left[\i\, n_I \! \oint A^I\right]
\end{equation} 
(the worldline of an anyon) is labeled by a set of integers $n_I$ with identification $n_I\sim n_I + K_{IJ}\,\ell^J$ for $\ell^J\in\mathbb{Z}$, implying that the anyon label $(n_I)$, or equivalently $n:=n_I\, e^I$ with $\{e^I\}$ a dual basis, takes values in the quotient $\Gamma^*/\Gamma$, where $\Gamma^*$ is the dual of $\Gamma$ with respect to $\oslash$.
The spin of the line is $h(n)=\frac{1}{2} K^{IJ} n_I\, n_J = \frac{1}{2} n \oslash n$,  where $(K^{IJ})$ is the inverse of $K$.
The spin $h(n)$ mod $\mathbb{Z}$ is called the topological spin.
The braiding matrix between $W_{m}$ and $W_{n}$ is 
\begin{equation}
B_{m n}= \exp\left(2\pi\i\, m \oslash n\right) \,,
\end{equation}
and the modular $S$- and $T$-matrices are given as
\begin{equation}\label{eq:S-T-CS}
S_{m n}=|\Gamma^*/\Gamma|^{-\frac{1}{2}}\,B_{m n} \,,
\qquad
T_{m n}= e^{-\frac{\pi\i}{6}}\, e^{2\pi\i\, h(m)}\, \delta_{m n} \,.
\end{equation}

\subsection{Chern-Simons theories for Narain code CFTs}

Recall that the torus partition function of a Narain code CFT can be expressed as a finite sum (specified by the complete weight enumerator) involving simple building blocks ($\psi^+_{ab}\ ,\psi^-_{ab}\ ,\ldots$), as shown for $p=2$ in~\cite{Dymarsky:2020bps,Dymarsky:2020qom},  for $p\neq 2$ in \cite{Yahagi:2022idq}, and for the orbifolded and fermionized versions in section~\ref{sec:gauging_code_CFT}.
See~\eqref{eq:partition_theta_enumerator}.
This structure is similar to that of a rational CFT, whose partition function on a torus is a finite sum involving the products of the characters (more generally conformal blocks for the partition function on a general Riemann surface) of a chiral algebra and their complex conjugate.
In the following, we will interpret the building blocks as the basis states of a canonically quantized abelian Chern-Simons theory on the torus, in the same way as the characters and conformal blocks of a WZW model are interpreted as the basis states of the associated non-abelian Chern-Simons theory~\cite{Witten:1988hf}.
The crucial difference between the two structures 
is that the building blocks for a Narain code CFT depend on both $\tau$ and $\bar\tau$, where $\tau$ is the modulus of the torus.

Technically, this structure of the partition function arises from the Construction A lattice~$\Lambda(\mathcal{C})$ in~\eqref{eq:constructon-A-lattice}, which we write as
\begin{equation}\label{eq:Lambda-C-rewriting}
\Lambda(\mathcal{C}) = \bigcup_{c\,\in\,\mathcal{C}} \left(\Gamma_0 +\frac{c}{\sqrt p}\right) \subset \Gamma^*_0 \,.
\end{equation}
Here
\begin{equation}
\Gamma_0 =  \sqrt{p}\,\mathbb{Z}^{2n}    
\end{equation}
is an even lattice with respect to the inner product $\odot$ determined by the Lorentzian metric~$\eta$ defined in~\eqref{eq:eta-def}.
The partition function is given by a double summation.
The first summation is finite and is over the classical code $\mathcal{C}$.
The second summation is infinite and is over $\gamma\in \Gamma_0$.
The infinite sum over $\gamma$ for fixed $c$ is an analog of a character for the algebra generated by (appropriately normal-ordered versions of)
\begin{equation}\label{eq:non-chiral-generators}
\exp \left[  \frac{\i}{\sqrt{2}}\,\big(
(\gamma_1+\gamma_2)\cdot X(z) + (\gamma_1-\gamma_2) \cdot \bar X(\bar z)
\big)\right]
\end{equation}
for $\gamma = (\gamma_1,\gamma_2) \in \Gamma_0$.
The algebra extends in a non-chiral way the product of chiral and anti-chiral algebras corresponding to $[\U(1)_{2p}]^{n}$ and  $[\U(1)_{-2p}]^{n}$, respectively.
The characters of the latter are holomorphic or anti-holomorphic and are given by products of the theta functions in~\eqref{eq:Theta-m-k-def} (divided by eta functions) and their complex conjugate, which appear in non-holomorphic functions $\psi_{ab}$, etc.%
\footnote{
See, for example, \cite{Dymarsky:2021xfc,Angelinos:2022umf} for the construction of code-based Narain CFTs that are not rational.
}

The structure of the momentum lattice shown in~\eqref{eq:Lambda-C-rewriting} is captured via the bulk-boundary correspondence by a non-spin $\U(1)^{2n}$ CS theory~${\rm CS}(\Gamma_0)$ determined by the even lattice $\Gamma_0$, which is a special case of the Chern-Simons theory discussed at the beginning of this section.\footnote{%
Indeed, the wave function obtained by explicit quantization has a non-holomorphic dependence on $\tau$ when the signature of $K$ is indefinite~\cite{Gukov:2004id,Belov:2005ze}.
}
The codewords $c\in\mathcal{C} \simeq \Lambda(\mathcal{C})/\Gamma_0
\subset 
\Gamma_0 ^*/\Gamma_0  $ correspond to a subset of Wilson lines~$W_c$ (selected by a boundary condition as we will see below).
More general lines $W_\gamma$ are given by arbitrary $\gamma\in\Gamma_0$ with spin $\gamma\odot\gamma/2$ mod $\mathbb{Z}$.
The level matrix is given by
 $   K^{(0)}_{IJ} := b_I \odot b_J = p \,\eta_{IJ}  $, i.e., 
\begin{align}
K^{(0)}= \left[
    \begin{array}{c|c}
0_{n\times n} &\; p 1_{n\times n} \\
p 1_{n\times n}     &0_{n\times n}
    \end{array}\right],
\end{align}
where $b_I = \sqrt{p}\, (\delta_{iI})_{i=1}^{2n}$ form a basis of $\Gamma_0 $.
The theory consists of $n$ copies of the same BF theory, which is equivalent to the topological $\mathbb{Z}_p$ gauge theory~\cite{Maldacena:2001ss,Banks:2010zn} and the low-energy limit of the $\mathbb{Z}_p$ toric code model~\cite{Kitaev:1997wr}.
The diagonal components of $K^{(0)} $ are even (zero) as they should be for a non-spin CS theory.
The modular $S$- and $T$-transformations of $\psi_{ab}(\tau,\bar\tau)$ were computed in~\cite{Kawabata:2022jxt}.
It is consistent with the modular $S$- and $T$-matrices of the CS theory~(\ref{eq:S-T-CS}) determined by $K^{(0)}$.

We propose that the Narain code CFT on a Riemann surface $\Sigma$ is realized by the Chern-Simons theory~${\rm CS}(\Gamma_0)$ on the spacetime $\Sigma\times$(interval), which has two boundaries as illustrated in figure~\ref{fig:3d}.
The bulk theory supports 2d degrees of freedom (DOF) on one boundary (edge modes)~\cite{1995AdPhy..44..405W}, and is subject to a topological boundary condition on the other boundary.
An equivalence class of vertex operators, where two vertex operators are equivalent if they are related by the multiplication of an operator of the form~(\ref{eq:non-chiral-generators}), corresponds to a Wilson line that stretches between the two boundaries.
It is well-known~\cite{Kapustin:2010hk} that a topological boundary condition is in one-to-one correspondence with a Lagrangian subgroup of $\Gamma_0^*/\Gamma_0$ with respect to the bilinear form given by the level matrix.
In the present set-up, the ``Lagrangian subgroup'' is nothing but the classical code~$\mathcal{C}\simeq\Lambda(\mathcal{C})/\Gamma_0$ that is self-dual with respect to the metric $\eta$.
The Lagrangian subgroup specifies the Wilson lines that can end on the boundary.
The evenness of $\Lambda(\mathcal{C})$ implies that these Wilson lines have integer spins, i.e., the boundary condition is bosonic.

\begin{figure}
\centering
\begin{tikzpicture}[scale=1.2]

    \begin{scope}
            \coordinate (O) at (0,0,0);
            \coordinate (A) at (0,2,0);
            \coordinate (B) at (0,2,2);
            \coordinate (C) at (0,0,2);
            \coordinate (D) at (2,0,0);
            \coordinate (E) at (2,2,0);
            \coordinate (F) at (2,2,2);
            \coordinate (G) at (2,0,2);
            
            \draw (O) -- (C) -- (G) -- (D) -- cycle;
            \draw (O) -- (A) -- (E) -- (D) -- cycle;
            \draw[fill=red!50, opacity=0.5] (O) -- (A) -- (B) -- (C) -- cycle;
            \draw[fill=green!20,opacity=0.5] (D) -- (E) -- (F) -- (G) -- cycle;
            \draw (C) -- (B) -- (F) -- (G) -- cycle;
            \draw (A) -- (B) -- (F) -- (E) -- cycle;

            \node at (1.63, 0.5) {\parbox{1cm}{\centering 2d\\ DOF}};

            \node at (-0.36, 0.5) {\parbox{1cm}{\centering top.\\ b.c.}};

            \node at (0.63, 0.5) {CS};
    \end{scope}

    \begin{scope}[xshift=3.25cm, yshift=0.5cm]
        \node at (0,0) {\Large $=$};
    \end{scope}

    \begin{scope}[xshift=5.25cm]
            \coordinate (O) at (0,0,0);
            \coordinate (A) at (0,2,0);
            \coordinate (B) at (0,2,2);
            \coordinate (C) at (0,0,2);

            \draw[thick, fill=cyan!50, opacity=0.5] (O) -- (A) -- (B) -- (C) -- cycle;

            \node at (-0.37, 0.5) {\parbox{1cm}{\centering 2d\\ CFT}};
    \end{scope}

\end{tikzpicture}

\caption{The bulk CS theory supports 2d degrees of freedom (DOF) on one boundary and is subject to a topological boundary condition on the other boundary specified by the classical code~$\mathcal{C}$.
At low energies, the Narain code CFT describes the combined 3d$+$2d system.
}
\label{fig:3d}
\end{figure}
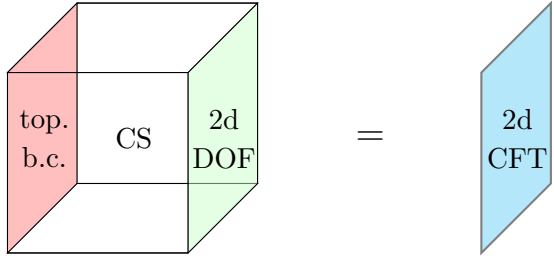

The CS theory $\mathrm{CS}(\Gamma_0)$ on the space $\Sigma$ has a Hilbert space spanned by the building blocks ($\prod_{i=1}^n \psi_{\alpha_i\beta_i}(\tau,\bar\tau)$ if $\Sigma$ is a torus) of the partition function.
The topological boundary condition specified by $\mathcal{C}$ defines a boundary state~$\langle\mathcal{C}|$.
Suppose that the classical code $\mathcal{C}$ corresponds to a quantum stabilizer code, i.e.,  $\mathcal{C}$ satisfies the condition~\eqref{stabilizer_condition}.
Then, a Wilson line~$W$ wrapping a cycle of $\Sigma$ satisfies $\langle\mathcal{C}|W=\langle\mathcal{C}|$ if and only if $W$ corresponds to an element of~$\mathcal{C}$.
This means that we can interpret $\langle\mathcal{C}|$ itself as the unique state of a quantum stabilizer code (with zero logical qubits).
In this interpretation, the general Wilson lines of $\mathrm{CS}(\Gamma_0)$ represent Pauli operators, and those labeled by elements of~$\mathcal{C}$ represent stabilizers.\footnote{\label{footnote:folding}
Our picture is similar to~\cite{Kapustin:2010if,Buican:2021uyp}, but our Chern-Simons theory differs from theirs.
The bulk Chern-Simons theories in these works are products of two decoupled theories corresponding to chiral and anti-chiral sectors of the 2d CFTs.
Our Chern-Simons theories are generally not such products and correspond to non-chiral CFTs with partition functions built from non-holomorphic functions $\psi_{ab}(\tau,\bar\tau)$.}

\subsection{Chern-Simons theories for orbifolded and fermionized CFTs}

Next, we study the bulk-boundary correspondence for the orbifolded and fermionized Narain code CFTs.
We will often make use of an even lattice
\begin{equation}\label{eq:Gamma-e-def}
\Gamma_e := \{ \gamma \in\Gamma_0 \,|\, \chi\odot \gamma \in 2\mathbb{Z}\}  \,.
\end{equation}
We assume that the value of $n$ is such that the $\mathbb{Z}_2$ symmetry used for orbifolding and fermionization is non-anomalous as described in section~\ref{sec:gauging_code_CFT}. 

\subsubsection{$p\neq 2$}

In the case of $p$ odd prime, we have $\chi = \sqrt p \, \mathbf{1}_{2n}$.
The absence of anomaly for the zero-form symmetry $(\mathbb{Z}_2^{[0]})_\chi$ generated by $\chi$ as in~(\ref{eq:Z2-sym-def}) requires that $n\in2\mathbb{Z}$.
We note that $\chi\in \Gamma_e$.
The lattice~$\Lambda_\mathcal{O}$ for the orbifold can be written as
\begin{equation}
\Lambda_\mathcal{O}=
\bigcup_{c\,\in\,\mathcal{C}} \left(\Gamma_\mathcal{O} + \nu(c) \right) 
\,,
\qquad
\Gamma_\mathcal{O} = \Gamma_e \cup \left(\Gamma_e+\zeta\right) \,,\end{equation}
where $\nu(c)\in\Lambda_0$ is a certain element determined by $c$,%
\footnote{%
We omit the explicit expressions for $\nu(c)$, which are easy to calculate but are rather complicated.
The elements $\nu(c)$ specify the representations of the non-chirally extended algebra generated by the operators~(\ref{eq:non-chiral-generators}) with $\gamma\in\Gamma_\mathcal{O}$.
}
and
\begin{equation}
\zeta =
\left\{
\begin{array}{cl}
  \delta + b_{2n}   &  \text{ if $n\in 4\mathbb{Z}+2$}\,, \\
 \delta    & \text{ if  $n\in 4\mathbb{Z}$}\,. 
\end{array}
\right.
\end{equation}
Note that $2\zeta \in \Gamma_e$.
One can check that $\Gamma_\mathcal{O}$ is an even  lattice, with $\zeta\odot\zeta$ even.
Thus the CS theory~$\mathrm{CS}(\Gamma_\mathcal{O})$ is non-spin.
The vertex operators specified by $\nu(c)$ for $c\in\mathcal{C}$ correspond to a subset of its Wilson lines that can end on the left boundary in figure~\ref{fig:3d}.
This means that the quotient group $\Lambda_\mathcal{O}/\Gamma_e$ specifies the topological boundary condition.
The self-duality of the lattice~$\Lambda_\mathcal{O}$ established in section~\ref{sec:gauging_code_CFT} guarantees that $\Lambda_\mathcal{O}/\Gamma_e$ is a Lagrangian subgroup of $\Gamma_e^*/\Gamma_e$ as it should be~\cite{Kapustin:2010hk}.

The non-spin theory~${\rm CS}(\Gamma_\mathcal{O})$ arises by 
gauging a bosonic one-form symmetry of a non-spin CS theory ${\rm CS}(\Gamma_e)$~\cite{Kapustin:2014gua,Gaiotto:2014kfa}.
(See also Appendix~C of~\cite{Seiberg:2016rsg}.)
To see this, we note that the even lattice $\Gamma_e$ 
can be written as
\begin{equation}
\Gamma_e = \{\gamma\in\Gamma_0 \, | \, 
\zeta\odot \gamma \in \mathbb{Z} \} \subset \Gamma_0 \,.
\end{equation}
It follows that $\zeta\in\Gamma_e^*$ corresponds to a Wilson line $W_\zeta$ of $\mathrm{CS}(\Gamma_e)$.
The line $W_\zeta$ has an integer spin, i.e., it is bosonic.
Let us consider the CS theory obtained by gauging the one-form symmetry~$(\BZ_2^{[1]})_\zeta$ generated by $W_\zeta$.
(In the condensed matter language, $W_\zeta$ represents the bosonic anyon that condenses.)%
\footnote{%
We denote by $(G^{[1]})_\lambda$ the one-form $G$ symmetry generated by the line $W_\lambda$.}
This eliminates the Wilson lines that have non-trivial braiding with~$W_\zeta$, i.e., the new labels of lines take values in $\{\gamma\in\Gamma_e^*\,|\, \zeta\odot\gamma\in\mathbb{Z} \} = \Gamma_\mathcal{O}^*$.
Moreover, two lines that differ by the fusion with $W_\zeta$ are identified, i.e., the new labels are identified if their difference is in $\Gamma_e\cup (\Gamma_e+\zeta) = \Gamma_\mathcal{O}$.
We conclude that $\mathrm{CS}(\Gamma_\mathcal{O})$ arises by gauging $(\BZ_2^{[1]})_\zeta$.

For the fermionized theory, the relevant lattice is $\Lambda_\mathrm{NS}=\Lambda_0\cup\Lambda_2$, which we write as
\begin{equation}
    \Lambda_\mathrm{NS}=
    \bigcup_{c\,\in\,\mathcal{C}} \left(\Gamma_\mathrm{NS}
    +\mu(c)
    \right) 
    \,,
    \qquad
    \Gamma_\mathrm{NS} = \Gamma_e \cup \left(\Gamma_e+\xi\right) \,. 
\end{equation}
Here $\mu(c)$ is an element of $\Lambda_0$ uniquely determined by $c$,%
\footnote{
For simplicity, we omit the explicit expressions for $\mu(c)$.
The elements $\mu(c)$ specify the representations of the non-chirally extended algebra generated by the operators~(\ref{eq:non-chiral-generators}) with $\gamma\in\Gamma_\mathrm{NS}$.
}
and
\begin{equation}
\xi =
\left\{
\begin{array}{cl}
   \delta  &  \text{ if  $n\in 4\mathbb{Z}+2$,} \\
\delta + b_{2n}     & \text{ if  $n\in 4\mathbb{Z}$.}
\end{array}
\right.
\end{equation}
One can check that $\Gamma_\mathrm{NS}$ is an odd lattice with $\xi\odot\xi$ being an odd integer, showing that the CS theory~${\rm CS}(\Gamma_\mathrm{NS})$ is spin.
The vertex operators specified by $\mu(c)$ correspond to a subset of Wilson lines in ${\rm CS}(\Gamma_\mathrm{NS})$.

\begin{figure}
\centering

\begin{tikzpicture}[thick, >=stealth, box_orange/.style={rectangle, draw, fill=orange!30, text width=2.2cm, text centered, minimum height=3em}, box_cyan/.style={rectangle, draw, fill=cyan!20, text width=2.2cm, text centered, minimum height=3em}]

    \node[box_orange] (CS0) {CS($\Gamma_0$)};

    \node[box_orange, right of=CS0, node distance=5cm] (CSNS){CS($\Gamma_\text{NS}$)};

    \node[box_orange, left of=CS0, node distance=5cm] (CSOrb) {CS($\Gamma_\CO$)};

    \node[box_orange, above of=CS0, node distance=3cm] (CSe){CS($\Gamma_{e}$)};

    \node[box_cyan, below of=CS0, node distance=3cm] (Narain) {Narain code CFT};

    \node[box_cyan, right of=Narain, node distance=5cm] (fermionized) {Fermionized CFT};

    \node[box_cyan, left of=Narain, node distance=5cm] (orbifolded) {Orbifolded CFT};

    \draw[<-] (0, 0.75) --+ (0, 1.5) node[midway, left=0.1cm] {\parbox{1cm}{\centering gauge\\\vspace*{0.2cm} $(\BZ_2^{[1]})_\iota$}};

     \draw[<-] (-5.2, 0.75) --+ (3.5, 2) node[midway, above left= -0.1cm
     ] {gauge $(\BZ_2^{[1]})_\zeta$};
     
     \draw[<-] (-4.5, -2.25) --+ (3, 4.5) node[midway,  right= 0.1cm
     ] {b.c.};

     \draw[<-] (5.2, 0.75) --+ (-3.5, 2) node[midway, above right= -0.1cm
     ] { 
     gauge $(\BZ_2^{[1]})_\xi$
     };
     
     \draw[<-]
     (4.5, -2.25) --+ (-3, 4.5)
     node[midway, above right = -0.15cm
     ] {b.c.};

     \draw[<-] (1,-2.2) arc (-26:26:5cm) node[midway, below right = 0cm] {b.c.};

     \draw[->]  (0, -0.75) --+ (0, -1.5) node[midway, left=0.1cm
     ] {
     b.c.};

     \draw[->]  (-5, -0.75) --+ (0, -1.5) node[midway, left=0.1cm
     ] {
     b.c.};

     \draw[->]  (5, -0.75) --+ (0, -1.5) node[midway, right=0.1cm] {
     b.c.};

     \draw[<-] (-3.5, -2.9) --+ (2, 0) node[midway, above=0.1cm] {gauge $(\BZ_2^{[0]})_\chi$};
     \draw[->] (-3.5, -3.1) --+ (2, 0) node[midway, below=0.1cm] {gauge $(\BZ_2^{[0]})_\text{qu}$};

     \draw[<-] (3.5, -2.9) --+ (-2, 0) node[midway, above=0.1cm] {fermionize};
     \draw[->] (3.5, -3.1) --+ (-2, 0) node[midway, below=0.1cm] {bosonize};
\end{tikzpicture}

\caption{The interrelations among 3d Chern-Simons theories and 2d CFTs.
The zero-form symmetry $(\BZ_2^{[0]})_\chi$ is the $\mathbb{Z}_2$ symmetry that defines the orbifold.
$(\BZ_2^{[0]})_\text{qu}$ is the corresponding quantum symmetry~\cite{Vafa:1989ih}.
}
\label{fig:CS-relations}
\end{figure}
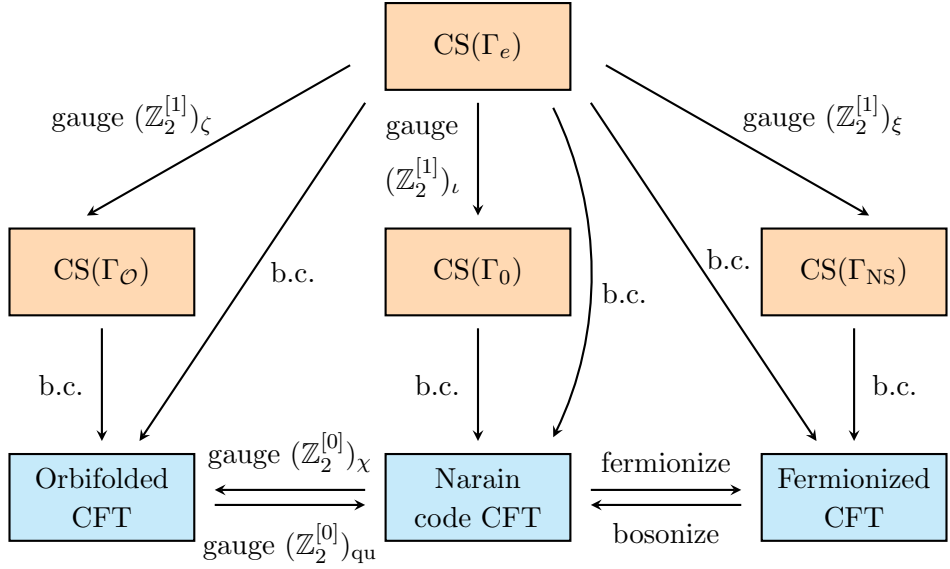
This spin CS theory~${\rm CS}(\Gamma_\mathrm{NS})$ arises by 
gauging a fermionic one-form symmetry of a non-spin CS theory ${\rm CS}(\Gamma_e)$.
The even lattice $\Gamma_e$ admits another rewriting as
\begin{equation}
    \Gamma_e = \{\gamma\in\Gamma_0 \, | \, 
    \xi\odot \gamma \in \mathbb{Z} \} \subset \Gamma_0 \,.
\end{equation}
It follows that $\xi\in\Gamma_e^*$ corresponds to a Wilson line $W_\xi$ of $\mathrm{CS}(\Gamma_e)$.
The line $W_\xi$ has a half-integer spin, i.e., it is fermionic.
Let us consider the CS theory obtained by gauging the one-form symmetry~$(\BZ_2^{[1]})_\xi$ generated by $W_\xi$.
(In the condensed matter language, $W_\xi$ represents the fermionic anyon that condenses.)
This eliminates the Wilson lines that have non-trivial braiding with $W_\xi$, i.e., the new labels of lines take values in $\{\gamma\in\Gamma_e^*\,|\, \xi\odot\gamma\in\mathbb{Z} \} = \Gamma_\mathrm{NS}^*$.
Thus, the spin CS theory~$\mathrm{CS}(\Gamma_\mathrm{NS})$
arises by gauging~$(\BZ_2^{[1]})_\xi$.\footnote{
For $\gamma\in\Gamma_e$, $W_{\gamma}$ and $W_{\gamma+\xi}$ have different spins mod $\mathbb{Z}$ and should not be identified~\cite{Seiberg:2016rsg}.
}

We note that using $\Gamma_e$ in~(\ref{eq:Gamma-e-def}), there is yet another rewriting of $\Gamma_0$, namely  $\Gamma_0=\Gamma_e\cup(\Gamma_e+\iota)$, where $\iota\in\Gamma_0$ is an element such that $\chi\odot\iota\in2\mathbb{Z}+1$.
Because $\Gamma_0$ is even, $\iota\odot\iota\in2\mathbb{Z}$.
Thus, the Wilson line $W_\iota$ of ${\rm CS}(\Gamma_e)$ is bosonic and generates a $\mathbb{Z}_2$ one-form symmetry~$(\BZ_2^{[1]})_\iota$.
Gauging this one-form symmetry produces ${\rm CS}(\Gamma_0)$.
The topological boundary condition is specified by the Lagrangian subgroup $\Lambda(\mathcal{C})/\Gamma_e$ of $\Gamma_e^*/\Gamma_e$.

Thus, the three Chern-Simons theories ${\rm CS}(\Gamma_0)$, ${\rm CS}(\Gamma_\mathcal{O})$, and ${\rm CS}(\Gamma_\mathrm{NS})$ all arise by gauging one-form~$\mathbb{Z}_2$ symmetries of  ${\rm CS}(\Gamma_e)$.
We can realize the Narain code CFT, its orbifold, and its fermionization by imposing the  topological boundary conditions on the fields of the non-spin theory ${\rm CS}(\Gamma_e)$ such that the Wilson lines labeled by the Lagrangian subgroups $\Lambda(\mathcal{C})/\Gamma_e$, $\Lambda_\mathcal{O}/\Gamma_e$, and $\Lambda_\mathrm{NS}/\Gamma_e$ of $\Gamma_e^*/\Gamma_e$ can end on the left boundary of figure~\ref{fig:3d}, respectively.
The first two boundary conditions are bosonic while the last is fermionic.
In this picture, the bulk topological field theory~${\rm CS}(\Gamma_e)$ is invariant under the topological manipulations (orbifolding, fermionization, and bosonization) on the 2d CFTs~\cite{Gaiotto:2020iye,Kaidi:2022cpf}.
The relations we have found are summarized in figure~\ref{fig:CS-relations}.

\subsubsection{$p=2$ and $\mathbb{F}_4$-even $\mathcal{C}$}

\begin{figure}
\centering

\begin{tikzpicture}[thick, >=stealth, box_orange/.style={rectangle, draw, fill=orange!30, text width=2.2cm, text centered, minimum height=2em}, box_cyan/.style={rectangle, draw, fill=cyan!20, text width=2.2cm, text centered, minimum height=2em}]

    \node[box_orange] (CS0) {CS($\Gamma_0$)};

    \node[box_orange, right of=CS0, node distance=5cm] (CSNS){CS($\Gamma_\text{NS}$)};

    \node[box_orange, left of=CS0, node distance=5cm] (CSOrb) {CS($\Gamma_\CO$)};

    \node[box_orange, above of=CS0, node distance=2cm] (CSe){CS($\Gamma_{e}$)};

    \node[box_orange, above of=CSe, node distance=2cm] (CS1){CS($\Gamma_{1}$)};

    \node[box_cyan, below of=CS0, node distance=2cm] (Narain) {Narain code CFT};

    \node[box_cyan, right of=Narain, node distance=5cm] (fermionized) {Fermionized CFT};

    \node[box_cyan, left of=Narain, node distance=5cm] (orbifolded) {Orbifolded CFT};

    \draw[<-, >=stealth] (0,3.5) to (0, 2.5);
    \node at (0.07,3) {\centering gauge\, $(\BZ_2^{[1]})_\chi$};
    
    \draw[->, >=stealth] (0,1.5) to (0, 0.5);
    \node at (0.07,1) {\centering gauge\,    $(\BZ_2^{[1]})_\iota$};

     \draw[<-] (-3.6,0.5) --+ (2.2,1) node[midway, above=0.4cm
     ] {gauge $(\BZ_{4}^{[1]})_\zeta$};
     
     \draw[->] (-1.3,1.5) to (-4,-1.3);
     \node at (-2.2,0) {b.c.};

     \draw[<-] (3.6,0.5) --+ (-2.2,1) node[midway, above=0.4cm
     ] {
     gauge $(\BZ_4^{[1]})_\xi$
     };
     
     \draw[->] (1.3,1.5) to (4,-1.3);
     \node at (3.1,0.2) {b.c.};

     \draw[<-] (1.35,-1.2) arc (-14:14:5cm) node[midway,right = 0cm] {b.c.};

     \draw[->] (0,-0.5) to (0,-1.4);
     \node at (0.4,-1) {b.c.};

     \begin{scope}[xshift=-5cm]
         \draw[->] (0,-0.5) to (0,-1.4);
     \node at (0.4,-1) {b.c.};
     \end{scope}

     \begin{scope}[xshift=5cm]
         \draw[->] (0,-0.5) to (0,-1.4);
     \node at (0.4,-1) {b.c.};
     \end{scope}

    \begin{scope}[yshift=-0.5cm]
        \draw[<-] (-3.5, -1.4) --+ (2, 0) node[midway, above=0.0cm] {gauge $(\BZ_2^{[0]})_\chi$};
     \draw[->] (-3.5, -1.6) --+ (2, 0) node[midway, below=0.0cm] {gauge $(\BZ_2^{[0]})_\text{qu}$};

     \draw[<-] (3.5, -1.4) --+ (-2, 0) node[midway, above=0.0cm] {fermionize};
     \draw[->] (3.5, -1.6) --+ (-2, 0) node[midway, below=0.0cm] {bosonize};
    \end{scope}
     
    \draw[->] (CS1) to[out=-160,in=80] (CSOrb);
    \node at (-4.8,1) {\centering \;\,gauge\, $(\BZ_2^{[1]})_\zeta$};

    \draw[->] (CS1) to[out=-20,in=100] (CSNS);
    \node at (4.8,1) {\centering \;gauge \,$(\BZ_2^{[1]})_\xi$\;};

    \draw[->] (CS1) to[out=180,in=145] (orbifolded);
    \node at (-5,2.8) {b.c.};

    \draw[->] (CS1) to[out=0,in=35] (fermionized);
    \node at (5,2.8) {b.c.};
    
\end{tikzpicture}

\caption{The relations among 3d Chern-Simons theories and 2d CFTs for $p=2$ and an $\mathbb{F}_4$-even classical code~$\mathcal{C}$.
}
\label{fig:CS-relations_p=2_F4-even}
\end{figure}
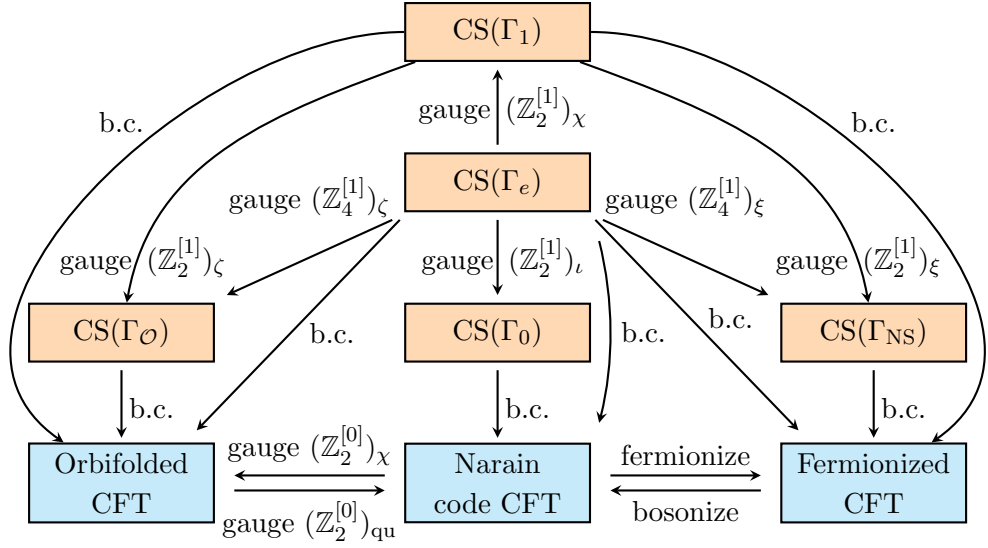

Suppose next that $p=2$ and that the classical code $\mathcal{C}$ is $\mathbb{F}_4$-even.
The absence of anomaly for $(\mathbb{Z}_2^{[0]})_\chi$ generated by $\chi$ requires that $n\in 4\mathbb{Z}$.
(See section~\ref{ss:gauging_code_p=2} for the definition.)
In this case, we have $\mathbf{1}_{2n} \in \mathcal{C}$, $\chi=(1/\sqrt{2}) \mathbf{1}_{2n}$ and $\Gamma_e\neq\Gamma_0$.
Let us define
\begin{equation}
\Gamma_1:= \Gamma_e \cup (\Gamma_e+\chi) \subset \Lambda_0 \,.
\end{equation}
Since $2\chi \in \Gamma_e$, $\Gamma_1$ is closed under addition and is an even lattice.
We note that $\Gamma_1$ is contained in all of $\Lambda(\mathcal{C})$, $\Lambda_\mathcal{O}$, and $\Lambda_\mathrm{NS}$.
Both of the two CS theories $\mathrm{CS}(\Gamma_e)$ and $\mathrm{CS}(\Gamma_1)$ reduce to the Narain code CFT, its orbifold, and it fermionization, with appropriate topological boundary conditions specified by the Lagrangian subgroups $\Lambda(\mathcal{C})/\Gamma_e, \Lambda_\mathcal{O}/\Gamma_e, \Lambda_\mathrm{NS}/\Gamma_e\subset\Gamma_e^*/\Gamma_e$ and $\Lambda(\mathcal{C})/\Gamma_1, \Lambda_\mathcal{O}/\Gamma_1, \Lambda_\mathrm{NS}/\Gamma_1 \subset\Gamma_1^*/\Gamma_1$ imposed, respectively.
The theory $\mathrm{CS}(\Gamma_1)$ arises by gauging the one-form symmetry $(\mathbb{Z}_2^{[1]})_\chi$ generated by $W_\chi$ of $\mathrm{CS}(\Gamma_e)$.

Moreover, we again have $\Gamma_0 = \Gamma_e\cup(\Gamma_e+\iota)$.
The Wilson line $W_\iota$ of ${\rm CS}(\Gamma_e)$ is bosonic and gauging~$(\BZ_2^{[1]})_\iota$ produces ${\rm CS}(\Gamma_0)$.
The topological boundary condition for the Narain code CFT is specified by the Lagrangian subgroup $\Lambda(\mathcal{C})/\Gamma_e$ of $\Gamma_e^*/\Gamma_e$.

We can also introduce
\begin{align}
\Gamma_\mathcal{O} = \bigcup_{j=0}^3 (\Gamma_e + j \cdot \zeta)
\,, \qquad
\Gamma_\mathrm{NS} = \bigcup_{j=0}^3 (\Gamma_e + j \cdot \xi)
\end{align}
with
\begin{align}
\zeta &= \left\{
\begin{array}{cl}
\delta     & \text{ if } n\in 8\mathbb{Z}\,,  \\
\delta+b_{2n}     & \text{ if } n\in 8\mathbb{Z}+4  
\end{array}
\right.
\end{align}
and
\begin{align}
\xi &= \left\{
\begin{array}{cl}
\delta + b_{2n}     & \text{ if } n\in 8\mathbb{Z}\,,  \\
\delta     & \text{ if } n\in 8\mathbb{Z}+4   \,,
\end{array}
\right.
\end{align}
and consider the corresponding CS theories $\mathrm{CS}(\Gamma_\mathcal{O})$ and $\mathrm{CS}(\Gamma_\mathrm{NS})$.

The relations among the CS theories and the CFTs are summarized in figure~\ref{fig:CS-relations_p=2_F4-even}.
We note that $W_\zeta$ of $\mathrm{CS}(\Gamma_\mathcal{O})$ and $W_\xi$ of $\mathrm{CS}(\Gamma_\mathrm{NS})$ generate one-form $\mathbb{Z}_4$ symmetries.

\subsubsection{$p=2$ and non-$\mathbb{F}_4$-even $\mathcal{C}$}

\begin{figure}
\centering

\begin{tikzpicture}[thick, >=stealth, box_orange/.style={rectangle, draw, fill=orange!30, text width=2.2cm, text centered, minimum height=3em}, box_cyan/.style={rectangle, draw, fill=cyan!20, text width=2.2cm, text centered, minimum height=3em}]

    \node[box_orange] (CSe) {CS($\Gamma_e$)};

    \node[box_orange, left of=CSe, node distance=5cm] (CSOrb) {CS($\Gamma_\CO$)};

    \node[box_cyan, below of=CS0, node distance=3cm] (Narain) {Narain code CFT};

    \node[box_cyan, right of=Narain, node distance=5cm] (fermionized) {Fermionized CFT};

    \node[box_cyan, left of=Narain, node distance=5cm] (orbifolded) {Orbifolded CFT};

    \draw[<-]  (-3.5, 0) --+ (2, 0) node[midway, above=0.1cm
     ] {gauge $(\mathbb{Z}_2^{[1]})_\zeta$};

    \draw[->]  (0, -0.75) --+ (0, -1.5) node[midway, left=0.1cm
     ] {
     b.c.};
    \draw[->]  (0.3, -0.75) --+ (4, -1.5) node[midway, above right =0cm
     ] {
     b.c.};
    \draw[->]  (-0.3, -0.75) --+ (-4, -1.5) node[midway, above left =0cm
     ] {
     b.c.};
     
    \draw[->]  (-5, -0.75) --+ (0, -1.5) node[midway, left=0.1cm
     ] {
     b.c.};    

     \draw[<-] (-3.5, -2.9) --+ (2, 0) node[midway, above=0.1cm] {gauge $(\BZ_2^{[0]})_\chi$};
     \draw[->] (-3.5, -3.1) --+ (2, 0) node[midway, below=0.1cm] {gauge $(\BZ_2^{[0]})_\text{qu}$};

     \draw[<-] (3.5, -2.9) --+ (-2, 0) node[midway, above=0.1cm] {fermionize};
     \draw[->] (3.5, -3.1) --+ (-2, 0) node[midway, below=0.1cm] {bosonize};
\end{tikzpicture}
\caption{The interrelations among 3d Chern-Simons theories and 2d CFTs for $p=2$, $n\in2\mathbb{Z}$, and a non-$\mathbb{F}_4$-even classical code~$\mathcal{C}$.
}
\label{fig:CS-relations_p=2_n-even_non-F4-even}
\end{figure}

In this case,
we have $\mathbf{1}_{2n} \notin \mathcal{C}$ and we take $\chi=\sqrt{2}\, \mathbf{1}_{2n}$, so $\Gamma_e = \Gamma_0$.

For $n \in 2\mathbb{Z}$, we get
\begin{equation}
\Lambda_{\mathcal{O}} = \mathop{\bigcup_{c\in\mathcal{C}}}_{\mathbf{1}_{2n}\cdot c \in 2\mathbb{Z}} \left(\Gamma_\mathcal{O}+ \frac{c}{\sqrt 2}\right) 
\,,
\qquad
\Lambda_{\mathrm{NS}} = \bigcup_{c\in\mathcal{C}} \left(\Gamma_0+\mu(c)\right)
\end{equation}
with
\begin{equation}
\Gamma_\mathcal{O}=\Gamma_0 \cup \left(\Gamma_0+ \frac{\mathbf{1}_{2n}}{\sqrt 2}\right)
\end{equation}
and
\begin{equation}
\mu(c)  = \left\{
\begin{array}{ll}
c & \text{ if } \mathbf{1}_{2n}\cdot c = 0 \text{ mod } 2 \,, \\
c + \mathbf{1}_{2n} & \text{ if } \mathbf{1}_{2n}\cdot c = 1 \text{ mod } 2 \,.
\end{array}
\right.
\end{equation}
The Narain code CFT, its orbifold, and its fermionized CFT are realized by the single non-spin CS theory $\mathrm{CS}(\Gamma_0)$ with the topological boundary conditions specified by the Lagrangian subgroups $\Lambda(\mathcal{C})/\Gamma_0, \Lambda_\mathcal{O}/\Gamma_0, \Lambda_\mathrm{NS}/\Gamma_0\subset\Gamma_0^*/\Gamma_0$, respectively.
Moreover, the orbifolded CFT can be obtained by imposing the topological boundary condition specified by $\Lambda_\mathcal{O}/\Gamma_\mathcal{O}$ on the non-spin theory~$\mathrm{CS}(\Gamma_\mathcal{O})$.
We summarize the relations in figure~\ref{fig:CS-relations_p=2_n-even_non-F4-even}.

\begin{figure}
\centering

\begin{tikzpicture}[thick, >=stealth, box_orange/.style={rectangle, draw, fill=orange!30, text width=2.2cm, text centered, minimum height=3em}, box_cyan/.style={rectangle, draw, fill=cyan!20, text width=2.2cm, text centered, minimum height=3em}]

    \node[box_orange] (CSe) {CS($\Gamma_e$)};

    \node[box_orange, right of=CSe, node distance=5cm] (CSNS){CS($\Gamma_\text{NS}$)};

    \node[box_cyan, below of=CS0, node distance=3cm] (Narain) {Narain code CFT};

    \node[box_cyan, right of=Narain, node distance=5cm] (fermionized) {Fermionized CFT};

    \node[box_cyan, left of=Narain, node distance=5cm] (orbifolded) {Orbifolded CFT};
     
    \draw[<-]  (3.5, 0) --+ (-2, 0) node[midway, above=0.1cm] {gauge $(\mathbb{Z}_2^{[1]})_\xi$};

    \draw[->]  (0, -0.75) --+ (0, -1.5) node[midway, left=0.1cm
     ] {
     b.c.};
     
    \draw[->]  (0.3, -0.75) --+ (4, -1.5) node[midway, above right =0cm
     ] {
     b.c.};
    \draw[->]  (-0.3, -0.75) --+ (-4, -1.5) node[midway, above left =0cm
     ] {
     b.c.};

    \draw[->]  (5, -0.75) --+ (0, -1.5) node[midway, left=0.1cm      ] {      b.c.};
     ] {
     b.c.};    

     \draw[<-] (-3.5, -2.9) --+ (2, 0) node[midway, above=0.1cm] {gauge $(\BZ_2^{[0]})_\chi$};
     \draw[->] (-3.5, -3.1) --+ (2, 0) node[midway, below=0.1cm] {gauge $(\BZ_2^{[0]})_\text{qu}$};

     \draw[<-] (3.5, -2.9) --+ (-2, 0) node[midway, above=0.1cm] {fermionize};
     \draw[->] (3.5, -3.1) --+ (-2, 0) node[midway, below=0.1cm] {bosonize};
\end{tikzpicture}
\caption{The interrelations among 3d Chern-Simons theories and 2d CFTs for $p=2$, $n\in2\mathbb{Z}+1$, and a non-$\mathbb{F}_4$-even classical code~$\mathcal{C}$.
}
\label{fig:CS-relations_p=2_n-odd_non-F4-even}
\end{figure}

For $n\in2\mathbb{Z}+1$,%
\footnote{%
The analysis here applies to the example in section~\ref{ss:n=1}.
}
we get
\begin{equation}
\Lambda_{\mathcal{O}} = \bigcup_{c\in\mathcal{C}} (\Gamma_0+\nu(c))
\,,
\qquad
\Lambda_\text{NS} = \mathop{\bigcup_{c\in\mathcal{C}}}_{\mathbf{1}_{2n}\cdot c \in 2\mathbb{Z}} \left(\Gamma_\mathrm{NS}+ \frac{c}{\sqrt 2}\right) 
\end{equation}
with
\begin{equation}
\Gamma_\mathrm{NS}=\Gamma_0 \cup \left(\Gamma_0+ \frac{\mathbf{1}_{2n}}{\sqrt 2}\right)
\end{equation}
and
\begin{equation}
\nu(c)  = \left\{
\begin{array}{ll}
c & \text{ if } \mathbf{1}_{2n}\cdot c = 0 \text{ mod } 2 \,, \\
c + \mathbf{1}_{2n} & \text{ if } \mathbf{1}_{2n}\cdot c = 1 \text{ mod } 2 \,.
\end{array}
\right.
\end{equation}
The Narain code CFT, its orbifold, and its fermionized CFT are realized by the single non-spin CS theory $\mathrm{CS}(\Gamma_0)$ with the topological boundary conditions specified by the Lagrangian subgroups $\Lambda(\mathcal{C})/\Gamma_0, \Lambda_\mathcal{O}/\Gamma_0, \Lambda_\mathrm{NS}/\Gamma_0\subset\Gamma_0^*/\Gamma_0$, respectively.
Moreover, the fermionized CFT can be obtained by imposing a topological boundary condition  on the spin theory~$\mathrm{CS}(\Gamma_\mathrm{NS})$.
We summarize the relations in figure~\ref{fig:CS-relations_p=2_n-odd_non-F4-even}.

\section{Ensemble average}
\label{sec:ensemble}

In this section, we will consider ensemble averages of the orbifolded and fermionized Narain code CFTs over CSS codes defined by classical self-dual codes, and calculate their partition functions in the large central charge (large-$n$) limit.

To this end, we first rewrite the building blocks \eqref{eq:rel_four_partition} for the partition functions in terms of the weight enumerator polynomial:
\begin{align}
    S 
        &= 
            \frac{1}{2\,|\eta(\tau)|^{2n}}\left[  W_\CC (\{\psi_{ab}^+\}) + W_\CC (\{\psi_{ab}^-\}) \right], \\
    T 
        &=
            \frac{1}{2\,|\eta(\tau)|^{2n}}\left[  W_\CC (\{\psi_{ab}^+\}) - W_\CC (\{\psi_{ab}^-\}) \right], \\
    U 
        &=
            \frac{1}{2\,|\eta(\tau)|^{2n}}\left[  W_\CC (\{\tilde{\psi}_{ab}^+\}) + W_\CC (\{\tilde{\psi}_{ab}^-\}) \right], \\
    V 
        &=
            \frac{1}{2\,|\eta(\tau)|^{2n}}\left[  W_\CC (\{\tilde{\psi}_{ab}^+\}) - W_\CC (\{\tilde{\psi}_{ab}^-\}) \right], 
\end{align}
where $\psi_{ab}^\pm$, $\tilde\psi_{ab}^\pm$ are given in \eqref{psi_pm}, \eqref{psi_tilde_pm} and can be written as 
\begin{align}
    \psi_{ab}^+
        &= 
            \Theta\genfrac{[}{]}{0.0pt}{}{\Balpha}{\Bzero}
            \left(\Bzero\,\middle|\,\BOmega\right) , \\
    \psi_{ab}^-
        &=  
            \Theta\genfrac{[}{]}{0.0pt}{}{\Balpha}{\Bzero}
            \left(p\,\Bdelta\,\middle|\,\BOmega\right), \\
    \tilde\psi_{ab}^+
        &= 
            \Theta\genfrac{[}{]}{0.0pt}{}{\Balpha + \Bdelta}{\Bzero}
            \left(\Bzero\,\middle|\,\BOmega\right), \\
    \tilde\psi_{ab}^-
        &=  
            \Theta\genfrac{[}{]}{0.0pt}{}{\Balpha + \Bdelta}{\Bzero}
            \left(\Bzero\,\middle|\,\BOmega+\BDelta\right) \ .
\end{align}
$\Theta$ is the Riemann-Siegel theta function of genus-two defined by
\begin{align}
    \Theta\genfrac{[}{]}{0.0pt}{}{\Balpha}{\Bbeta}
\left(\Bz\,\middle|\,\BOmega\right)
        =\!
        \sum_{\Bn\in
        \BZ^{2}}e^{2\pi \i\left[\frac{(\Bn+\Balpha)\,
        \BOmega\,(\Bn+\Balpha)^T}{2}+(\Bn+
        \Balpha)(\Bz+\Bbeta)^T\right]}\ ,
\end{align}
and we define the parameters as
\begin{align}
    \Balpha 
            =
            \left( \frac{a}{p}, \frac{b}{p}\right), \quad
    \Bdelta
            =
            \left(\frac{1}{2}, \frac{1}{2}\right),\quad
    \BOmega
            =
            p\begin{bmatrix}
                    \i\,\tau_2 & \tau_1 \\
                    \tau_1 & \i\, \tau_2
                \end{bmatrix}, \quad
    \BDelta
            =
            \begin{bmatrix}
                    ~0~ & p~ \\
                    ~p~ & 0~
                \end{bmatrix}.
\end{align} 

Let us consider Narain code CFTs based on CSS codes defined by a single self-dual code and average their partition functions over the set of self-dual codes
\begin{align}
\label{eq:set_of_self-dual_codes}
    \CM_{n,p}=\{\text{self-dual codes over $\BF_p$ of length $n$}\}\,.
\end{align}
After averaging over the CSS codes, the complete weight enumerator polynomial becomes \cite{Kawabata:2022jxt}
\begin{align}
\begin{aligned}
\label{eq:averaged_CSS}
    \overline{W}^{\mathrm{(CSS)}}_{n,p}(\{x_{ab}\})
     &= \frac{1}{|\CM_{n,p}|}\,\sum_{C\,\in\,\CM_{n,p}}\,W_{C,C}^{(\mathrm{CSS})}(\{x_{ab}\})\\
    &=
        \begin{dcases}
        \sum_A\, \frac{1}{\left(2^{\frac{n}{2}-1}+1\right)\cdots \left(2^{\frac{n}{2}-\mathrm{dim}_2(A)+1}+1\right)}\,\binom{n}{A}
        \,x^A\, & \text{if\, $p=2$}\,,
        \\
        \sum_A\, \frac{1}{\left(p^{\frac{n}{2}-1}+1\right)\cdots \left(p^{\frac{n}{2}-\mathrm{dim}_p(A)}+1 \right)}\,\binom{n}{A}
        \,x^A\, & \text{if\, $p$ odd prime}\,.
        \end{dcases}
\end{aligned}
\end{align}
In the large-$n$ limit, it reduces to
\begin{equation}\label{eq:W-CSS-saddle-result}
 \overline{W}^{\mathrm{(CSS)}}_{n,p}(\{x_{ab}\})
 = p^{-n}
 \left(\sum_{a,b} x_{ab}\right)^n \left(1+\mathcal{O}(n^{-1})\right) \,.
\end{equation}

To proceed, we will be focused on the averaged partition function by fixing the torus moduli $\tau=\i\, \beta/2\pi$.
Then we have $q=\bar{q}=e^{-\beta}$ and $\psi_{ab}^\pm, \tilde\psi_{ab}^\pm$ simplify to
\begin{align}
    \begin{aligned}
        \psi_{ab}^+ 
            &= 
                \psi_{a}^+ \,\psi_{b}^+ \ , \qquad \psi^+_a \equiv \sum_{k\in \BZ}\,e^{-\frac{p\beta}{2}\left( k+ \frac{a}{p}\right)^2} \ , \\
        \psi_{ab}^-
            &= 
                \psi_{a}^- \,\psi_{b}^- \ , \qquad \psi^-_a \equiv (-1)^a \sum_{k\in \BZ}\,(-1)^{pk}\,e^{-\frac{p\beta}{2}\left( k+ \frac{a}{p}\right)^2} \ , \\   
        \tilde\psi_{ab}^+ 
            &= 
                \tilde\psi_{a}^+ \,\tilde\psi_{b}^+ \ , \qquad \tilde\psi^+_a \equiv \sum_{k\in \BZ}\,e^{-\frac{p\beta}{2}\left( k+ \frac{a}{p} + \frac{1}{2}\right)^2} \ , \\
        \tilde\psi_{ab}^- 
            &=
                e^{2\pi\i\,p \left( \frac{a}{p} + \frac{1}{2}\right)\left( \frac{b}{p} + \frac{1}{2}\right)}\,
                \left( \sum_{k_1\in \BZ} (-1)^{k_1}\,e^{-\frac{p\beta}{2}\left( k_1 + \frac{a}{p} + \frac{1}{2}\right)^2 } \right)
                \left( \sum_{k_2\in \BZ} (-1)^{k_2}\,e^{-\frac{p\beta}{2}\left( k_2 + \frac{b}{p} + \frac{1}{2}\right)^2 } \right)
            \ .
    \end{aligned}
\end{align}
We also find the summations over $a,b \in \BF_p$ become
\begin{align}
    \begin{aligned}
        \sum_{a,b\,\in \,\BF_p} \psi_{ab}^+
            &=
                \vartheta_{3}\left(\frac{\i\,\beta}{2\pi p}\right)^2 
                \ , \qquad
        \sum_{a,b\,\in\, \BF_p} \psi_{ab}^-
            =
                \vartheta_{4}\left(\frac{\i\,\beta}{2\pi p}\right)^2 
                \ , \\
        \sum_{a,b\,\in\, \BF_p} \tilde\psi_{ab}^+
            &=
                \vartheta_{2}\left(\frac{\i\,\beta}{2\pi p}\right)^2 
                \ , \qquad
        \sum_{a,b\,\in\, \BF_p} \tilde\psi_{ab}^-
            =
                \Theta 
                    \begin{bmatrix}
                        \Bdelta \\
                        {\bf 0}
                    \end{bmatrix}
                \left( {\bf 0}\,|\BOmega'\right)\ ,
    \end{aligned}
\end{align}
where 
\begin{align}
    \Bdelta
        =
        \left( \frac{1}{2},\frac{1}{2}\right)\ , \qquad
    \BOmega'
        =
            \begin{bmatrix}
                \frac{\i\,\beta}{2\pi p} & \frac{1}{p} \\
                \frac{1}{p} & \frac{\i\,\beta}{2\pi p}
            \end{bmatrix}\ .
\end{align}

The averaged partition functions of the bosonic, orbifold, and fermionic theories without background $\BZ_2$ gauge fields are
\begin{align}\label{averaged_PF}
    \begin{aligned}
        \bar Z 
            &=
            S + T
            =
            \frac{1}{p^n\,\left|\eta\left(\frac{\i\,\beta}{2\pi}\right)\right|^{2n}}\,
            \vartheta_{3}\left(\frac{\i\,\beta}{2\pi p}\right)^{2n}
            \ , \\
        \bar Z_\CO[00]
            &=
            S + U \\
            &=
            \frac{1}{2\,p^n\,\left|\eta\left(\frac{\i\,\beta}{2\pi}\right)\right|^{2n}}\,\left[ 
            \vartheta_{3}\left(\frac{\i\,\beta}{2\pi p}\right)^{2n} + \vartheta_{4}\left(\frac{\i\,\beta}{2\pi p}\right)^{2n} + \vartheta_{2}\left(\frac{\i\,\beta}{2\pi p}\right)^{2n} 
            + \Theta 
                    \begin{bmatrix}
                        \Bdelta \\
                        {\bf 0}
                    \end{bmatrix}
                \left( {\bf 0}\,|\BOmega'\right)^{n}\right]
            \ , \\
        \bar Z_\CF[00]
            &=
            S + V \\
            &=
            \frac{1}{2\,p^n\,\left|\eta\left(\frac{\i\,\beta}{2\pi}\right)\right|^{2n}}\,\left[
            \vartheta_{3}\left(\frac{\i\,\beta}{2\pi p}\right)^{2n} + \vartheta_{4}\left(\frac{\i\,\beta}{2\pi p}\right)^{2n} + \vartheta_{2}\left(\frac{\i\,\beta}{2\pi p}\right)^{2n} 
            - \Theta 
                    \begin{bmatrix}
                        \Bdelta \\
                        {\bf 0}
                    \end{bmatrix}
                \left( {\bf 0}\,|\BOmega'\right)^{n}\right]
            \ .
    \end{aligned}
\end{align}

To see the difference between the two bosonic partition functions $\bar Z$ and $\bar Z_\CO [00]$, we numerically plot their ratio $\bar Z_\CO [00]/\bar Z$ as a function of $\beta$ 
in figure \ref{fig:averaged_PF_ratio}.
\begin{figure}[t]
    \centering
    \includegraphics[width=9cm]{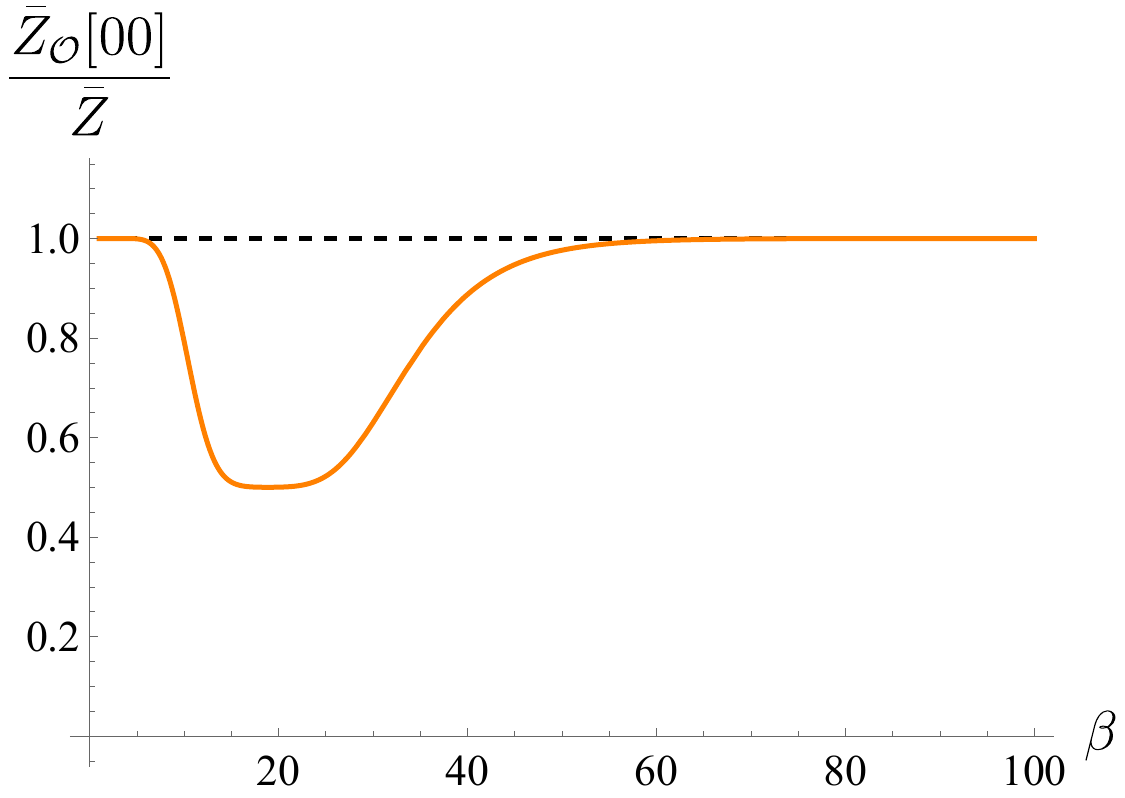}
    \caption{The ratio of the averaged partition functions $\bar Z$ and $\bar Z_\CO [00]$ of the Narain code CFTs constructed from the CSS codes and their orbifolds for $p=3$ and $n=25$.
    The ratio approaches one in $\beta\to 0$ and $\beta\to \infty$ limits, but deviates from one in general, implying that the averages of Narain code CFTs and their orbifolds are different theories.
    The position of the bump shifts to the right as $p$ increases.}
    \label{fig:averaged_PF_ratio}
\end{figure}
We find the ratio approaches one for small and large $\beta$, but deviates from one for the intermediate value.
We can understand the large $\beta$ behavior as follows.
In $\beta \to \infty$ limit, assuming $n\,e^{-\frac{\beta}{2p}}\ll 1$, we find
\begin{align}
    S = O\left( e^{\frac{n\beta}{12}} \right) \ , \qquad T = O\left( n\,e^{\frac{n\beta}{12} - \frac{\beta}{2p}}\right) \ , \qquad 
    U, V = O\left(e^{\frac{n\beta}{12} \left( 1 - \frac{3}{p}\right)}\right)  \ ,
\end{align}
hence, the $S$ term dominates in both $\bar Z$ and $\bar Z_\CO [00]$, and their ratio approaches one.
In $\beta \to 0$ limit, with the help of the modular transformation laws of the elliptic theta, eta, and Riemann-Siegel theta function (see e.g., \cite[\href{http://dlmf.nist.gov/21.5.E9}{(21.5.9)}]{NIST:DLMF}), we find
\begin{align}
    S = O\left( e^{\frac{n\pi^2}{3\beta}}\right) \ , \qquad T = O\left(n\,e^{\frac{n\pi^2}{3\beta} - \frac{2\pi^2 p}{\beta}} \right) \ , \qquad 
    U, V =  O\left(e^{\frac{n\pi^2}{\beta}\left( \frac{1}{3} - \frac{1}{\pi p}\right)}\right)\ ,
\end{align}
where we assume $n\,e^{- \frac{2\pi^2 p}{\beta}} \ll 1$.
Thus, the $S$ term also dominates in both $\bar Z$ and $\bar Z_\CO [00]$, and their ratio approaches one.

The above result shows that the averaged theories of the bosonic code CFTs and their orbifolds have the same spectrum for small and large $\beta$ in the large-$c$ (large-$n$) limit, but they are different ensemble averaging of Narain CFTs.

\section{Discussion}
\label{sec:discussion}

In this paper, we considered the gauging of a $\BZ_2$ symmetry in the bosonic Narain CFTs constructed from qudit stabilizer codes and obtained their orbifolded and fermionized CFTs.
We established the correspondence between the $\BZ_2$ even/odd Hilbert spaces in the untwisted/twisted sectors of the bosonic CFTs with the four sets of the lattice points associated with the codes.
Under this identification, the orbifolding and fermionization of Narain code CFTs are realized as deformations of the lattices by vectors that characterize the $\BZ_2$ symmetry.

In section~\ref{sec:gauging-Z2-bosonic}, we discussed the general lattice modifications to construct new even and odd self-dual lattices from an already-known one.
This technique has been used in Euclidean lattices to realize dense sphere packings such as the Leech lattice~\cite{conway2013sphere}.
Our formulation is valid not only for the Euclidean signature but also for the Lorentzian one, so we expect the modified Lorentzian lattices to have good lattice sphere packings and equivalently large spectral gaps.
It would be interesting to discuss the relationship between the modified Lorentzian lattices and the optimal sphere packings found by the modular bootstrap in~\cite{Hartman:2019pcd,Afkhami-Jeddi:2020hde,Afkhami-Jeddi:2020ezh}.

When a CFT has more than two $\BZ_2$ symmetries, there is a rich structure known as orbifold groupoid relating between the orbifolded and fermionized CFTs by several choices of the $\BZ_2$ subgroup \cite{Dixon:1988qd,Dolan:1994st,Gaiotto:2018ypj,Gaiotto:2020iye}.
In a Narain code CFT, such a groupoid structure may appear if there is more than one vector $\chi$ in the Construction A lattice with which one can deform the original lattice and obtain a family of new lattices.
It remains open from what class of quantum codes one can construct Narain code CFTs with symmetries including multiple $\BZ_2$ subgroups.

We note that the orbifolding and fermionization of Narain code CFTs are not necessarily Narain code CFTs constructed from qudit stabilizer codes in general.
For chiral cases, there are fermionic CFTs that can be built directly from classical codes without performing fermionization \cite{Gaiotto:2018ypj,Kawabata:2023nlt}.
It would be worthwhile to examine if there are similar constructions of non-chiral fermionic CFTs from quantum stabilizer codes without gauging a $\BZ_2$ symmetry.

In section \ref{sec:ensemble}, we considered the ensemble average of Narain code CFTs of CSS type, their orbifolded and fermionized theories, and derived the averaged partition functions in the large central charge (large-$n$) limit.
We pointed out that the averaged partition functions of the bosonic Narain code CFTs and their orbifolds are different.
This observation leads us to the conclusion that the ensemble averages of the Narain code CFTs and their orbifolds describe two different bosonic CFTs.
Since the two theories are discrete subsets of Narain CFTs, it may be reasonable to consider a more general ensemble that includes both Narain code CFTs and their orbifolds and take the weighted average.
It would be interesting to investigate if the resulting CFT can have a holographic description such as $\U(1)$ gravity as in \cite{Maloney:2020nni,Afkhami-Jeddi:2020ezh}.
Also, the average of fermionic CFTs is considered and proposed to be holographically dual to a spin Chern-Simons theory in \cite{Ashwinkumar:2021kav}.
A similar consideration may be applied to the average of fermionized code CFTs, and the partition function $\bar Z_\CF$ obtained in \eqref{averaged_PF} would be useful to identify the holographic description.

Fermionization of Narain code CFTs has been applied to searching for supersymmetric CFTs in the recent paper \cite{Kawabata:2023usr}, where the necessary conditions for CFTs to have supersymmetry \cite{Bae:2020xzl,Bae:2021jkc,Bae:2021lvk} are reformulated in terms of quantum codes.
On the other hand, a new class of Narain code CFTs has been constructed from quantum stabilizer codes over rings and finite fields in \cite{Alam:2023qac}.
Exploring supersymmetric CFTs through the fermionization of these novel code CFTs would be a promising avenue for future research.

\acknowledgments
We are grateful to S.\,M\"oller, Y.\,Moriwaki and H.\,Wada for valuable discussions.
The work of T.\,N. was supported in part by the JSPS Grant-in-Aid for Scientific Research (C) No.\,19K03863, Grant-in-Aid for Scientific Research (A) No.\,21H04469, and Grant-in-Aid for Transformative Research Areas (A) ``Extreme Universe''
No.\,21H05182 and No.\,21H05190.
The work of T.\,O. was supported in part by Grant-in-Aid for Transformative Research Areas (A) ``Extreme Universe'' No.\,21H05190.
The work of K.\,K. was supported by FoPM, WINGS Program, the University of Tokyo and JSPS KAKENHI Grant-in-Aid for JSPS fellows Grant No.\,23KJ0436.

\bibliographystyle{JHEP}
\bibliography{QEC_CFT}

\providecommand{\href}[2]{#2}\begingroup\raggedright\begin{thebibliography}{10}

\bibitem{frenkel1984natural}
I.~B. Frenkel, J.~Lepowsky, and A.~Meurman, {\it A natural representation of
  the fischer-griess monster with the modular function j as character},  {\em
  Proceedings of the National Academy of Sciences} {\bf 81} (1984), no.~10
  3256--3260.

\bibitem{frenkel1989vertex}
I.~Frenkel, J.~Lepowsky, and A.~Meurman, {\em Vertex operator algebras and the
  Monster}.
\newblock Academic press, 1989.

\bibitem{Dolan:1994st}
L.~Dolan, P.~Goddard, and P.~Montague, {\it {Conformal field theories,
  representations and lattice constructions}},  {\em Commun. Math. Phys.} {\bf
  179} (1996) 61--120, [\href{http://arxiv.org/abs/hep-th/9410029}{{\tt
  hep-th/9410029}}].

\bibitem{Gaiotto:2018ypj}
D.~Gaiotto and T.~Johnson-Freyd, {\it {Holomorphic SCFTs with small index}},
  {\em Can. J. Math.} {\bf 74} (2022), no.~2 573--601,
  [\href{http://arxiv.org/abs/1811.00589}{{\tt arXiv:1811.00589}}].

\bibitem{Kawabata:2023nlt}
K.~Kawabata and S.~Yahagi, {\it {Fermionic CFTs from classical codes over
  finite fields}},  {\em JHEP} {\bf 05} (2023) 096,
  [\href{http://arxiv.org/abs/2303.11613}{{\tt arXiv:2303.11613}}].

\bibitem{conway1979monstrous}
J.~H. Conway and S.~P. Norton, {\it Monstrous moonshine},  {\em Bulletin of the
  London Mathematical Society} {\bf 11} (1979), no.~3 308--339.

\bibitem{Lam201971HV}
C.~H. Lam and H.~Shimakura, {\it 71 holomorphic vertex operator algebras of
  central charge 24},  {\em Bulletin of the Institute of Mathematics Academia
  Sinica NEW SERIES} (2019).

\bibitem{Dolan:1989vr}
L.~Dolan, P.~Goddard, and P.~Montague, {\it {Conformal Field Theory of Twisted
  Vertex Operators}},  {\em Nucl. Phys. B} {\bf 338} (1990) 529--601.

\bibitem{Dixon:1988qd}
L.~J. Dixon, P.~H. Ginsparg, and J.~A. Harvey, {\it {Beauty and the Beast:
  Superconformal Symmetry in a Monster Module}},  {\em Commun. Math. Phys.}
  {\bf 119} (1988) 221--241.

\bibitem{conway2013sphere}
J.~H. Conway and N.~J.~A. Sloane, {\em Sphere packings, lattices and groups},
  vol.~290.
\newblock Springer Science \& Business Media, 2013.

\bibitem{Benjamin:2015ria}
N.~Benjamin, E.~Dyer, A.~L. Fitzpatrick, and S.~Kachru, {\it {An extremal
  ${\mathcal{N}}=2$ superconformal field theory}},  {\em J. Phys. A} {\bf 48}
  (2015), no.~49 495401, [\href{http://arxiv.org/abs/1507.00004}{{\tt
  arXiv:1507.00004}}].

\bibitem{Harrison:2016hbq}
S.~M. Harrison, {\it {Extremal chiral $\mathcal N=4$ SCFT with $c=24$}},  {\em
  JHEP} {\bf 11} (2016) 006, [\href{http://arxiv.org/abs/1602.06930}{{\tt
  arXiv:1602.06930}}].

\bibitem{Schellekens:1992db}
A.~N. Schellekens, {\it {Meromorphic C = 24 conformal field theories}},  {\em
  Commun. Math. Phys.} {\bf 153} (1993) 159--186,
  [\href{http://arxiv.org/abs/hep-th/9205072}{{\tt hep-th/9205072}}].

\bibitem{Hohn:2017dsm}
G.~H\"ohn, {\it {On the Genus of the Moonshine Module}},
  \href{http://arxiv.org/abs/1708.05990}{{\tt arXiv:1708.05990}}.

\bibitem{Hohn:2020xfe}
G.~H\"ohn and S.~M\"oller, {\it {Systematic Orbifold Constructions of
  Schellekens' Vertex Operator Algebras from Niemeier Lattices}},
  \href{http://arxiv.org/abs/2010.00849}{{\tt arXiv:2010.00849}}.

\bibitem{Moller:2019tlx}
S.~M\"oller and N.~R. Scheithauer, {\it {Dimension Formulae and Generalised
  Deep Holes of the Leech Lattice Vertex Operator Algebra}},
  \href{http://arxiv.org/abs/1910.04947}{{\tt arXiv:1910.04947}}.

\bibitem{vanEkeren:2020rnz}
J.~van Ekeren, C.~H. Lam, S.~M\"oller, and H.~Shimakura, {\it {Schellekens'
  list and the very strange formula}},  {\em Adv. Math.} {\bf 380} (2021)
  107567, [\href{http://arxiv.org/abs/2005.12248}{{\tt arXiv:2005.12248}}].

\bibitem{Dymarsky:2020qom}
A.~Dymarsky and A.~Shapere, {\it {Quantum stabilizer codes, lattices, and
  CFTs}},  {\em JHEP} {\bf 21} (2020) 160,
  [\href{http://arxiv.org/abs/2009.01244}{{\tt arXiv:2009.01244}}].

\bibitem{Dymarsky:2020bps}
A.~Dymarsky and A.~Shapere, {\it {Solutions of modular bootstrap constraints
  from quantum codes}},  {\em Phys. Rev. Lett.} {\bf 126} (2021), no.~16
  161602, [\href{http://arxiv.org/abs/2009.01236}{{\tt arXiv:2009.01236}}].

\bibitem{Narain:1985jj}
K.~S. Narain, {\it {New Heterotic String Theories in Uncompactified Dimensions
  \ensuremath{<} 10}},  {\em Phys. Lett. B} {\bf 169} (1986) 41--46.

\bibitem{Narain:1986am}
K.~S. Narain, M.~H. Sarmadi, and E.~Witten, {\it {A Note on Toroidal
  Compactification of Heterotic String Theory}},  {\em Nucl. Phys. B} {\bf 279}
  (1987) 369--379.

\bibitem{Furuta:2022ykh}
Y.~Furuta, {\it {Relation between spectra of Narain CFTs and properties of
  associated boolean functions}},  {\em JHEP} {\bf 09} (2022) 146,
  [\href{http://arxiv.org/abs/2203.11643}{{\tt arXiv:2203.11643}}].

\bibitem{Angelinos:2022umf}
N.~Angelinos, D.~Chakraborty, and A.~Dymarsky, {\it {Optimal Narain CFTs from
  Codes}},  \href{http://arxiv.org/abs/2206.14825}{{\tt arXiv:2206.14825}}.

\bibitem{Dymarsky:2020pzc}
A.~Dymarsky and A.~Shapere, {\it {Comments on the holographic description of
  Narain theories}},  {\em JHEP} {\bf 10} (2021) 197,
  [\href{http://arxiv.org/abs/2012.15830}{{\tt arXiv:2012.15830}}].

\bibitem{Henriksson:2022dnu}
J.~Henriksson, A.~Kakkar, and B.~McPeak, {\it {Narain CFTs and quantum codes at
  higher genus}},  {\em JHEP} {\bf 04} (2023) 011,
  [\href{http://arxiv.org/abs/2205.00025}{{\tt arXiv:2205.00025}}].

\bibitem{Dymarsky:2022kwb}
A.~Dymarsky and R.~R. Kalloor, {\it {Fake Z}},  {\em JHEP} {\bf 06} (2023) 043,
  [\href{http://arxiv.org/abs/2211.15699}{{\tt arXiv:2211.15699}}].

\bibitem{Dymarsky:2021xfc}
A.~Dymarsky and A.~Sharon, {\it {Non-rational Narain CFTs from codes over
  F$_{4}$}},  {\em JHEP} {\bf 11} (2021) 016,
  [\href{http://arxiv.org/abs/2107.02816}{{\tt arXiv:2107.02816}}].

\bibitem{Buican:2021uyp}
M.~Buican, A.~Dymarsky, and R.~Radhakrishnan, {\it {Quantum codes, CFTs, and
  defects}},  {\em JHEP} {\bf 03} (2023) 017,
  [\href{http://arxiv.org/abs/2112.12162}{{\tt arXiv:2112.12162}}].

\bibitem{Henriksson:2021qkt}
J.~Henriksson, A.~Kakkar, and B.~McPeak, {\it {Classical codes and chiral CFTs
  at higher genus}},  {\em JHEP} {\bf 05} (2022) 159,
  [\href{http://arxiv.org/abs/2112.05168}{{\tt arXiv:2112.05168}}].

\bibitem{Henriksson:2022dml}
J.~Henriksson and B.~McPeak, {\it {Averaging over codes and an $SU(2)$ modular
  bootstrap}},  \href{http://arxiv.org/abs/2208.14457}{{\tt arXiv:2208.14457}}.

\bibitem{Yahagi:2022idq}
S.~Yahagi, {\it {Narain CFTs and error-correcting codes on finite fields}},
  {\em JHEP} {\bf 08} (2022) 058, [\href{http://arxiv.org/abs/2203.10848}{{\tt
  arXiv:2203.10848}}].

\bibitem{Furuta:2023xwl}
Y.~Furuta, {\it {On the Rationality and the Code Structure of a Narain CFT, and
  the Simple Current Orbifold}},  \href{http://arxiv.org/abs/2307.04190}{{\tt
  arXiv:2307.04190}}.

\bibitem{Kawabata:2022jxt}
K.~Kawabata, T.~Nishioka, and T.~Okuda, {\it {Narain CFTs from qudit stabilizer
  codes}},  {\em SciPost Phys. Core} {\bf 6} (2023) 035,
  [\href{http://arxiv.org/abs/2212.07089}{{\tt arXiv:2212.07089}}].

\bibitem{Alam:2023qac}
Y.~F. Alam, K.~Kawabata, T.~Nishioka, T.~Okuda, and S.~Yahagi, {\it {Narain
  CFTs from nonbinary stabilizer codes}},
  \href{http://arxiv.org/abs/2307.10581}{{\tt arXiv:2307.10581}}.

\bibitem{Tachikawalec}
Y.~Tachikawa, ``{Topological phases and relativistic QFTs}.''
  \url{https://member.ipmu.jp/yuji.tachikawa/lectures/2018-cern-rikkyo/}.
\newblock {notes of the lectures given in the CERN winter school, February
  2018.}

\bibitem{Karch:2019lnn}
A.~Karch, D.~Tong, and C.~Turner, {\it {A Web of 2d Dualities: ${\bf Z}_2$
  Gauge Fields and Arf Invariants}},  {\em SciPost Phys.} {\bf 7} (2019) 007,
  [\href{http://arxiv.org/abs/1902.05550}{{\tt arXiv:1902.05550}}].

\bibitem{Hsieh:2020uwb}
C.-T. Hsieh, Y.~Nakayama, and Y.~Tachikawa, {\it {Fermionic minimal models}},
  {\em Phys. Rev. Lett.} {\bf 126} (2021), no.~19 195701,
  [\href{http://arxiv.org/abs/2002.12283}{{\tt arXiv:2002.12283}}].

\bibitem{Kulp:2020iet}
J.~Kulp, {\it {Two More Fermionic Minimal Models}},  {\em JHEP} {\bf 03} (2021)
  124, [\href{http://arxiv.org/abs/2003.04278}{{\tt arXiv:2003.04278}}].

\bibitem{Kawabata:2023usr}
K.~Kawabata, T.~Nishioka, and T.~Okuda, {\it {Supersymmetric conformal field
  theories from quantum stabilizer codes}},
  \href{http://arxiv.org/abs/2307.14602}{{\tt arXiv:2307.14602}}.

\bibitem{Gaberdiel:2013psa}
M.~R. Gaberdiel, A.~Taormina, R.~Volpato, and K.~Wendland, {\it {A K3 sigma
  model with $\mathbb{Z}^8_2$ : $\mathbb{M}_{20}$ symmetry}},  {\em JHEP} {\bf
  02} (2014) 022, [\href{http://arxiv.org/abs/1309.4127}{{\tt
  arXiv:1309.4127}}].

\bibitem{Belov:2005ze}
D.~Belov and G.~W. Moore, {\it {Classification of Abelian spin Chern-Simons
  theories}},  \href{http://arxiv.org/abs/hep-th/0505235}{{\tt
  hep-th/0505235}}.

\bibitem{Gaiotto:2020iye}
D.~Gaiotto and J.~Kulp, {\it {Orbifold groupoids}},  {\em JHEP} {\bf 02} (2021)
  132, [\href{http://arxiv.org/abs/2008.05960}{{\tt arXiv:2008.05960}}].

\bibitem{Apruzzi:2021nmk}
F.~Apruzzi, F.~Bonetti, I.~n. Garc\'\i{}a~Etxebarria, S.~S. Hosseini, and
  S.~Schafer-Nameki, {\it {Symmetry TFTs from String Theory}},  {\em Commun.
  Math. Phys.} {\bf 402} (2023), no.~1 895--949,
  [\href{http://arxiv.org/abs/2112.02092}{{\tt arXiv:2112.02092}}].

\bibitem{Freed:2022qnc}
D.~S. Freed, G.~W. Moore, and C.~Teleman, {\it {Topological symmetry in quantum
  field theory}},  \href{http://arxiv.org/abs/2209.07471}{{\tt
  arXiv:2209.07471}}.

\bibitem{calderbank1996good}
A.~R. Calderbank and P.~W. Shor, {\it Good quantum error-correcting codes
  exist},  {\em Phys. Rev. A} {\bf 54} (1996), no.~2 1098,
  [\href{http://arxiv.org/abs/quant-ph/9512032}{{\tt quant-ph/9512032}}].

\bibitem{Steane:1996va}
A.~M. Steane, {\it {Simple quantum error correcting codes}},  {\em Phys. Rev.
  A} {\bf 54} (1996) 4741, [\href{http://arxiv.org/abs/quant-ph/9605021}{{\tt
  quant-ph/9605021}}].

\bibitem{Dixon:1985jw}
L.~J. Dixon, J.~A. Harvey, C.~Vafa, and E.~Witten, {\it {Strings on
  Orbifolds}},  {\em Nucl. Phys. B} {\bf 261} (1985) 678--686.

\bibitem{Dixon:1986jc}
L.~J. Dixon, J.~A. Harvey, C.~Vafa, and E.~Witten, {\it {Strings on Orbifolds.
  2.}},  {\em Nucl. Phys. B} {\bf 274} (1986) 285--314.

\bibitem{Dixon:1986qv}
L.~J. Dixon, D.~Friedan, E.~J. Martinec, and S.~H. Shenker, {\it {The Conformal
  Field Theory of Orbifolds}},  {\em Nucl. Phys. B} {\bf 282} (1987) 13--73.

\bibitem{Coleman:1974bu}
S.~R. Coleman, {\it {The Quantum Sine-Gordon Equation as the Massive Thirring
  Model}},  {\em Phys. Rev. D} {\bf 11} (1975) 2088.

\bibitem{Mandelstam:1975hb}
S.~Mandelstam, {\it {Soliton Operators for the Quantized Sine-Gordon
  Equation}},  {\em Phys. Rev. D} {\bf 11} (1975) 3026.

\bibitem{Ji:2019ugf}
W.~Ji, S.-H. Shao, and X.-G. Wen, {\it {Topological Transition on the Conformal
  Manifold}},  {\em Phys. Rev. Res.} {\bf 2} (2020), no.~3 033317,
  [\href{http://arxiv.org/abs/1909.01425}{{\tt arXiv:1909.01425}}].

\bibitem{Lin:2019kpn}
Y.-H. Lin and S.-H. Shao, {\it {Anomalies and Bounds on Charged Operators}},
  {\em Phys. Rev. D} {\bf 100} (2019), no.~2 025013,
  [\href{http://arxiv.org/abs/1904.04833}{{\tt arXiv:1904.04833}}].

\bibitem{Vafa:1989ih}
C.~Vafa, {\it {Quantum Symmetries of String Vacua}},  {\em Mod. Phys. Lett. A}
  {\bf 4} (1989) 1615.

\bibitem{Bhardwaj:2017xup}
L.~Bhardwaj and Y.~Tachikawa, {\it {On finite symmetries and their gauging in
  two dimensions}},  {\em JHEP} {\bf 03} (2018) 189,
  [\href{http://arxiv.org/abs/1704.02330}{{\tt arXiv:1704.02330}}].

\bibitem{Kitaev:2000nmw}
A.~Kitaev, {\it {Unpaired Majorana fermions in quantum wires}},  {\em Phys.
  Usp.} {\bf 44} (2001), no.~10S 131--136,
  [\href{http://arxiv.org/abs/cond-mat/0010440}{{\tt cond-mat/0010440}}].

\bibitem{Polchinski:1998rq}
J.~Polchinski, {\em {String theory. Vol. 1: An introduction to the bosonic
  string}}.
\newblock Cambridge University Press, 2007.

\bibitem{Lerche:1988np}
W.~Lerche, A.~N. Schellekens, and N.~P. Warner, {\it {Lattices and Strings}},
  {\em Phys. Rept.} {\bf 177} (1989) 1.

\bibitem{gannon1991lattices}
T.~Gannon, {\it Lattices and theta functions}, .

\bibitem{Gottesman:1997zz}
D.~Gottesman, {\it {Stabilizer codes and quantum error correction}},  {\em
  arXiv e-prints} (5, 1997) [\href{http://arxiv.org/abs/quant-ph/9705052}{{\tt
  quant-ph/9705052}}].

\bibitem{Gottesman:1996rt}
D.~Gottesman, {\it {A Class of quantum error correcting codes saturating the
  quantum Hamming bound}},  {\em Phys. Rev. A} {\bf 54} (1996) 1862,
  [\href{http://arxiv.org/abs/quant-ph/9604038}{{\tt quant-ph/9604038}}].

\bibitem{Gottesman:1998se}
D.~Gottesman, {\it {Fault tolerant quantum computation with higher dimensional
  systems}},  {\em Chaos Solitons Fractals} {\bf 10} (1999) 1749--1758,
  [\href{http://arxiv.org/abs/quant-ph/9802007}{{\tt quant-ph/9802007}}].

\bibitem{knill1996non}
E.~Knill, {\it Non-binary unitary error bases and quantum codes},  {\em arXiv
  preprint quant-ph/9608048} (1996).

\bibitem{knill1996group}
E.~Knill, {\it Group representations, error bases and quantum codes},  {\em
  arXiv preprint quant-ph/9608049} (1996).

\bibitem{rains1999nonbinary}
E.~M. Rains, {\it Nonbinary quantum codes},  {\em IEEE Transactions on
  Information Theory} {\bf 45} (1999), no.~6 1827--1832.

\bibitem{steane1996multiple}
A.~Steane, {\it Multiple-particle interference and quantum error correction},
  {\em Proceedings of the Royal Society of London. Series A: Mathematical,
  Physical and Engineering Sciences} {\bf 452} (1996), no.~1954 2551--2577.

\bibitem{Calderbank:1996aj}
A.~R. Calderbank, E.~M. Rains, P.~W. Shor, and N.~J.~A. Sloane, {\it {Quantum
  error correction via codes over GF(4)}},
  \href{http://arxiv.org/abs/quant-ph/9608006}{{\tt quant-ph/9608006}}.

\bibitem{Ginsparg:1988ui}
P.~H. Ginsparg, {\it {APPLIED CONFORMAL FIELD THEORY}},  in {\em {Les Houches
  Summer School in Theoretical Physics: Fields, Strings, Critical Phenomena}},
  9, 1988.
\newblock \href{http://arxiv.org/abs/hep-th/9108028}{{\tt hep-th/9108028}}.

\bibitem{nebe2006self}
G.~Nebe, E.~M. Rains, and N.~J.~A. Sloane, {\em Self-dual codes and invariant
  theory}, vol.~17.
\newblock Springer, 2006.

\bibitem{milnor1973symmetric}
J.~W. Milnor and D.~Husemoller, {\em Symmetric bilinear forms}, vol.~5.
\newblock Springer, 1973.

\bibitem{serre2012course}
J.-P. Serre, {\em A course in arithmetic}, vol.~7.
\newblock Springer Science \& Business Media, 2012.

\bibitem{conway1990new}
J.~Conway and N.~Sloane, {\it A new upper bound for the minimum of an integral
  lattice of determinant 1},  {\em Bulletin (New Series) of the American
  Mathematical Society} {\bf 23} (1990), no.~2 383--387.

\bibitem{Dijkgraaf:1989pz}
R.~Dijkgraaf and E.~Witten, {\it {Topological Gauge Theories and Group
  Cohomology}},  {\em Commun. Math. Phys.} {\bf 129} (1990) 393.

\bibitem{Lee:2018eqa}
Y.~Lee and Y.~Tachikawa, {\it {A study of time reversal symmetry of abelian
  anyons}},  {\em JHEP} {\bf 07} (2018) 090,
  [\href{http://arxiv.org/abs/1805.02738}{{\tt arXiv:1805.02738}}].

\bibitem{Witten:1988hf}
E.~Witten, {\it {Quantum Field Theory and the Jones Polynomial}},  {\em Commun.
  Math. Phys.} {\bf 121} (1989) 351--399.

\bibitem{Gukov:2004id}
S.~Gukov, E.~Martinec, G.~W. Moore, and A.~Strominger, {\it {Chern-Simons gauge
  theory and the AdS(3) / CFT(2) correspondence}},  in {\em {From Fields to
  Strings: Circumnavigating Theoretical Physics: A Conference in Tribute to Ian
  Kogan}}, pp.~1606--1647, 3, 2004.
\newblock \href{http://arxiv.org/abs/hep-th/0403225}{{\tt hep-th/0403225}}.

\bibitem{Maldacena:2001ss}
J.~M. Maldacena, G.~W. Moore, and N.~Seiberg, {\it {D-brane charges in
  five-brane backgrounds}},  {\em JHEP} {\bf 10} (2001) 005,
  [\href{http://arxiv.org/abs/hep-th/0108152}{{\tt hep-th/0108152}}].

\bibitem{Banks:2010zn}
T.~Banks and N.~Seiberg, {\it {Symmetries and Strings in Field Theory and
  Gravity}},  {\em Phys. Rev. D} {\bf 83} (2011) 084019,
  [\href{http://arxiv.org/abs/1011.5120}{{\tt arXiv:1011.5120}}].

\bibitem{Kitaev:1997wr}
A.~Y. Kitaev, {\it {Fault tolerant quantum computation by anyons}},  {\em
  Annals Phys.} {\bf 303} (2003) 2--30,
  [\href{http://arxiv.org/abs/quant-ph/9707021}{{\tt quant-ph/9707021}}].

\bibitem{1995AdPhy..44..405W}
X.-G. {Wen}, {\it {Topological orders and edge excitations in fractional
  quantum Hall states}},  {\em Advances in Physics} {\bf 44} (Sept., 1995)
  405--473, [\href{http://arxiv.org/abs/cond-mat/9506066}{{\tt
  cond-mat/9506066}}].

\bibitem{Kapustin:2010hk}
A.~Kapustin and N.~Saulina, {\it {Topological boundary conditions in abelian
  Chern-Simons theory}},  {\em Nucl. Phys. B} {\bf 845} (2011) 393--435,
  [\href{http://arxiv.org/abs/1008.0654}{{\tt arXiv:1008.0654}}].

\bibitem{Kapustin:2010if}
A.~Kapustin and N.~Saulina, {\it {Surface operators in 3d Topological Field
  Theory and 2d Rational Conformal Field Theory}},
  \href{http://arxiv.org/abs/1012.0911}{{\tt arXiv:1012.0911}}.

\bibitem{Kapustin:2014gua}
A.~Kapustin and N.~Seiberg, {\it {Coupling a QFT to a TQFT and Duality}},  {\em
  JHEP} {\bf 04} (2014) 001, [\href{http://arxiv.org/abs/1401.0740}{{\tt
  arXiv:1401.0740}}].

\bibitem{Gaiotto:2014kfa}
D.~Gaiotto, A.~Kapustin, N.~Seiberg, and B.~Willett, {\it {Generalized Global
  Symmetries}},  {\em JHEP} {\bf 02} (2015) 172,
  [\href{http://arxiv.org/abs/1412.5148}{{\tt arXiv:1412.5148}}].

\bibitem{Seiberg:2016rsg}
N.~Seiberg and E.~Witten, {\it {Gapped Boundary Phases of Topological
  Insulators via Weak Coupling}},  {\em PTEP} {\bf 2016} (2016), no.~12 12C101,
  [\href{http://arxiv.org/abs/1602.04251}{{\tt arXiv:1602.04251}}].

\bibitem{Kaidi:2022cpf}
J.~Kaidi, K.~Ohmori, and Y.~Zheng, {\it {Symmetry TFTs for Non-Invertible
  Defects}},  \href{http://arxiv.org/abs/2209.11062}{{\tt arXiv:2209.11062}}.

\bibitem{NIST:DLMF}
``{\it NIST Digital Library of Mathematical Functions}.''
  \url{https://dlmf.nist.gov/}, Release 1.1.10 of 2023-06-15.
\newblock F.~W.~J. Olver, A.~B. {Olde Daalhuis}, D.~W. Lozier, B.~I. Schneider,
  R.~F. Boisvert, C.~W. Clark, B.~R. Miller, B.~V. Saunders, H.~S. Cohl, and
  M.~A. McClain, eds.

\bibitem{Hartman:2019pcd}
T.~Hartman, D.~Maz\'a\v{c}, and L.~Rastelli, {\it {Sphere Packing and Quantum
  Gravity}},  {\em JHEP} {\bf 12} (2019) 048,
  [\href{http://arxiv.org/abs/1905.01319}{{\tt arXiv:1905.01319}}].

\bibitem{Afkhami-Jeddi:2020hde}
N.~Afkhami-Jeddi, H.~Cohn, T.~Hartman, D.~de~Laat, and A.~Tajdini, {\it
  {High-dimensional sphere packing and the modular bootstrap}},  {\em JHEP}
  {\bf 12} (2020) 066, [\href{http://arxiv.org/abs/2006.02560}{{\tt
  arXiv:2006.02560}}].

\bibitem{Afkhami-Jeddi:2020ezh}
N.~Afkhami-Jeddi, H.~Cohn, T.~Hartman, and A.~Tajdini, {\it {Free partition
  functions and an averaged holographic duality}},  {\em JHEP} {\bf 01} (2021)
  130, [\href{http://arxiv.org/abs/2006.04839}{{\tt arXiv:2006.04839}}].

\bibitem{Maloney:2020nni}
A.~Maloney and E.~Witten, {\it {Averaging over Narain moduli space}},  {\em
  JHEP} {\bf 10} (2020) 187, [\href{http://arxiv.org/abs/2006.04855}{{\tt
  arXiv:2006.04855}}].

\bibitem{Ashwinkumar:2021kav}
M.~Ashwinkumar, M.~Dodelson, A.~Kidambi, J.~M. Leedom, and M.~Yamazaki, {\it
  {Chern-Simons invariants from ensemble averages}},  {\em JHEP} {\bf 08}
  (2021) 044, [\href{http://arxiv.org/abs/2104.14710}{{\tt arXiv:2104.14710}}].

\bibitem{Bae:2020xzl}
J.-B. Bae, Z.~Duan, K.~Lee, S.~Lee, and M.~Sarkis, {\it {Fermionic rational
  conformal field theories and modular linear differential equations}},  {\em
  PTEP} {\bf 2021} (2021), no.~8 08B104,
  [\href{http://arxiv.org/abs/2010.12392}{{\tt arXiv:2010.12392}}].

\bibitem{Bae:2021jkc}
J.-B. Bae, Z.~Duan, and S.~Lee, {\it {Can the energy bound $E \geq 0$ imply
  supersymmetry?}},  \href{http://arxiv.org/abs/2112.14130}{{\tt
  arXiv:2112.14130}}.

\bibitem{Bae:2021lvk}
J.-B. Bae and S.~Lee, {\it {Emergent supersymmetry on the edges}},  {\em
  SciPost Phys.} {\bf 11} (2021), no.~5 091,
  [\href{http://arxiv.org/abs/2105.02148}{{\tt arXiv:2105.02148}}].

\end{thebibliography}\endgroup

\end{document}